\newtheorem{lem}{Lemma}
\newtheorem{thm}{Theorem}
\newtheorem{prop}{Proposition}
\newtheorem{cor}{Corollary}
\newtheorem{rmk}{Remark}
\newtheorem{model}{Model}
\newcommand{\R}{\mathbb{R}}
\newcommand{\Ex}{\mathbb{E}}
\newcommand{\Var}{\text{Var}}
\DeclarePairedDelimiterX{\norm}[1]{\lVert}{\rVert}{#1}
\newcommand{\Lb}{\mathbf{L}}
\DeclareMathOperator*{\argmin}{arg\,min}
\newcommand{\varg}{\mathring{g}}
\newcommand{\varrootg}{\mathring{p}}
\newcommand{\vareta}{\mathring{\eta}}
\newcommand{\snr}{\text{SNR}}
\newcommand{\Cb}{\mathbf{C}}
\begin{document}
%
\title{Unbiasing Procedures for Scale-invariant Multi-reference Alignment}
%
%
%

\author{Matthew~Hirn, 
        Anna~Little 
        
\thanks{M. Hirn is with the Department of Computational Mathematics, Science and Engineering, the Department of Mathematics, and the Center for Quantum Computing, Science and Engineering, Michigan State University, East Lansing,
MI, 48824 USA, e-mail: mhirn@msu.edu.}
\thanks{A. Little is with the Department of Mathematics and the Utah Center For Data Science, University of Utah, Salt Lake City, UT, 84112 USA, e-mail: little@math.utah.edu.}
}

\maketitle

\begin{abstract}
This article discusses a generalization of the 1-dimensional multi-reference alignment problem. The goal is to recover a hidden signal from many noisy observations, where each noisy observation includes a random translation and random dilation of the hidden signal, as well as high additive noise. We propose a method that recovers the power spectrum of the hidden signal by applying a data-driven, nonlinear unbiasing procedure, and thus the hidden signal is obtained up to an unknown phase. An unbiased estimator of the power spectrum is defined, whose error depends on the sample size and noise levels, and we precisely quantify the convergence rate of the proposed estimator. The unbiasing procedure relies on knowledge of the dilation distribution, and we implement an optimization procedure to learn the dilation variance when this parameter is unknown. Our theoretical work is supported by extensive numerical experiments on a wide range of signals.  
\end{abstract}

\begin{IEEEkeywords}
Multi-reference alignment, method of invariants, dilations, signal processing.
\end{IEEEkeywords}

%
\IEEEpeerreviewmaketitle

\section{Introduction}

In classic multi-reference alignment (MRA), one attempts to recover a hidden signal $f : \R \rightarrow \R$ from many noisy observations, where each noisy observation has been randomly
translated and corrupted by additive noise, as described in the following model.
\begin{model}[Classic MRA]
	\label{model:classicMRA}
	The \textit{classic MRA data model} consists of $M$ independent observations of a compactly supported, real-valued signal $f \in \Lb^2(\R)$:
	\begin{equation} 
		y_j(x) = f(x-t_j) + \varepsilon_j(x) \, , \quad 1 \leq j \leq M \, ,
	\end{equation}	
	where:
	\begin{itemize}
		\item[(i)] $\text{supp}(y_j)\subseteq [-\frac{1}{2},\frac{1}{2}]$ for $1 \leq j \leq M$.
		\item[(ii)] $\{t_j\}_{j=1}^M$ are independent samples of a random variable $t\in \R$.
		\item[(iii)] $\{\varepsilon_j(x)\}_{j=1}^M$ are independent white noise processes on $[-\frac{1}{2},\frac{1}{2}]$ with variance $\sigma^2$.
	\end{itemize}
\end{model}
This toy model is a first step towards more realistic models arising in cryo-electron microscopy, and is relevant in many other applications including structural biology \cite{theobald2012optimal, diamond1992multiple, scheres2005maximum, sadler1992shift, park2011stochastic, park2014assembly}; radar \cite{zwart2003fast, gil2005using}; single cell genomic sequencing \cite{leggett2015nanook}; image registration \cite{foroosh2002extension, brown1992survey, robinson2007optimal}; and signal processing \cite{zwart2003fast}. Some methods solve Model \ref{model:classicMRA} via \textit{synchronization} \cite{singer2011angular, boumal2016nonconvex, perry2018message, chen2018projected, bandeira2017tightness, zhong2018near, bandeira2020non, bandeira2014multireference, chen2014near, bandeira2016low}, i.e. the translation factors $\{t_j\}_{j=1}^M$ are explicitly recovered and the signals aligned. Synchronization approaches will fail in the high noise regime when the signal-to-noise ratio (SNR) is low, but the hidden signal can still be recovered by methods which avoid alignment; these include the \textit{method of moments} \cite{hansen1982large, kam1980reconstruction, sharon2019method}, which contain the \textit{method of invariants} \cite{bendory2017bispectrum, OptConvRates_MRA, collis1998higher} as a special case, and \textit{expectation-maximization type algorithms} \cite{dempster1977maximum, abbe2018multireference}. The method of invariants leverages translation invariant Fourier features such as the power spectrum and bispectrum,  as they are especially useful for solving Model \ref{model:classicMRA}. Recall the Fourier transform of a signal $f\in\Lb^1 (\R)$ is defined as
\begin{equation*}
	\widehat{f} (\omega) = \int  f(x) e^{- i x \omega} \, dx \, ,
\end{equation*}
and its power spectrum is then defined by $(Pf)(\omega) = |\widehat{f}(\omega)|^2$.

In this article we analyze the following generalization of classic MRA, where signals are also corrupted by a random scale change (i.e. dilation) in addition to random translation and additive noise. See Figure \ref{fig:Model}.

\begin{figure}[tb]
	\centering
	\includegraphics[width=.49\textwidth]{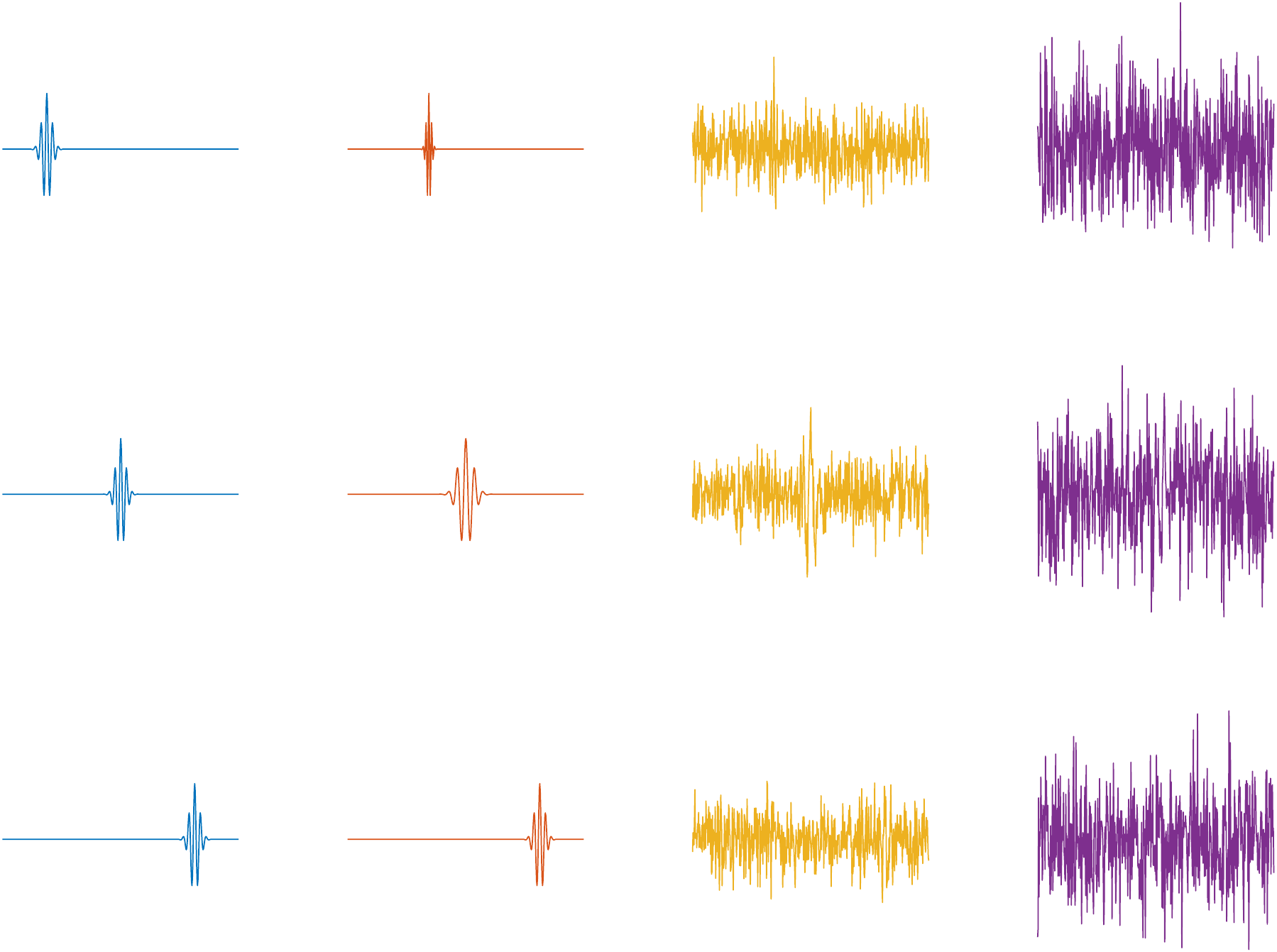}
	\caption{Model illustration: a hidden signal is randomly translated (first column), randomly dilated (second column), and then corrupted by additive noise (second and third columns). Column 2 shows corruption with $\sigma^2=\frac{1}{2}$ and Column 3 with $\sigma^2=2$; the purple curves illustrate the noise level considered in the simulations reported in Section \ref{sec:simu_results}.}
	\label{fig:Model}
\end{figure}

\begin{model}[Noisy dilation MRA data model] 
	\label{model:genMRA}
	The \textit{noisy dilation MRA data model} consists of $M$ independent observations of a compactly supported, real-valued signal $f \in \Lb^2(\R)$:
	\begin{equation} \label{eqn: translation dilation noise model}
		y_j (x) = f\left( (1 - \tau_j)^{-1}(x - t_j)\right) + \varepsilon_j(x) \, , \quad 1 \leq j \leq M \, .
	\end{equation}	
	In addition, we assume:
	\begin{itemize}
		\item[(i)] $\text{supp}(y_j)\subseteq [-\frac{1}{2},\frac{1}{2}]$ for $1 \leq j \leq M$.
		\item[(ii)] $\{t_j\}_{j=1}^M$ are independent samples of a random variable $t\in \R$.
		\item[(iii)] $\{\tau_j\}_{j=1}^M$ are independent samples from a uniformly distributed random variable $\tau$ satisfying:
		\begin{equation*}
			\tau \in \R\quad,\quad \Ex(\tau) = 0 \quad,\quad \Var(\tau) = \eta^2\leq 1/12.
		\end{equation*} 
		\item[(iv)] $\{\varepsilon_j(x)\}_{j=1}^M$ are independent white noise processes on $[-\frac{1}{2},\frac{1}{2}]$ with variance $\sigma^2$.
	\end{itemize}
\end{model}	
Model \ref{model:genMRA} is a first step towards studying more general diffeomorphisms $f(\tau(x))$, since it considers the case when $\tau(x)$ is an affine function. This is relevant to molecular imaging applications since the flexible regions of macro-molecular structures create diffeomorphisms of the underlying shape \cite{palamini2016identifying}. Dilations are also highly relevant in imaging applications, for example \cite{chandran1992position, capodiferro1987correlation, tsatsanis1990translation, hotta2001scale, martinec2007robust, robinson2007optimal}. Note we consider $\Lb^\infty (\R)$ normalized dilations in Model \ref{model:genMRA} since this is natural for images; however the method is easily modified to accommodate other normalizations such as $\Lb^1 (\R)$. The latter is useful in the statistical context, where one may observe samples from a family of distributions which are shifts and rescalings of an underlying distribution.

Solving Model \ref{model:genMRA} is highly challenging. Dilations cause instabilities in the high frequencies of a signal, where even a small dilation can lead to a large perturbation of the frequency values. Ideally, one would like to compute a representation which (1) is both translation and dilation invariant, (2) allows for the additive noise to be removed by averaging, and (3) is invertible with a numerically stable algorithm. However there is a tension between achieving (1) and (3), since the more invariants which are built into the representation, the harder it will be to invert the representation and obtain the underlying signal. In this article we propose the following compromise: we do not define a dilation invariant representation, but propose a method for dilation unbiasing which can be achieved with a numerically stable algorithm; we learn the power spectrum of the hidden signal instead of the hidden signal itself, thus reducing to a phase retrieval problem.  

Model \ref{model:genMRA} was considered in \cite{hirn2020wavelet} for $\Lb^1 (\R)$ normalized dilations. The authors define wavelet-based, translation invariant features and unbias for dilations by utilizing the first few moments of the dilation distribution. The method has two main short-comings: although it can reduce the bias due to dilations, it cannot remove it entirely, i.e. the method of \cite{hirn2020wavelet} does not define an \textit{unbiased estimator} of the true features. In addition, inverting the wavelet-based features to recover the power spectrum of the hidden signal is numerically unstable, as it is driven by the condition number of a low rank matrix. This article proposes a method which overcomes both of these challenges: by working directly on the power spectrum, we avoid a numerically unstable inversion, and we develop a new unbiasing procedure which yields an \textit{unbiased estimator} of the power spectrum of the hidden signal; we refer to this unbiasing procedure as \textit{inversion unbiasing}. To achieve this we assume explicit knowledge of the dilation distribution instead of knowledge of the first few moments. 
To illustrate inversion unbiasing, it is helpful to define the following model in which signals are randomly translated and dilated, but not corrupted by additive noise.
\begin{model}[Dilation MRA data model] 
	\label{model:dilMRA}
	The \textit{dilation MRA data model} consists of $M$ independent observations of a compactly supported, real-valued signal $f \in \Lb^2(\R)$:
	\begin{equation} \label{eqn: translation dilation model}
		y_j (x) = f\left( (1 - \tau_j)^{-1}(x - t_j)\right) \, , \quad 1 \leq j \leq M \, .
	\end{equation}	
	In addition, we assume (i)-(iii) of Model \ref{model:genMRA}.
\end{model}	
Since Model \ref{model:dilMRA} lacks additive noise, it can in fact be trivially solved by first estimating $\|f\|_2$, and then dilating any observed signal to have the right norm (for further details see \cite{hirn2020wavelet}). We use Model \ref{model:dilMRA} to build a theory to solve Model \ref{model:genMRA}, but note it is not of independent interest.

\begin{rmk}
	The box size in Models \ref{model:classicMRA}--\ref{model:dilMRA} is arbitrary; more generally, the signals may be supported on any finite interval $[-\frac{N}{2}, \frac{N}{2}]$. All results still hold with $\sigma\sqrt{N}$ replacing $\sigma$. 
\end{rmk}	

The remainder of the article is organized as follows. Section \ref{sec:framework} motivates inversion unbiasing by first considering the infinite sample size case. Section \ref{sec:main} presents our main results for solving Models \ref{model:genMRA} and \ref{model:dilMRA} in the finite sample regime. Section \ref{sec:opt} discusses how inversion unbiasing is implemented via an optimization algorithm. Section \ref{sec:simu_results} reports simulation results testing the performance of inversion unbiasing. Section \ref{sec:conc} concludes the article and summarizes future research directions.

\subsection{Notation}

Let $f_j(x) = f\left((1-\tau_j)^{-1}(x-t_j)\right)$ denote the $j^{\text{th}}$ signal which is dilated by $1-\tau_j$. We note that $\widehat{f}_j(\omega) = (1-\tau_j)\widehat{f}((1-\tau_j)\omega)$, so that 
\begin{align*}
	(Pf_j)(\omega) &= (1-\tau_j)^2 (Pf)((1-\tau_j)\omega)\, .
\end{align*}
We let $g = Pf$, and for Models \ref{model:genMRA} and \ref{model:dilMRA} we define
\begin{align}
	\label{equ:g_eta}
	g_\eta(\omega) &:= 
	\Ex_\tau \left[(Pf_j)(\omega)\right] \, .
\end{align}
Note for Model \ref{model:genMRA}, it is easy to show that
\begin{align*}
	g_\eta(\omega) &= \Ex_{\tau,\epsilon}[(Py_j)(\omega) - \sigma^2] \, ;
\end{align*}
see for example Proposition 3.1 in \cite{hirn2020wavelet}. We also define the following constants which depend on $\eta$:
\begin{align}
	\label{equ:Constants}
	C_0 &= \frac{(1-\sqrt{3}\eta)}{(1+\sqrt{3}\eta)}\, , \ C_1 = 2\sqrt{3}\eta \, , \ C_2 = \frac{1}{1+\sqrt{3}\eta} \, ,
\end{align}
and we let $(L_C g)(\omega) = C^3 g(C\omega)$ be a dilation operator. We use $a^*$ to denote the complex conjugate of $a$. Finally, $a=O(b)$ denotes $a\leq Cb$ for an absolute constant $C$.

\section{Infinite sample estimate}
\label{sec:framework}

To motivate our finite sample procedure, we first consider how to define an unbiased estimator in the infinite sample limit. We can recover $Pf$ from $g_\eta$, as stated in the following Proposition. 

\begin{prop}
	\label{prop:InfiniteSampleRecoveryPS}
	Assume $Pf \in \mathbf{C}^0 (\R)$ and $g_\eta$ as defined in \eqref{equ:g_eta}. Then for $\omega \ne 0$:
	\begin{align*}
		(Pf)(\omega) &= (I-L_{C_0})^{-1}C_1L_{C_2}(3g_\eta(\omega)+\omega g_\eta'(\omega)) \, ,
	\end{align*}
	where $C_0, C_1, C_2$ are as defined in (\ref{equ:Constants}).
\end{prop}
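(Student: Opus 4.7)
The plan is to work out $g_\eta$ explicitly using the fact that $\tau$ is uniform on $[-\sqrt{3}\eta,\sqrt{3}\eta]$ (the only distribution consistent with $\mathbb{E}(\tau)=0$ and $\mathrm{Var}(\tau)=\eta^2$ under the uniformity assumption), and then differentiate under the integral sign. Writing $g=Pf$ and $a=\sqrt{3}\eta$, the definition \eqref{equ:g_eta} together with the relation $(Pf_j)(\omega)=(1-\tau_j)^2 g((1-\tau_j)\omega)$ gives
\begin{equation*}
g_\eta(\omega)=\frac{1}{2a}\int_{-a}^{a}(1-\tau)^2 g((1-\tau)\omega)\,d\tau.
\end{equation*}
First I would perform the change of variables $u=(1-\tau)\omega$ (treating the $\omega>0$ case; the case $\omega<0$ follows by the evenness of $Pf$), yielding
\begin{equation*}
2a\,\omega^3 g_\eta(\omega)=\int_{(1-a)\omega}^{(1+a)\omega}u^2 g(u)\,du.
\end{equation*}

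Next I would differentiate both sides in $\omega$ using the Leibniz rule. On the left one obtains $6a\omega^2 g_\eta(\omega)+2a\omega^3 g_\eta'(\omega)=2a\omega^2(3g_\eta(\omega)+\omega g_\eta'(\omega))$; on the right the two boundary terms yield $(1+a)^3\omega^2 g((1+a)\omega)-(1-a)^3\omega^2 g((1-a)\omega)$. Dividing by $\omega^2$ and recalling $C_1=2a$ gives the clean intermediate identity
\begin{equation*}
C_1\bigl(3g_\eta(\omega)+\omega g_\eta'(\omega)\bigr)=(1+a)^3 g((1+a)\omega)-(1-a)^3 g((1-a)\omega).
\end{equation*}

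Now I would substitute $\omega\mapsto C_2\omega=\omega/(1+a)$ and multiply through by $C_2^3=(1+a)^{-3}$ so as to produce the operator $L_{C_2}$ on the left and the combination $g(\omega)-C_0^3 g(C_0\omega)$ on the right, where $C_0=(1-a)/(1+a)$. Recognising the right side as $((I-L_{C_0})g)(\omega)$, this yields the key identity $C_1 L_{C_2}(3g_\eta+\omega g_\eta')=(I-L_{C_0})Pf$.

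The main obstacle is justifying the operator inversion $(I-L_{C_0})^{-1}$. The condition $\eta^2\le 1/12$ gives $a\le 1/2$, hence $C_0\in[0,1)$, so formally $(I-L_{C_0})^{-1}=\sum_{k\ge 0} L_{C_0}^k$. Since $L_{C_0}^k h(\omega)=C_0^{3k}h(C_0^k\omega)$ and $Pf\in\mathbf{C}^0(\R)$ is bounded near $0$, the Neumann series converges absolutely and uniformly on compact subsets of $\R\setminus\{0\}$, so the inversion is well-defined and applying it to both sides of the key identity produces the stated formula. (If $\eta=0$ the statement is vacuous since $C_1=0$ and $g_\eta=Pf$ trivially.)
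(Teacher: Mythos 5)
Your proposal is correct and follows essentially the same route as the paper's proof: the explicit uniform density, the change of variables $u=(1-\tau)\omega$, differentiation of the resulting integral (your Leibniz rule is the paper's antiderivative-plus-FTC step), rescaling by $C_2$ to produce $(I-L_{C_0})Pf$, and inversion via the Neumann series using $C_0<1$. Your extra remarks on the sign of $\omega$ and on convergence of the series are harmless refinements of the same argument.
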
 
\begin{proof}[Proof of Proposition \ref{prop:InfiniteSampleRecoveryPS}]
	Since $\tau$ has a uniform distribution with variance $\eta^2$, the pdf of $\tau$ has form $p_\tau= \frac{1}{2\sqrt{3}\eta} \mathbbm{1}[-\sqrt{3}\eta, \sqrt{3}\eta]$. Thus:
	\begin{align*}
		g_\eta(\omega) &:= \Ex_\tau[(1-\tau)^2g((1-\tau)\omega)] \\
		&= \int (1-\tau)^2g((1-\tau)\omega) p_\tau(\omega)\ d\tau \\
		&= \frac{1}{2\sqrt{3}\eta}\int_{-\sqrt{3}\eta}^{\sqrt{3}\eta} (1-\tau)^2g((1-\tau)\omega)\ d\tau \\
		&=\frac{1}{2\sqrt{3}\eta}\int_{(1-\sqrt{3}\eta)\omega}^{(1+\sqrt{3}\eta)\omega} \frac{\widetilde{\tau}^2}{\omega^2} g(\widetilde{\tau})\frac{1}{\omega}\ d\widetilde{\tau}\, ,
	\end{align*}
	where we have applied the change of variable $\widetilde{\tau}=(1-\tau)\omega$,
	$d\tau = -\frac{1}{\omega}\, d\widetilde{\tau}$. Letting $h(x)=x^2g(x)$ and $H(x)$ an antiderivative of $h$, by the Fundamental Theorem of Calculus we thus obtain:
	\begin{align*}
		2\sqrt{3}\eta \omega^3g_\eta(\omega) &= \int_{(1-\sqrt{3}\eta)\omega}^{(1+\sqrt{3}\eta)\omega} \widetilde{\tau}^2 g(\widetilde{\tau})\ d\widetilde{\tau} \\
		&= \int_{(1-\sqrt{3}\eta)\omega}^{(1+\sqrt{3}\eta)\omega} h(\widetilde{\tau})\ d\widetilde{\tau} \\
		&= H((1+\sqrt{3}\eta)\omega)-H((1-\sqrt{3}\eta)\omega) \, .
	\end{align*}
	Differentiating with respect to $\omega$ yields:
	\begin{align*}
		&2\sqrt{3}\eta \left(3\omega^2g_\eta(\omega)+\omega^3g_\eta'(\omega)\right) \\
		&\ = (1+\sqrt{3}\eta)h((1+\sqrt{3}\eta)\omega) - (1-\sqrt{3}\eta)h((1-\sqrt{3}\eta)\omega) \, ,
	\end{align*}
	and dividing by $\omega^2$ gives:
	\begin{align*}
		&2\sqrt{3}\eta \left(3g_\eta(\omega)+\omega g_\eta'(\omega)\right) \\
		&\ = (1+\sqrt{3}\eta)^3g((1+\sqrt{3}\eta)\omega) - (1-\sqrt{3}\eta)^3g((1-\sqrt{3}\eta)\omega) \, .
	\end{align*}
	Applying the dilation operator $L_{C_2}$ then gives:
	\begin{align*}
		&C_1L_{C_2}(3g_\eta+\omega g_\eta'(\omega)) \\
		&\quad =g\left(\omega\right)- \left(\frac{1-\sqrt{3}\eta}{1+\sqrt{3}\eta}\right)^3 g\left(\left(\frac{1-\sqrt{3}\eta}{1+\sqrt{3}\eta}\right)\omega\right) \\
		&\quad = (I-L_{C_0})g \, .
	\end{align*}
	Since $C_0 <1$, the series $I + L_{C_0} +L_{C_0}^2+L_{C_0}^3+\ldots$ converges, and $I-L_{C_0}$ is invertible. We thus obtain
	\begin{align*}
		g &= (I-L_{C_0})^{-1}C_1L_{C_2}(3g_\eta+\omega g_\eta'(\omega)) \, ,
	\end{align*}
	which proves the proposition.
\end{proof}


\section{Finite sample estimates}
\label{sec:main}

Since we are only given a finite sample, we do not have access to $g_\eta$, but for large $M$, $g_\eta$ is well approximated by:
\begin{align}
	\label{equ:estimate_g_eta}
	\widetilde{g}_\eta(\omega) &:= \frac{1}{M} \sum_{j=1}^M (Pf_j)(\omega) \, .
\end{align}
For dilation MRA, $\widetilde{g}_\eta$ can be computed exactly, and we describe the resulting estimator in Section \ref{sec:DilationMRA}. For noisy dilation MRA, $\widetilde{g}_\eta$ cannot be computed exactly due to the additive noise, but an unbiased estimator can still be defined as described in Section \ref{sec:NoisyDilationMRA}.

\subsection{Results for Dilation MRA}
\label{sec:DilationMRA}

Motivated by Propositions \ref{prop:InfiniteSampleRecoveryPS}, we define the following estimator for dilation MRA: 
\begin{align}
	\label{equ:FiniteSamplePSest_dilMRA}
	(\widetilde{Pf})(\omega) := (I-L_{C_0})^{-1}C_1L_{C_2}(3\widetilde{g}_\eta(\omega)+\omega\widetilde{g}_\eta'(\omega))\, ,
\end{align}
where $\widetilde{g}_\eta$ is as defined in \eqref{equ:estimate_g_eta}.
We note that in practice one does not have a closed form formula for applying $(I-L_{C_0})^{-1}$, but $(\widetilde{Pf})(\omega)$ can be obtained by solving the following convex optimization problem:
\begin{align*}
	\argmin_{\varg}\ \norm{(I-L_{C_0})\varg -C_1L_{C_2}(3\widetilde{g}_\eta(\omega)+\omega \widetilde{g}_\eta'(\omega)) }_2^2 \, .
\end{align*}
We describe this optimization procedure in detail in Section \ref{sec:opt}, but first we analyze the statistical properties of the estimator $(\widetilde{Pf})(\omega)$. The key quantity we bound is the mean squared error (MSE) $ \Ex \left[\norm{Pf-\widetilde{Pf}}_{2}^2\right]$.
The following lemma establishes that when $\widetilde{g}_\eta, \widetilde{g}_\eta'$ are good approximations of $g_\eta, g_\eta'$,  $\widetilde{Pf}$ is a good approximation of $Pf$, so we can reduce the problem to controlling $\widetilde{g}_\eta, \widetilde{g}_\eta'$. 
\begin{lem}
	\label{lem:SimplifyError}
	Assume Model \ref{model:dilMRA}, $Pf\in \mathbf{C}^1 (\R)$, and the estimator $(\widetilde{Pf})(\omega)$ defined in \eqref{equ:FiniteSamplePSest_dilMRA}. Then:
	\begin{align*}
		\norm{Pf-\widetilde{Pf}}_{2}^2 &\lesssim \norm{g_\eta-\widetilde{g}_\eta}_{2}^2 + \norm{\omega( g_\eta'(\omega)-\widetilde{g}_\eta'(\omega))}_{2}^2 \, .
	\end{align*}
\end{lem}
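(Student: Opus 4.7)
The plan is to exploit the explicit inversion formula from Proposition~\ref{prop:InfiniteSampleRecoveryPS}, which expresses $Pf$ in terms of $g_\eta$ via a fixed bounded linear operator. Since the estimator $\widetilde{Pf}$ is obtained by applying the \emph{same} operator to $\widetilde{g}_\eta$, linearity immediately yields
\begin{align*}
Pf - \widetilde{Pf} = (I-L_{C_0})^{-1} C_1 L_{C_2}\bigl[\,3(g_\eta - \widetilde{g}_\eta)(\omega) + \omega(g_\eta'(\omega) - \widetilde{g}_\eta'(\omega))\,\bigr].
\end{align*}
The lemma then reduces to showing that $(I - L_{C_0})^{-1} C_1 L_{C_2}$ is a bounded operator on $\Lb^2(\R)$ whose norm depends only on $\eta$.

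To control the two dilation operators, I would use a direct change of variables: for any $C>0$ and $h\in\Lb^2(\R)$,
\begin{align*}
\norm{L_C h}_2^2 = \int C^6 |h(C\omega)|^2\, d\omega = C^5 \norm{h}_2^2,
\end{align*}
so $\norm{L_C}_{\mathrm{op}} = C^{5/2}$. In particular $\norm{L_{C_2}}_{\mathrm{op}} = C_2^{5/2}$, and because $\eta>0$ forces $C_0 = (1-\sqrt{3}\eta)/(1+\sqrt{3}\eta) < 1$, the operator $L_{C_0}$ has norm $C_0^{5/2}<1$. The Neumann series $\sum_{k\geq 0} L_{C_0}^k$ therefore converges in operator norm to $(I-L_{C_0})^{-1}$, with
\begin{align*}
\norm{(I-L_{C_0})^{-1}}_{\mathrm{op}} \leq \frac{1}{1-C_0^{5/2}}.
\end{align*}

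Combining these operator bounds with the triangle inequality $\norm{3a+b}_2 \leq 3\norm{a}_2 + \norm{b}_2$ gives
\begin{align*}
\norm{Pf - \widetilde{Pf}}_2 \leq \frac{C_1 C_2^{5/2}}{1-C_0^{5/2}}\Bigl(3\norm{g_\eta - \widetilde{g}_\eta}_2 + \norm{\omega(g_\eta'(\omega)-\widetilde{g}_\eta'(\omega))}_2\Bigr),
\end{align*}
and squaring yields the claimed inequality, with the $\eta$-dependent constant absorbed into $\lesssim$. The only real subtlety is confirming boundedness of $(I-L_{C_0})^{-1}$ on $\Lb^2$, which reduces to the elementary computation $\norm{L_C}_{\mathrm{op}} = C^{5/2}$ together with the strict inequality $C_0<1$; the remainder of the argument is a straightforward triangle-inequality propagation of error through a bounded linear operator, so I do not anticipate any serious technical obstacle.
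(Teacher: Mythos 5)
Your proposal is correct and follows essentially the same route as the paper: the same linear decomposition of $Pf-\widetilde{Pf}$, the same change-of-variables computation giving $\norm{L_C}=C^{5/2}$, and the same Neumann-series bound for $(I-L_{C_0})^{-1}$. The one point the paper makes explicitly that you gloss over is that the combined constant is uniform in $\eta$ (the factor $C_1=O(\eta)$ cancels the $O(\eta^{-1})$ blow-up of $\norm{(I-L_{C_0})^{-1}}$ as $\eta\to 0$), which is what allows the clean $\eta^2/M$ rate in Theorem \ref{thm:DilMRA_PS}; your explicit constant $C_1C_2^{5/2}/(1-C_0^{5/2})$ does in fact remain bounded for all admissible $\eta$, but describing it as an ``$\eta$-dependent constant absorbed into $\lesssim$'' leaves that needed uniformity unstated.
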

\begin{proof}
	From Proposition \ref{prop:InfiniteSampleRecoveryPS} and (\ref{equ:FiniteSamplePSest_dilMRA})
	\begin{align*}
		Pf &- \widetilde{Pf} = \\ &(I-L_{C_0})^{-1}C_1L_{C_2}\left[3(g_\eta-\widetilde{g}_\eta)+\omega( g_\eta'(\omega)-\widetilde{g}_\eta'(\omega))\right] \, .
	\end{align*}
	Letting $\norm{\cdot}$ denote the spectral norm, we thus obtain:
	\begin{align*}
		\norm{Pf-\widetilde{Pf}}_2^2 &\leq C_1^2 \norm{(I-L_{C_0})^{-1}}^2  \norm{L_{C_2}}^2 \\ 
		&\qquad \times \norm{3(g_\eta-\widetilde{g}_\eta)+\omega( g_\eta'(\omega)-\widetilde{g}_\eta'(\omega))}_2^2 \\
		&\leq 2C_1^2 \norm{(I-L_{C_0})^{-1}}^2  \norm{L_{C_2}}^2 \\
		&\qquad \times \left( 9\norm{g_\eta-\widetilde{g}_\eta}_2^2 + \norm{\omega( g_\eta'(\omega)-\widetilde{g}_\eta'(\omega))}_2^2 \right)\, .
	\end{align*}
	We first observe that $\norm{L_C^i}=C^\frac{5i}{2}$ since 
	\begin{align*}
		\norm{L_C^i g}_2^2 &= \int (C^{3i} g(C^i\omega))^2\ d\omega \\
		&= \int C^{6i} g(\tilde{\omega})^2\ \frac{d\tilde{\omega}}{C^i} \quad\text{for}\quad \tilde{\omega} = C^i \omega\\
		&= C^{5i} \norm{g}_2^2\, .
	\end{align*}
	Thus
	\begin{align*}
		\norm{(I-L_{C_0})^{-1}} &= \norm[\big]{ \sum_{i=0}^{\infty} L_{C_0}^i } \leq 
		\sum_{i=0}^{\infty} C_0^{\frac{5i}{2}} = \frac{1}{1-C_0^{\frac{5}{2}}} = O(\eta^{-1}) \\
		\norm{L_{C_2}} &= C_2^{\frac{5}{2}} = O(1) \\
		C_1 &= O(\eta)
	\end{align*}
	so that 
	\begin{align*}
		2C_1^2 \norm{(I-L_{C_0})^{-1}}^2  \norm{L_{C_2}}^2 &= O(1)O(\eta^{2})O(\eta^{-2}) = O(1)
	\end{align*}
	and we obtain
	\begin{align*}
		\norm{Pf-\widetilde{Pf}}_2^2 &\lesssim  \norm{g_\eta-\widetilde{g}_\eta}_2^2 + \norm{\omega( g_\eta'(\omega)-\widetilde{g}_\eta'(\omega))}_2^2 \, .
	\end{align*}
\end{proof}

Lemma \ref{lem:SimplifyError} thus establishes that to bound $\Ex \left[\norm{Pf-\widetilde{Pf}}_{2}^2\right]$, it is sufficient to bound $\Ex \left[ \norm{g_\eta-\widetilde{g}_\eta}_{2}^2 \right]$
and $\Ex \left[ \norm{\omega( g_\eta'(\omega)-\widetilde{g}_\eta'(\omega))}_{2}^2\right]$. 
Utilizing Lemma \ref{lem:SimplifyError} yields the following Theorem, which bounds the MSE of \eqref{equ:FiniteSamplePSest_dilMRA} for dilation MRA. To control higher order terms we define:
\begin{align*}
	(\overline{Pf})^{k}(\omega) &:= \max_{\xi \in [\omega/2, 2\omega]} |(Pf)^{k}(\xi)| \, .
\end{align*}
In general for well behaved functions $(\overline{g})^{k}$ and $g^{k}$ have the same decay rate; for example, if $g^k$ is monotonic, $(\overline{g})^{k}(\omega) = g^{k}(2\omega)$.

\begin{thm}
	\label{thm:DilMRA_PS}
	Assume Model \ref{model:dilMRA}, the estimator $(\widetilde{Pf})(\omega)$ defined in \eqref{equ:FiniteSamplePSest_dilMRA}, $Pf \in \mathbf{C}^3 (\R)$, and that $\omega^k(\overline{Pf})^{(k)}(\omega) \in \Lb^2 (\R)$ for $k=2,3$. Then:
	\begin{align*}
		\Ex \left[ \norm{Pf-\widetilde{Pf}}_{2}^2 \right] \lesssim \frac{\eta^2}{M}\big(&\norm{(Pf)(\omega)}_2^2 + \norm{\omega (Pf)'(\omega)}_2^2 \\
		&+\norm{\omega^2 (Pf)''(\omega)}_2^2 \big) + r\, ,
	\end{align*}
	where $r$ is a higher-order term satisfying
	\begin{align*}
		r &\leq \frac{\eta^4}{M}\left(\norm{\omega^2(\overline{Pf})^{''}(\omega)}_2^2+ \norm{\omega^3(\overline{Pf})^{'''}(\omega)}_2^2 \right)\, .
	\end{align*}
\end{thm}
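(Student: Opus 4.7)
My plan is to reduce the problem via Lemma \ref{lem:SimplifyError} and then bound the two resulting mean-squared errors using a Taylor expansion in $\tau$. Lemma \ref{lem:SimplifyError} already gives the pointwise bound
\begin{align*}
    \|Pf - \widetilde{Pf}\|_2^2 \lesssim \|g_\eta - \widetilde{g}_\eta\|_2^2 + \|\omega(g_\eta'(\omega) - \widetilde{g}_\eta'(\omega))\|_2^2,
\end{align*}
so after taking expectation and using that $\widetilde{g}_\eta(\omega) = \frac{1}{M}\sum_j (Pf_j)(\omega)$ is an empirical mean over i.i.d.\ samples (and similarly for the derivative, by linearity), Fubini reduces the problem to
\begin{align*}
    \Ex\bigl[\|Pf - \widetilde{Pf}\|_2^2\bigr] \lesssim \frac{1}{M}\int \Var_\tau\bigl((Pf_1)(\omega)\bigr)\, d\omega + \frac{1}{M}\int \omega^2\,\Var_\tau\bigl((Pf_1)'(\omega)\bigr)\, d\omega.
\end{align*}

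The core computation is then a Taylor expansion in $\tau$. Writing $(Pf_1)(\omega) = (1-\tau)^2(Pf)((1-\tau)\omega)$ and expanding to third order with Lagrange remainder gives
\begin{align*}
    (Pf_1)(\omega) = (Pf)(\omega) + \tau A(\omega) + \tau^2 B(\omega) + R(\tau,\omega),
\end{align*}
where $A(\omega) = -2(Pf)(\omega) - \omega(Pf)'(\omega)$, $B(\omega) = (Pf)(\omega) + 2\omega(Pf)'(\omega) + \tfrac{1}{2}\omega^2(Pf)''(\omega)$, and $|R(\tau,\omega)| \lesssim |\tau|^3 \omega^3 (\overline{Pf})^{(3)}(\omega)$ uniformly for $\tau \in [-\sqrt{3}\eta,\sqrt{3}\eta]$, since $\eta^2 \leq 1/12$ forces $(1-\tau)\omega \in [\omega/2, 2\omega]$. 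For uniform $\tau$ with $\Ex[\tau] = \Ex[\tau^3] = 0$, $\Ex[\tau^2] = \eta^2$, and $\Ex[\tau^{2k}] = O(\eta^{2k})$, expanding the square in the variance and using the vanishing cross moment $\Ex[\tau(\tau^2 - \eta^2)] = \Ex[\tau^3] = 0$ yields
\begin{align*}
    \Var_\tau\bigl((Pf_1)(\omega)\bigr) = \eta^2 A(\omega)^2 + O(\eta^4)\bigl(B(\omega)^2 + \omega^6(\overline{Pf})^{(3)}(\omega)^2\bigr).
\end{align*}
The analogous expansion $(Pf_1)'(\omega) = (1-\tau)^3(Pf)'((1-\tau)\omega)$ has leading coefficient $A'(\omega) = -3(Pf)'(\omega) - \omega(Pf)''(\omega)$, so $\omega^2\Var_\tau((Pf_1)'(\omega))$ contributes a leading term $\eta^2\bigl(-3\omega(Pf)'(\omega) - \omega^2(Pf)''(\omega)\bigr)^2$ plus an $\eta^4$ remainder of the same structure.

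Integrating in $\omega$ and adding, the $\eta^2$ contributions combine via Minkowski into $\|(Pf)\|_2^2 + \|\omega(Pf)'\|_2^2 + \|\omega^2(Pf)''\|_2^2$, producing the main term after division by $M$. The $\eta^4$ contributions from $B$ and the Taylor remainders assemble into $r$, with the hypothesis $\omega^k(\overline{Pf})^{(k)} \in \Lb^2$ for $k = 2, 3$ guaranteeing finiteness of these integrals. The main technical obstacle I foresee is the careful bookkeeping of moments: one has to verify that the quadratic-in-$\tau$ coefficient $B(\omega)$ (which already contains a term in $\omega^2(Pf)''$) contributes only at order $\eta^4$ in the variance, via $\Var(\tau^2) = O(\eta^4)$ and the cancellation $\Ex[\tau^3] = 0$, rather than leaking into the leading $\eta^2$ rate; any slip in this cancellation would spuriously promote $\|\omega^2(Pf)''\|_2^2$ into an $\eta^2$ main term at the wrong strength or spoil the clean $\eta^2/M$ scaling.
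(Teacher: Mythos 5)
Your overall strategy coincides with the paper's: reduce via Lemma \ref{lem:SimplifyError}, use that $\widetilde{g}_\eta$ and $\omega\widetilde{g}_\eta'$ are empirical means of i.i.d.\ quantities so each MSE is a single-sample variance divided by $M$, then Taylor expand in $\tau$. The gap is in your remainder bookkeeping. For $(Pf_1)(\omega)=(1-\tau)^2(Pf)((1-\tau)\omega)$ you expand the whole product to second order in $\tau$ and claim $|R(\tau,\omega)|\lesssim|\tau|^3\,\omega^3(\overline{Pf})^{(3)}(\omega)$. But the third $\tau$-derivative of $(1-\tau)^2(Pf)((1-\tau)\omega)$ equals $-6\omega (Pf)'((1-\xi)\omega)-6(1-\xi)\omega^2(Pf)''((1-\xi)\omega)-(1-\xi)^2\omega^3(Pf)'''((1-\xi)\omega)$, so the Lagrange remainder also carries $|\tau|^3\,\omega\,\overline{(Pf)'}(\omega)$ and $|\tau|^3\,\omega^2(\overline{Pf})''(\omega)$ pieces; the maximal-function quantity $\omega\,\overline{(Pf)'}(\omega)$ is not assumed to lie in $\Lb^2$ and appears nowhere in the theorem, so as written your bound either needs an extra hypothesis or yields terms outside the claimed estimate. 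More seriously, for the derivative term your ``analogous expansion'' of $(1-\tau)^3(Pf)'((1-\tau)\omega)$ to second order in $\tau$ has a third-order remainder involving $(Pf)^{(4)}$ at intermediate points, which is unavailable: only $Pf\in\mathbf{C}^3(\R)$ and $\omega^3(\overline{Pf})'''\in\Lb^2$ are assumed.

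Both issues disappear if you expand as the paper does: keep the polynomial prefactors $(1-\tau)^2$ and $(1-\tau)^3$ exact (they need no expansion) and Taylor expand only $(Pf)((1-\tau)\omega)$ and $(Pf)'((1-\tau)\omega)$ to \emph{first} order with second-order Lagrange remainders, which are controlled precisely by $\omega^2(\overline{Pf})''$ and (after the extra factor of $\omega$) $\omega^3(\overline{Pf})'''$ --- exactly the quantities the hypotheses and the term $r$ allow. Note also that the $\Ex[\tau^3]=0$ cancellation you single out as the crux is neither usable nor needed: it cannot be applied to the Lagrange remainder (the intermediate point depends on $\tau$), and the cross term between the order-$\tau$ and order-$\tau^2$ contributions obeys $\eta^3|A||B|\le \frac{1}{2}\left(\eta^2A^2+\eta^4B^2\right)$ by AM--GM, so nothing leaks into the $\eta^2$ rate; moreover $\norm{\omega^2(Pf)''}_2^2$ already belongs to the $\eta^2$ main term (it arises from the leading coefficient $-3(Pf)'-\omega(Pf)''$ of the derivative part), so the worry about promoting it is moot. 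With the expansion scheme corrected, the rest of your plan (variance of the mean, integration in $\omega$, assembling the norms) matches the paper's proof.
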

\begin{proof}
	By Lemma \ref{lem:SimplifyError}, it is sufficient to bound $\Ex \left[\norm[\big]{g_\eta-\widetilde{g}_\eta}_{2}^2\right]$ and $\Ex\left[\norm[\big]{\omega( g_\eta'(\omega)-\widetilde{g}_\eta'(\omega))}_{2}^2\right]$.
	Since $\widetilde{g}_\eta(\omega) = \frac{1}{M}\sum_{j=1}^M Pf_j(\omega)$, we have
	\begin{align*}
		(\widetilde{g}_\eta(\omega) - g_\eta(\omega))^2 &\leq \left( \frac{1}{M}\sum_{j=1}^M (Pf_j)(\omega) - g_\eta(\omega)\right)^2 \, .
	\end{align*}
	Let $X_j = (Pf_j)(\omega) -g_\eta(\omega)=(Pf_j)(\omega)  - \Ex\left[ (Pf_j)(\omega) \right]$. 	Thus because $\frac{1}{M}\sum_{j=1}^M X_j$ is a centered random variable, we have
	\begin{align}
		\label{equ:termI}
		\Ex\left[ \left(\frac{1}{M}\sum_{j=1}^M X_j\right)^2 \right] &= \text{var}\left[\frac{1}{M}\sum_{j=1}^M X_j\right] 
		=\frac{\text{var}(X_j)}{M}\, .
	\end{align}
	Note that we can write:
	\begin{align*} 
		X_j &= (Pf_j)(\omega)  -  (Pf)(\omega)+ (Pf)(\omega) - \Ex\left[(Pf_j)(\omega) \right] \\
		X_j^2 &\leq 2\left((Pf_j)(\omega)  -(Pf)(\omega)\right)^2 \\
		&\qquad + 2\left((Pf)(\omega) - \Ex\left[(Pf_j)(\omega) \right]\right)^2
	\end{align*}
	Since it is easy to check that 
	\begin{align*}
		\Ex&\left[ \left((Pf)(\omega) - \Ex\left[(Pf_j)(\omega)\right]\right)^2\right] \\
		&\qquad \leq \Ex\left[ \left((Pf_j)(\omega) - (Pf)(\omega)\right)^2\right] \, ,
	\end{align*} 
	we obtain
	\begin{align*}
		\Ex\left[ X_j^2 \right] &\leq 4 \Ex \left[\left((Pf_j)(\omega) - (Pf)(\omega)\right)^2\right] \, .
	\end{align*}
	Taylor expanding $(Pf)((1-\tau_j)\omega)$ gives:
	\begin{align*}
		(Pf)((1-\tau_j)\omega) &= (Pf)(\omega) + (Pf)'(\omega) \cdot \omega\tau_j \\
		&\quad \pm \frac{1}{2}(\overline{Pf})^{''}(\omega) \cdot \omega^2\tau_j^2 \, .
	\end{align*}
	Multiplying by $(1-\tau_j)^2$ and rearranging:
	\begin{align*}
		&(1-\tau_j)^2(Pf)((1-\tau_j)\omega) - (Pf)(\omega) \\
		&\quad = (-2\tau_j+\tau_j^2)(Pf)(\omega) + (1-\tau_j)^2 (Pf)'(\omega) \cdot \omega\tau_j \\
		&\qquad \pm \frac{(1-\tau_j)^2}{2}(\overline{Pf})^{''}(\omega) \cdot \omega^2\tau_j^2 \, .
	\end{align*}
	Utilizing $a+b-c \leq d \leq a+b+c \implies d^2 \lesssim a^2+b^2+c^2$, we square and take expectation to obtain
	\begin{align*}
		&\Ex\left[\left((Pf_j)(\omega) - (Pf)(\omega)\right)^2\right]  \\ &\lesssim \left[(Pf)(\omega)\right]^2\eta^2
		+ \left[\omega (Pf)'(\omega)\right]^2 \eta^2 + \left[\omega^2(\overline{Pf})^{''}(\omega)\right]^2 \eta^4  \, .
	\end{align*}
	Thus
	\begin{align*}
		\text{var}[X_j]=\Ex[X_j^2] &\lesssim \left( \left[(Pf)(\omega)\right]^2 + \left[\omega (Pf)'(\omega)\right]^2\right) \eta^2 \\
		&\qquad+ \left[\omega^2(\overline{Pf})^{''}(\omega)\right]^2 \eta^4 \, .
	\end{align*}
	Utilizing (\ref{equ:termI}), we obtain
	\begin{align*}
		&\Ex \left[(\widetilde{g}_\eta(\omega) - g_\eta(\omega))^2\right] \\
		&\quad \lesssim \frac{\eta^2}{M} \left(\left[(Pf)(\omega)\right]^2+\left[\omega (Pf)'(\omega)\right]^2 + \left[\omega^2(\overline{Pf})^{''}(\omega)\right]^2 \eta^2\right)\, 
	\end{align*}
	so that
	\begin{align*}
		&\Ex \left[ \norm{g_\eta-\widetilde{g}_\eta}_{2}^2 \right] = \int \Ex \left[(\widetilde{g}_\eta(\omega) - g_\eta(\omega))^2\right]\ d\omega \\
		&\ \lesssim \frac{\eta^2}{M}\left(\norm{ (Pf)(\omega)}_2^2 + \norm{ \omega (Pf)'(\omega)}_2^2 + \norm{\omega^2 (\overline{Pf})^{''}(\omega)}_2^2\, \eta^2\right) \, .
	\end{align*}
	We now  bound $\Ex\left[\norm[\big]{\omega( g_\eta'(\omega)-\widetilde{g}_\eta'(\omega))}_{2}^2\right]$. Letting $g_j = Pf_j$, we have
	\begin{align*}
		\omega\widetilde{g}'_\eta(\omega) -\omega g'_\eta(\omega) &= \frac{1}{M}\sum_{j=1}^M \omega g'_j(\omega) - \omega g'_\eta(\omega) = \frac{1}{M}\sum_{j=1}^M Z_j
	\end{align*}
	where
	\begin{align*}
		Z_j &=\omega g'_j(\omega) - \omega g'_\eta(\omega) \ .
	\end{align*}
	We note $\Ex[Z_j]=0$, and a similar argument as the one applied to $X_j$ gives
	\begin{align*}
		Z_j^2 &\leq 2\left(\omega g'_j(\omega) - \omega g'(\omega) \right)^2 + 2\left(\omega g'(\omega) - \omega g'_\eta(\omega)\right)^2 \\
		\Ex\left[Z_j^2\right] &\leq 4\Ex\left[\left(\omega g'_j(\omega) - \omega g'(\omega) \right)^2\right] \, .
	\end{align*}
	Taylor expanding $(Pf)'((1-\tau_j)\omega)$ gives
	\begin{align*}
	(Pf)'((1-\tau_j)\omega) &= (Pf)'(\omega) + (Pf)''(\omega) \cdot \omega\tau_j \\
	&\quad \pm \frac{1}{2}(\overline{Pf})^{'''}(\omega) \cdot \omega^2\tau_j^2 \, .
	\end{align*}
	Since $\omega g_j'(\omega) = \omega (Pf_j)'(\omega)=(1-\tau_j)^3\omega (Pf)'((1-\tau_j)\omega)$, we multiply by $(1-\tau_j)^3\omega$ to obtain:
	\begin{align*}
		\omega (Pf_j)'(\omega) &= (1-\tau_j)^3\omega(Pf)'(\omega) +  \tau_j(1-\tau_j)^3 \omega^2 (Pf)''(\omega)\\
		&\quad \pm \frac{1}{2}\tau_j^2(1-\tau_j)^3\omega^3(\overline{Pf})^{'''}(\omega)
	\end{align*} 
	Rearranging:
	\begin{align*}
	&\omega (Pf_j)'(\omega) - \omega(Pf)'(\omega) = (-3\tau_j+3\tau_j^2-\tau_j^3)\omega(Pf)'(\omega) \\
	&\quad +  \tau_j(1-\tau_j)^3 \omega^2 (Pf)''(\omega) \pm \frac{1}{2}\tau_j^2(1-\tau_j)^3\omega^3(\overline{Pf})^{'''}(\omega) \, .
	\end{align*}
	Squaring and taking expectation:
	\begin{align*}
	&\Ex\left[\left(\omega g_j'(\omega) - \omega g'(\omega)\right)^2\right] \lesssim \left[\omega(Pf)'(\omega)\right]^2\eta^2 \\ &\qquad+\left[\omega^2(Pf)''(\omega)\right]^2\eta^2 + \left[\omega^3(\overline{Pf})^{'''}(\omega)\right]^2\eta^4 \, .
	\end{align*} 
	Having bounded $\text{var}\left[Z_j\right]$, an identical argument as the one used to control $\Ex \left[\norm[\big]{g_\eta-\widetilde{g}_\eta}_{2}^2\right]$ gives
	\begin{align*}
		\Ex &\left[ \norm[\big]{\omega g_\eta'(\omega)-\omega \widetilde{g}_\eta'(\omega)}_{2}^2 \right]
		\lesssim \frac{\eta^2}{M} \Big(\norm[\big]{\omega (Pf)'(\omega)}_2^2 \\
		&\qquad + \norm[\big]{\omega^2 (Pf)''(\omega)}_2^2 + \norm[\big]{\omega^3(\overline{Pf})^{'''}(\omega)}_2^2 \eta^2\Big)\, ,
	\end{align*}
	which proves the Theorem.
\end{proof}

Figure \ref{fig:compare_PS} illustrates how much is gained from inversion unbiasing for a specific high frequency signal; the mean power spectrum under Model \ref{model:dilMRA} is greatly perturbed due to large dilations, but $\widetilde{Pf}$ is still an accurate approximation of $Pf$. Although in general a signal is not uniquely defined by its power spectrum, if $\widehat{f}$ is real and positive as in Figure \ref{fig:SignalRecovery}, $f$ can be recovered from $Pf$. Figures \ref{fig:target_signal}--\ref{fig:mean_PS_rec_signal} illustrate how in this case inversion unbiasing yields a signal which accurately approximates the target.

\begin{figure}
	\centering
	\begin{subfigure}[b]{0.24\textwidth}
		\centering
		\includegraphics[width=.85\textwidth]{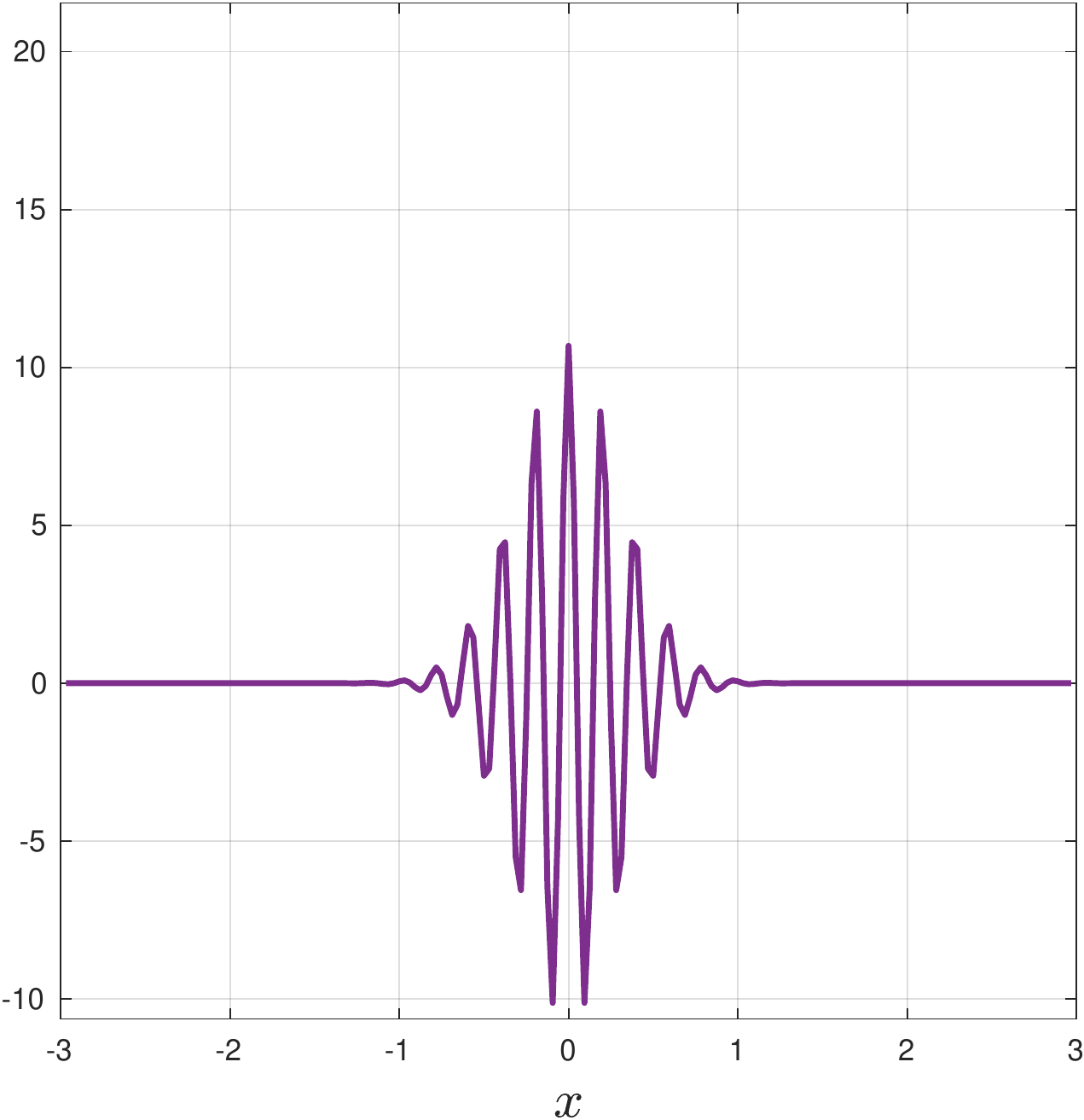}
		\caption{Target signal}
		\vspace*{.3cm}
		\label{fig:target_signal}
	\end{subfigure}
	\hfill
	\begin{subfigure}[b]{0.24\textwidth}
		\centering
		\includegraphics[width=.85\textwidth]{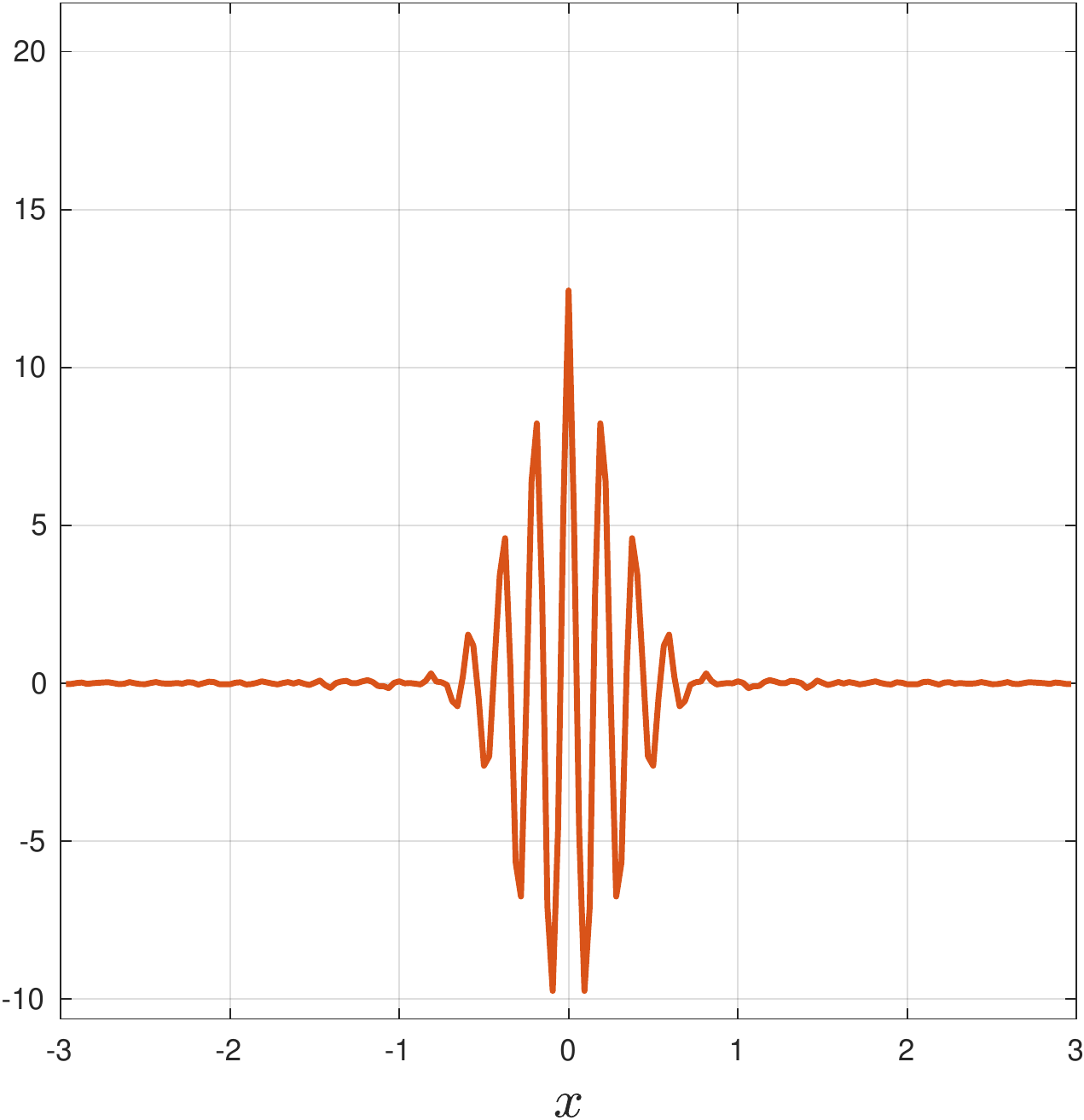}
		\caption{Signal recovered via $\widetilde{PS}$}
		\vspace*{.3cm}
		\label{fig:inv_unbiased_signal}
	\end{subfigure}
	\hfill
	\begin{subfigure}[b]{0.24\textwidth}
		\centering
		\includegraphics[width=.85\textwidth]{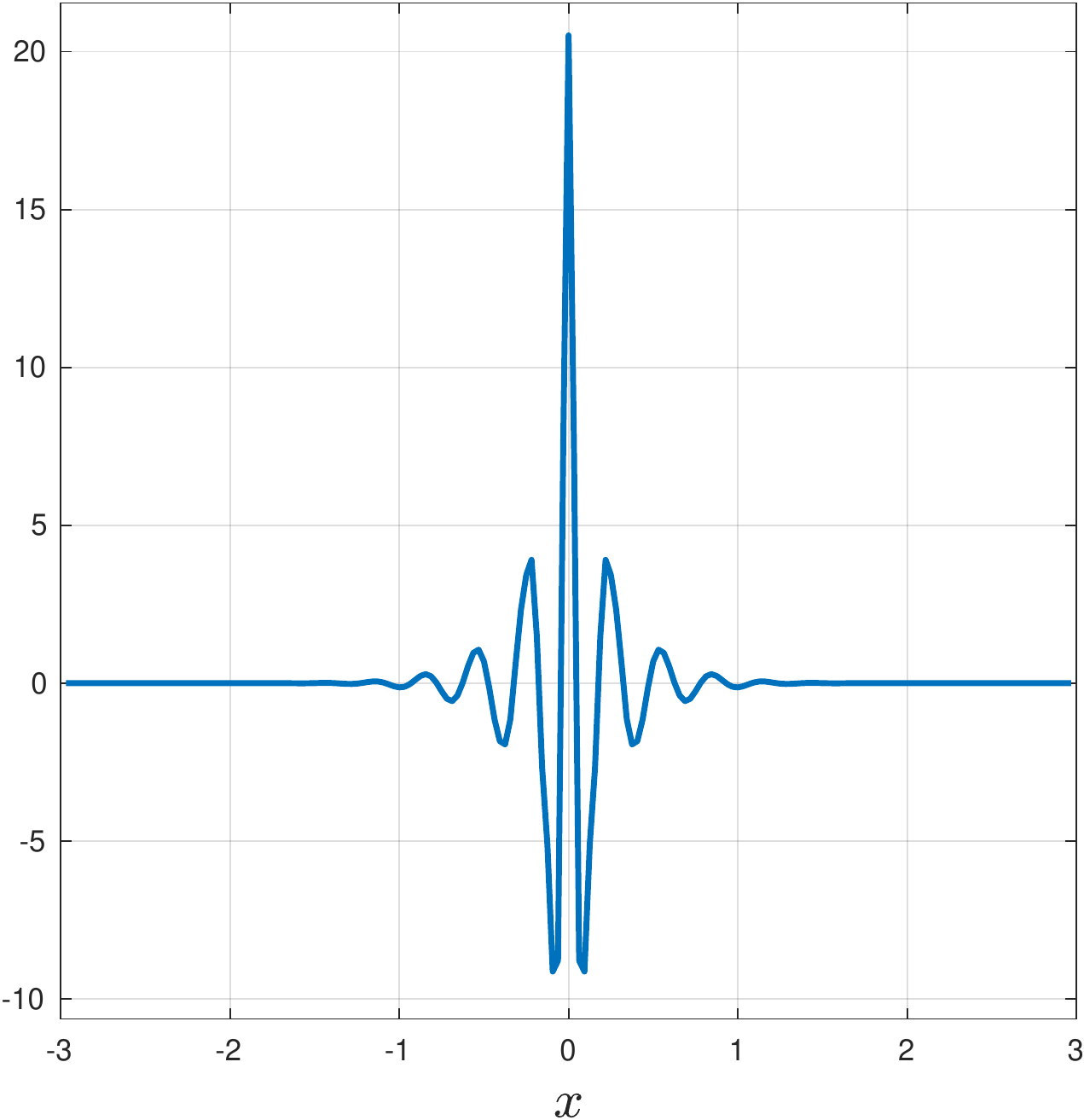}
		\caption{Signal recovered via mean PS}
		\label{fig:mean_PS_rec_signal}
	\end{subfigure}
	\begin{subfigure}[b]{0.24\textwidth}
		\centering
		\includegraphics[width=.85\textwidth]{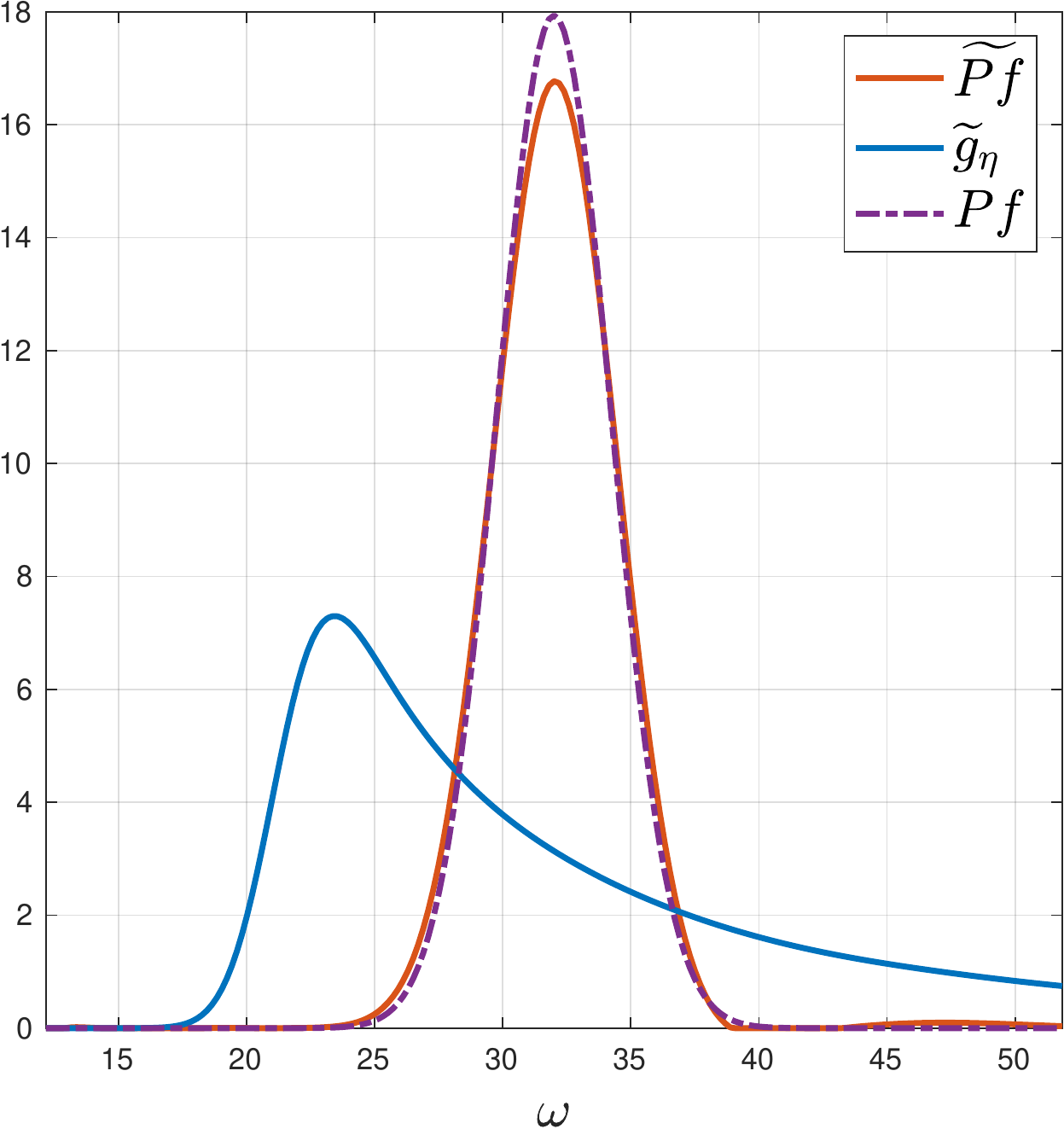}
		\caption{Power spectra}
		\label{fig:compare_PS}
	\end{subfigure}
	\caption{Power spectrum estimation and signal recovery for high frequency Gabor signal $f_3(x) = C_3 \exp^{-5x^2}\cos(32x)$ under Model \ref{model:dilMRA} with $\eta= 12^{-1/2}$ and $M=100,000$. The mean power spectrum $\widetilde{g}_\eta$ is greatly perturbed from the target power spectrum $Pf$, but applying inversion unbiasing to $\widetilde{g}_\eta$ yields an approximation $\widetilde{Pf}$ which is quite close to $Pf$ (see Figure \ref{fig:compare_PS}). Figure \ref{fig:target_signal} shows the target signal, and Figures \ref{fig:inv_unbiased_signal}, \ref{fig:mean_PS_rec_signal} show the target signal approximations obtained by inverting $\widetilde{g}_\eta$, $\widetilde{Pf}$.  }
	\label{fig:SignalRecovery}
\end{figure}

\subsection{Results for Noisy Dilation MRA}
\label{sec:NoisyDilationMRA}

Solving noisy dilation MRA presents several additional challenges which are lacking in dilation MRA.
First of all, the MSE can only be controled on a finite frequency interval due to the additive noise. We thus restrict to a finite frequency interval $\Omega$, 
and consider the MSE of an estimator $\widetilde{Pf}$ over the finite interval, i.e. $\Ex \left[\norm{Pf-\widetilde{Pf}}_{\Lb^2(\Omega)}^2\right]$. We note the residual error from working on $\Omega$ decays to zero as $|\Omega| \rightarrow \infty$. In addition, in any numerical implementation one is always restricted to a finite frequency interval. 

Another challenge is that one does not have direct access to $\widetilde{g}_\eta$; rather one only has access to 
\begin{align*}
	\frac{1}{M}\sum_{j=1}^M Py_j -\sigma^2 &= \widetilde{g}_\eta+ \widetilde{g}_\sigma
\end{align*}
where 
\begin{align*}
 \widetilde{g}_\sigma:= \frac{1}{M} \sum_{j=1}^M  \widehat{f}_{j}\widehat{\epsilon}_j^* +\widehat{f}_{j}^*\widehat{\epsilon}_j + P\epsilon_j- \sigma^2 \, .
\end{align*}
Although the compact support of the hidden signal guarantees the smoothness of $\widetilde{g}_\eta$, $\widetilde{g}_\sigma$ is not smooth due to the additive noise. To extend the unbiasing procedure of Section \ref{sec:DilationMRA} to the additive noise context, it is thus necessary to smooth the noisy power spectra. We thus compute $(\widetilde{g}_\eta+ \widetilde{g}_\sigma)\ast \phi_L$ where $\phi_L(\omega) = (2\pi L^2)^{-\frac{1}{2}}e^{-\frac{\omega^2}{2L^2}}$ is a Gaussian filter with width $L$, and then define the following estimator:
\begin{align}
	\label{equ:FiniteSamplePSest_genMRA}
	&(\widetilde{Pf})(\omega) := (I-L_{C_0})^{-1}C_1L_{C_2}\\
	&\quad\left[3 (\widetilde{g}_\eta+ \widetilde{g}_\sigma)\ast\phi_L(\omega)+\omega\left((\widetilde{g}_\eta+ \widetilde{g}_\sigma)\ast\phi_L\right)'(\omega)\right] \, . \nonumber
\end{align}
As $M\rightarrow \infty$ and $L\rightarrow 0$, \eqref{equ:FiniteSamplePSest_genMRA} is an unbiased estimator of $Pf$. To quantify how the error of the estimator depends on $L$, we need the following two lemmas.

\begin{lem}
	\label{lem:approx_identity_precise}
	Let $h \in \Lb^2 (\R)$ and assume $|\widehat{h} (\omega)|$ decays like $|\omega|^{-\alpha}$ for some integer $\alpha \geq 1$. Then for $L$ small enough:
	\begin{align*}
		\norm{ h - h \ast \phi_L }_2^2 &\lesssim \norm{ h }_2^2 L^4 + L^{4 \wedge (2\alpha - 1)} \, . 
	\end{align*}
\end{lem}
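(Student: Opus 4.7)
The plan is to pass to the Fourier side and then split the resulting integral into three frequency regimes.

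First I would apply Plancherel's identity: since convolution in frequency space passes to a product and $\widehat{\phi_L}(\omega) = e^{-L^2 \omega^2 / 2}$, we have
\begin{equation*}
  \norm{h - h \ast \phi_L}_2^2 = \frac{1}{2\pi} \int_{\R} |\widehat{h}(\omega)|^2 \bigl(1 - e^{-L^2 \omega^2 / 2}\bigr)^2 \, d\omega \, .
\end{equation*}
Using the elementary bound $1 - e^{-y} \leq \min(y, 1)$ for $y \geq 0$, one gets
\begin{equation*}
  \bigl(1 - e^{-L^2 \omega^2 / 2}\bigr)^2 \leq \min\!\left(\tfrac{L^4 \omega^4}{4}, \, 1\right) \, ,
\end{equation*}
which is the key pointwise estimate: quadratic in $L^2\omega^2$ at low frequencies, flat at high frequencies.

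Next I would partition $\R$ into the three regions $\{|\omega| \leq 1\}$, $\{1 < |\omega| \leq 1/L\}$, and $\{|\omega| > 1/L\}$, handling them as follows. On $|\omega|\leq 1$ one has $\omega^4 \leq 1$, so the bound $(1-\widehat{\phi_L}(\omega))^2 \leq L^4/4$ combined with $\int |\widehat{h}|^2 \lesssim \|h\|_2^2$ gives a contribution of order $\|h\|_2^2 L^4$. On $1 < |\omega| \leq 1/L$ one applies the quadratic bound together with $|\widehat{h}(\omega)|^2 \lesssim |\omega|^{-2\alpha}$, producing $L^4 \int_1^{1/L} |\omega|^{4-2\alpha} d\omega$. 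On $|\omega| > 1/L$ one applies the flat bound together with the decay, yielding $\int_{1/L}^\infty |\omega|^{-2\alpha} d\omega \lesssim L^{2\alpha - 1}$.

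The remaining bookkeeping is to evaluate $L^4 \int_1^{1/L} |\omega|^{4 - 2\alpha} d\omega$ by cases on the integer $\alpha \geq 1$: for $\alpha = 1, 2$ the integral is unbounded in $L$ and the contribution is $L^{2\alpha - 1}$, while for $\alpha \geq 3$ the integral is bounded and the contribution is $O(L^4)$. In every case, the middle and tail regions together contribute at most $L^{4 \wedge (2\alpha - 1)}$, and adding the $\|h\|_2^2 L^4$ contribution from the low-frequency piece yields the lemma. The main (minor) obstacle is simply to verify that the cutoffs $|\omega| = 1$ and $|\omega| = 1/L$ stitch the three estimates together correctly across the three regimes of $\alpha$; the hypothesis that $L$ is small enough ensures $1 < 1/L$ so that the middle region is nonempty and the decay estimate on $\widehat h$ is valid throughout it.
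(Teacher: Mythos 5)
Your proposal is correct, and its skeleton matches the paper's: Plancherel, then split the frequency axis according to where the decay of $\widehat{h}$ is used, with the quadratic behavior of $1-\widehat{\phi}_L(\omega)=1-e^{-L^2\omega^2/2}$ at low frequency and the trivial bound at high frequency. Where you differ is in the high-frequency term. The paper bounds $\int_{|\omega|\ge \omega_0}|\omega|^{-2\alpha}\bigl(1-e^{-L^2\omega^2/2}\bigr)^2\,d\omega$ by the change of variables $\tilde\omega = L\omega$, pulling out the exact scaling factor $L^{2\alpha-1}$, and then evaluates the resulting integral explicitly (with a computer algebra system) for $\alpha=1,2,3$ and uses monotonicity in $\alpha$ to conclude $(\mathrm{II})\lesssim L^{4\wedge(2\alpha-1)}$. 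You instead use the elementary pointwise bound $\bigl(1-e^{-L^2\omega^2/2}\bigr)^2\le \min\bigl(L^4\omega^4/4,\,1\bigr)$ and split at $|\omega|=1/L$, doing a direct case analysis on the integer $\alpha$ ($\alpha=1,2$ give $L^{2\alpha-1}$ from the middle region, $\alpha\ge 3$ gives $L^4$, and the tail always gives $L^{2\alpha-1}$); this is fully self-contained and avoids the symbolic evaluation, at the cost of cruder constants, while the paper's substitution makes the exact $L$-asymptotics of the high-frequency term transparent. One small caveat: the decay hypothesis, as the paper interprets it, only guarantees $|\widehat{h}(\omega)|\le C|\omega|^{-\alpha}$ for $|\omega|\ge \omega_0$ with some $\omega_0\ge 1$, so your inner cutoff should be taken at $\omega_0$ rather than at $1$ (or the band $1<|\omega|<\omega_0$ absorbed into the low-frequency term via $\omega^4\le\omega_0^4$); the constants then depend on $\omega_0$ exactly as in the paper's bound, and nothing else in your argument changes.
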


\begin{proof}
    The proof of Lemma \ref{lem:approx_identity_precise} is given in Appendix \ref{app: approx_identity_precise proof}. 
\end{proof}

\begin{lem}
	\label{lem:approx_identity_hw}
	Let $x h(x) \in\Lb^2 (\R)$ and assume $|\widehat{(\cdot)h(\cdot)} (\omega)|$ decays at least like $|\omega|^{-\alpha}$ for some integer $\alpha \geq 1$. Then for $L$ small enough:
	\begin{align*}
		&\| x(h - h \ast \phi_L) \|_2^2 \\
		&\quad\lesssim (L^3 \norm{h}_2^2) \wedge (L^4 \norm{h'}_2^2) + \|xh\|_2^2L^4 + L^{4\wedge(2\alpha-1)}\, .
	\end{align*}
\end{lem}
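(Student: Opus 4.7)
The plan is to reduce $\norm{x(h - h\ast\phi_L)}_2^2$ to a bound for $xh$ governed by Lemma \ref{lem:approx_identity_precise}, plus a single remainder involving $\phi_L'$. The crucial step is an operator identity: working in the Fourier domain and using $\widehat{xh}(\omega) = i\widehat{h}'(\omega)$ together with $\widehat{h'}(\omega) = i\omega\widehat{h}(\omega)$, I would differentiate $\widehat{h}(\omega)(1-\widehat{\phi_L}(\omega))$ in $\omega$ by the product rule and regroup. After returning to the spatial domain, this yields
\begin{align*}
x(h - h\ast\phi_L) = \bigl[(xh) - (xh)\ast\phi_L\bigr] + L^2(h'\ast\phi_L),
\end{align*}
so that by $(a+b)^2 \le 2(a^2+b^2)$,
\begin{align*}
\norm{x(h-h\ast\phi_L)}_2^2 \lesssim \norm{(xh) - (xh)\ast\phi_L}_2^2 + L^4\norm{h'\ast\phi_L}_2^2.
\end{align*}

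For the first summand I would apply Lemma \ref{lem:approx_identity_precise} with $xh$ in the role of $h$: by hypothesis $xh \in \Lb^2(\R)$ and $|\widehat{xh}(\omega)|$ decays like $|\omega|^{-\alpha}$, so the lemma delivers $\norm{(xh) - (xh)\ast\phi_L}_2^2 \lesssim \norm{xh}_2^2 L^4 + L^{4\wedge(2\alpha-1)}$, which accounts for the last two terms in the claim. For the second summand I would produce two complementary bounds and take their minimum. The $\norm{h'}_2^2$ bound is immediate from Young's inequality with $\norm{\phi_L}_1 = 1$: $\norm{h'\ast\phi_L}_2 \leq \norm{h'}_2$, hence $L^4\norm{h'\ast\phi_L}_2^2 \leq L^4\norm{h'}_2^2$. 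The $\norm{h}_2^2$ bound uses the equivalent form $h'\ast\phi_L = h\ast\phi_L'$ (integration by parts), combined with either a direct computation $\norm{\phi_L'}_1 \sim L^{-1}$ from $\phi_L'(z) = -zL^{-2}\phi_L(z)$ and Young's inequality, or equivalently the Fourier sup-norm bound on $\omega e^{-L^2\omega^2/2}$.

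The main obstacle I anticipate is extracting the sharp power of $L$ in the $\norm{h}_2^2$ bound: the naive arguments just described only give $L^2\norm{h}_2^2$, while the claim asks for $L^3\norm{h}_2^2$. Recovering the extra power likely requires a more refined frequency-domain splitting at the Gaussian's natural scale $|\omega|\sim L^{-1}$, exploiting the rapid decay of $e^{-L^2\omega^2}$ above this scale while controlling $L^4\omega^2 e^{-L^2\omega^2}$ in the transition regime through Plancherel rather than the crude sup-norm of the symbol. Once this refinement is in place, the remaining estimates are the same frequency-splitting ideas that underpin Lemma \ref{lem:approx_identity_precise} and should be routine.
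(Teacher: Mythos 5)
Your decomposition is exactly the paper's. Differentiating $\widehat{h}(1-\widehat{\phi}_L)$ and regrouping does give
\begin{equation*}
x(h-h\ast\phi_L) \;=\; \bigl[(xh)-(xh)\ast\phi_L\bigr] \;+\; L^2\,(h'\ast\phi_L)\, ,
\end{equation*}
which is precisely the paper's split $\|\widehat{h}'-\widehat{h}'\widehat{\phi}_L\|_2^2+\|\widehat{h}\,\widehat{\phi}_L'\|_2^2$ (since $\widehat{h}\,\widehat{\phi}_L' = -L^2\omega\widehat{h}\,\widehat{\phi}_L$). The first piece is handled by applying Lemma \ref{lem:approx_identity_precise} to $xh$, exactly as in the paper, and your Young's-inequality bound $L^4\|h'\ast\phi_L\|_2^2\le L^4\|h'\|_2^2$ is the paper's second branch (the paper simply drops $e^{-L^2\omega^2}\le 1$ under the integral). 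So everything except the $L^3\|h\|_2^2$ branch of the minimum coincides with the paper's proof.

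That branch is the genuine gap in your write-up: you state the correct target but leave it unproven, and the refinement you sketch would not deliver it. For what it is worth, the paper obtains it from the one-line estimate $\|\widehat{h}\,\widehat{\phi}_L'\|_2^2 = L^4\int|\widehat{h}(\omega)|^2\omega^2 e^{-L^2\omega^2}\,d\omega \le L^4\,\|\omega^2 e^{-L^2\omega^2}\|_\infty\,\|\widehat{h}\|_2^2$, asserting $\|\omega^2 e^{-L^2\omega^2}\|_\infty=(eL)^{-1}$; but the supremum is attained at $\omega^2=L^{-2}$ and equals $(eL^2)^{-1}$, so this argument in fact yields only $L^2\|h\|_2^2$ --- exactly the ``naive'' bound you reach. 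Your hope that a splitting at the scale $|\omega|\sim L^{-1}$ recovers $L^3$ also fails without extra hypotheses: at that scale $L^4\omega^2 e^{-L^2\omega^2}\sim L^2$, and an $\widehat{h}$ concentrated near $|\omega|\sim L^{-1}$ makes $\|\widehat{h}\,\widehat{\phi}_L'\|_2^2\gtrsim L^2\|h\|_2^2$; decay of $\widehat{h}$ itself (not of $\widehat{xh}$, which is what the lemma assumes) would be needed to do better. So the honest output of this method is $(L^2\|h\|_2^2)\wedge(L^4\|h'\|_2^2)$ plus the two terms you already have; the stronger $L^3$ branch is not justified by the sup-norm argument, though the discrepancy is immaterial downstream since the proof of Theorem \ref{thm:main} only invokes the $L^4\|h'\|_2^2$ branch when bounding its term (II).
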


\begin{proof}
    The proof of Lemma \ref{lem:approx_identity_hw} is given in Appendix \ref{app: approx_identity_hw proof}. 
\end{proof}

We now state the main result of the article.
\begin{thm}
	\label{thm:main}
	Assume Model \ref{model:genMRA}, the estimator $(\widetilde{Pf})(\omega)$ defined in \eqref{equ:FiniteSamplePSest_genMRA}, $Pf \in \mathbf{C}^3 (\R)$, and that $\omega^k(\overline{Pf})^{(k)}(\omega) \in \Lb^2 (\R)$ for $k=2,3$.
	Then
	\begin{align*}
		&\Ex \left[\norm{Pf - \widetilde{Pf}}_{\Lb^2(\Omega)}^2\right] \lesssim C_{f, \Omega}\left(  \frac{\eta^2}{M} +L^4 + \frac{\sigma^2\vee\sigma^4}{L^2M}\right) \,.
	\end{align*} 
\end{thm}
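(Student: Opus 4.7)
The plan is to mirror the strategy used for Theorem \ref{thm:DilMRA_PS}, adapting it to the smoothed, noise-corrupted estimator \eqref{equ:FiniteSamplePSest_genMRA}. First I would establish an analog of Lemma \ref{lem:SimplifyError} on the bounded window $\Omega$: applying $(I - L_{C_0})^{-1} C_1 L_{C_2}$ to the ``target minus estimator'' quantity costs only the bounded operator-norm factors already computed in the proof of Lemma \ref{lem:SimplifyError}, possibly on a slightly enlarged window $\Omega'$ accounting for the action of $L_{C_2}$; this enlargement is absorbed into $C_{f,\Omega}$. This reduces the theorem to controlling
\begin{align*}
E_0 &:= \Ex\!\left[\norm{g_\eta - (\widetilde{g}_\eta + \widetilde{g}_\sigma)\ast\phi_L}_{\Lb^2(\Omega')}^2\right], \\
E_1 &:= \Ex\!\left[\norm{\omega\bigl(g_\eta' - ((\widetilde{g}_\eta + \widetilde{g}_\sigma)\ast\phi_L)'\bigr)(\omega)}_{\Lb^2(\Omega')}^2\right].
\end{align*}

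Next I would split each of $E_0, E_1$ via the triangle inequality into three pieces: (a) the \emph{smoothing bias} $g_\eta - g_\eta\ast\phi_L$ (and its $\omega$-weighted derivative analog), (b) the \emph{dilation sampling variance} $(\widetilde{g}_\eta - g_\eta)\ast\phi_L$, and (c) the \emph{noise} term $\widetilde{g}_\sigma\ast\phi_L$. For (a), since $f$ is compactly supported $g_\eta$ inherits a rapidly decaying Fourier transform, and the hypothesis $\omega^k(\overline{Pf})^{(k)} \in \Lb^2$ gives enough regularity to apply Lemmas \ref{lem:approx_identity_precise} and \ref{lem:approx_identity_hw}; these directly yield an $O(L^4)$ contribution, with the higher-order $L^{4\wedge(2\alpha-1)}$ tail dominated under sufficient Fourier decay. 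For (b), Young's convolution inequality $\norm{h\ast\phi_L}_2 \leq \norm{h}_2 \norm{\phi_L}_1 = \norm{h}_2$ (together with an analogous inequality obtained by transferring the derivative onto $h$ via integration by parts) lets me reuse the variance computation from the proof of Theorem \ref{thm:DilMRA_PS} essentially verbatim, yielding the $O(\eta^2/M)$ term.

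The heart of the argument, and where I expect the main obstacle, is (c). Here $\widetilde{g}_\sigma$ has mean zero but is itself rough, so the bound must extract a smoothing gain from $\phi_L$. Using independence of the $M$ samples and the known second-moment structure of the Fourier transform $\widehat{\epsilon}_j$ of white noise on $[-\tfrac12,\tfrac12]$, the pointwise variance of $\widetilde{g}_\sigma$ splits into a signal-noise cross term of order $\sigma^2|\widehat{f}|^2/M$ and a pure noise-squared term of order $\sigma^4/M$, jointly producing the $\sigma^2 \vee \sigma^4$ factor. Writing $\Ex[(\widetilde{g}_\sigma\ast\phi_L)(\omega)^2]$ as a double integral against these covariance kernels produces a factor $\norm{\phi_L}_2^2 \sim L^{-1}$ for the undifferentiated error and $\norm{\phi_L'}_2^2 \sim L^{-3}$ for the derivative error; after integrating over the bounded $\Omega'$ and absorbing the signal-dependent norms into $C_{f,\Omega}$, these combine to give the $L^{-2}$ scaling in the stated $(\sigma^2 \vee \sigma^4)/(L^2 M)$ term. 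Summing (a)--(c) across both $E_0$ and $E_1$ and feeding the result back through the analog of Lemma \ref{lem:SimplifyError} completes the proof.
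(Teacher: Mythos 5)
Your overall architecture matches the paper's: reduce via a windowed analog of Lemma \ref{lem:SimplifyError}, split into a dilation-sampling term (handled by the variance computation from Theorem \ref{thm:DilMRA_PS}), a smoothing-bias term (handled by Lemmas \ref{lem:approx_identity_precise} and \ref{lem:approx_identity_hw}, where compact support indeed lets $\alpha$ be taken large), and a noise term; your regrouping of the triangle inequality (smoothing bias of $g_\eta$ rather than of $\widetilde{g}_\eta$, sampling error kept under the convolution and removed by Young) is a harmless variant of the paper's decomposition into (A), (I)--(IV).

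The genuine gap is in your treatment of the noise term (c), which you yourself identify as the heart of the argument. You claim that writing $\Ex[(\widetilde{g}_\sigma\ast\phi_L)(\omega)^2]$ as a double integral against the covariance kernel produces factors $\norm{\phi_L}_2^2 \sim L^{-1}$ and $\norm{\phi_L'}_2^2 \sim L^{-3}$, and that these ``combine'' to give the stated $L^{-2}$. They do not: the derivative contribution $L^{-3}$ dominates, so your argument as written yields at best $(\sigma^2\vee\sigma^4)/(L^3 M)$, which is strictly weaker than the theorem (and would degrade the optimal choice of $L$ and the rate in Corollary \ref{cor:error_for_best_L} from $(\sigma^4/M)^{2/3}$ to $(\sigma^4/M)^{4/7}$). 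Moreover, the $\norm{\phi_L}_2^2$ and $\norm{\phi_L'}_2^2$ factors are what one gets for noise that is delta-correlated in frequency, whereas here $\widehat{\epsilon}_j$ (white noise restricted to $[-\tfrac12,\tfrac12]$) has a frequency covariance kernel of width $O(1)$, so that step is not justified for this model. The paper's route is more elementary and gives exactly the right powers: bound $\norm{\widetilde{g}_\sigma\ast\phi_L}_{\Lb^2(\Omega)}$ by Young's inequality with $\norm{\phi_L}_1 = 1$ (no $L$ gain, and none is needed), and for the derivative term pull out $|\omega|\leq|\Omega|$ and use $(\widetilde{g}_\sigma\ast\phi_L)' = \widetilde{g}_\sigma\ast\phi_L'$ with $\norm{\phi_L'}_1^2 \sim L^{-2}$; then conclude with the second-moment bounds $\Ex\norm{\widetilde{g}_\sigma}_{\Lb^2(\Omega)}^2 \lesssim \frac{\sigma^2}{M}\left(\norm{f}_2^2 + |\Omega|\sigma^2\right)$, which come from $\Ex|\widehat{f}_j\widehat{\epsilon}_j^*|^2 \lesssim \sigma^2|\widehat{f}_j|^2$ and $\Var(P\epsilon_j)\leq 3\sigma^4$. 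Replacing your kernel estimate by this Young's-inequality step closes the gap; the remaining small items (the $\omega$-weight does not commute with convolution in your term (b), and the window may enlarge under $L_{C_2}$) are fixable exactly as you indicate, e.g.\ by bounding $|\omega|\leq|\Omega|$ on the window.
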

\begin{proof}
	From Proposition \ref{prop:InfiniteSampleRecoveryPS} and a proof similar to Lemma \ref{lem:SimplifyError}
	\begin{align*}
		\norm{Pf - \widetilde{Pf}}_{\Lb^2(\Omega)}^2 &\lesssim \| g_\eta + \omega g_\eta'(\omega) - (\widetilde{g}_\eta+ \widetilde{g}_\sigma) \ast \phi_L \\
		&\qquad - \omega ((\widetilde{g}_\eta+ \widetilde{g}_\sigma) \ast \phi_L)'(\omega) \|_{\Lb^2(\Omega)}^2 \, .
	\end{align*}
	By the triangle inequality
	\begin{small}
	\begin{align*}
		&\norm{Pf - \widetilde{Pf}}_{\Lb^2(\Omega)}^2 \lesssim \norm{ g_\eta  + \omega g_\eta'(\omega) - \widetilde{g}_{\eta} -\omega \widetilde{g}_{\eta}'(\omega)}_{2}^2 \\
		&\ +\norm{\widetilde{g}_{\eta} + \omega \widetilde{g}_{\eta}'(\omega) -  (\widetilde{g}_\eta+ \widetilde{g}_\sigma) \ast \phi_L  - \omega ((\widetilde{g}_\eta+ \widetilde{g}_\sigma) \ast \phi_L)'(\omega)}_{\Lb^2(\Omega)}^2 \\
		&\ :=(A) +(B) \, .
	\end{align*}
\end{small}
	From the proof of Theorem \ref{thm:DilMRA_PS}, 
	\begin{align*}
		&\Ex[(A)] \lesssim \\
		&\ \ \frac{\eta^2}{M} \left(\norm{(Pf)(\omega)}_2^2+\norm{\omega(Pf)'(\omega)}_2^2 +\norm{\omega^2(Pf)''(\omega)}_2^2\right) + r\, ,
	\end{align*} 
	where $r = C_f \eta^4/M$ for a constant $C_f$ depending on $f$. It remains to control (B). We have
	\begin{align*}
		(B) 
		&\lesssim \norm{\widetilde{g}_{\eta} - \widetilde{g}_{\eta} \ast \phi_L}_2^2 + \norm{\omega\widetilde{g}_{\eta}' - \omega(\widetilde{g}_{\eta} \ast \phi_L)'}_2^2 \\
		&\quad + \norm{\widetilde{g}_{\sigma} \ast \phi_L}_{\Lb^2(\Omega)}^2 + \norm{\omega (\widetilde{g}_{\sigma}\ast\phi_L)'}_{\Lb^2(\Omega)}^2 \\
		&:=  (\text{I}) + (\text{II}) + (\text{III}) + (\text{IV})\, .
	\end{align*}
	We control (I) with Lemma \ref{lem:approx_identity_precise} and (II) with Lemma \ref{lem:approx_identity_hw}; we note in both cases $\alpha$ can be chosen arbitrarily large since the signals have compact support. 
	By Lemma \ref{lem:approx_identity_precise},
	\begin{align*}
		(\text{I}) &= \norm{\widetilde{g}_{\eta} - \widetilde{g}_{\eta} \ast \phi_L}_2^2 \lesssim L^4 \norm{\widetilde{g}_{\eta}}_2^2 \lesssim L^4 \norm{Pf}_2^2 \, ,
	\end{align*}
	since $\|Pf_j\|_2 =(1-\tau_j)^{\frac{3}{2}}\|Pf\|_2 \leq (\frac{3}{2})^{\frac{3}{2}}\|Pf\|_2$.
	By Lemma \ref{lem:approx_identity_hw},
	\begin{align*}
		(\text{II}) &= \norm{\omega\widetilde{g}_{\eta}' - \omega(\widetilde{g}_{\eta} \ast \phi_L)'}_2^2 \\
		&\lesssim L^4 \norm{\widetilde{g}_{\eta}''}_2^2 + L^4 \norm{\omega\widetilde{g}_{\eta}'(\omega)}_2^2 + L^4 \\
		&\lesssim  L^4 \left(\norm{(Pf)''}_2^2 + \norm{\omega(Pf)'(\omega)}_2^2 + 1\right)\, .
	\end{align*}
	For (III), note that by Young's Convolution Inequality
	\begin{align*}
		&\norm{\widetilde{g}_{\sigma} \ast \phi_L}_{\Lb^2(\Omega)}^2 \leq  \norm{\phi_L}_1^2 \cdot \norm{\widetilde{g}_{\sigma}}_{\Lb^2(\Omega)}^2 \\ &\quad=\norm{\widetilde{g}_{\sigma}}_{\Lb^2(\Omega)}^2 \\
		&\quad\lesssim \norm[\Big]{ \frac{1}{M} \sum_{j=1}^M  \widehat{f}_{j}\widehat{\epsilon}_j^*}_2^2 + \norm[\Big]{\frac{1}{M}\sum_{j=1}^M P\epsilon_j- \sigma^2}_{\Lb^2(\Omega)}^2 \, .
	\end{align*}
	We have
	\begin{align*}
		\Ex\left[\norm{ \frac{1}{M} \sum_{j=1}^M  \widehat{f}_{j}\widehat{\epsilon}_j^*}_2^2 \right] &= \int \Ex\left( \frac{1}{M} \sum_{j=1}^M  \widehat{f}_{j}(\omega)\widehat{\epsilon}_j^*(\omega)\right)^2\ d\omega \\
		&\leq \int \frac{1}{M^2} \sum_{j=1}^M \widehat{f}_{j}(\omega)^2 \sigma^2\ d\omega \\
		&\lesssim \frac{\sigma^2}{M}\norm{f}_2^2 \, .
	\end{align*}
	Since $\Ex[P\epsilon_j] = \sigma^2$, $\Ex[(P\epsilon_j)^2] \leq 3\sigma^4$ (see Lemma D.1 in \cite{hirn2020wavelet}), one has
	\begin{align*}
		\Ex\left(\frac{1}{M}\sum P\epsilon_j- \sigma^2\right)^2 &= \frac{\text{var}(P\epsilon_j)}{M} \leq \frac{3\sigma^4}{M} \, ,
	\end{align*}
	which implies
	\begin{align*}
	\Ex\left[\norm{\frac{1}{M}\sum P\epsilon_j- \sigma^2}_{\Lb^2(\Omega)}^2\right] &\lesssim |\Omega|\frac{\sigma^4}{M}\, .
	\end{align*}
	Thus
	\begin{align*}
		\Ex\left[ (\text{III})\right] &\lesssim   \frac{\sigma^2}{M}\left(\norm{f}_2^2+ |\Omega|\sigma^2\right)\, .
	\end{align*}
	For (IV), note that since $\norm{\phi_L'}_1^2 \sim L^{-2}$,
	\begin{align*}
		\norm{\omega (\widetilde{g}_{\sigma}\ast\phi_L)'}_{\Lb^2(\Omega)}^2 &\leq |\Omega|^2\, \norm{\widetilde{g}_{\sigma}\ast\phi_L'}_{\Lb^2(\Omega)}^2 \\
		&\leq |\Omega|^2\norm{\phi_L'}_1^2  \norm{\widetilde{g}_{\sigma}}_{\Lb^2(\Omega)}^2 \\
		&\lesssim \frac{|\Omega|^2}{L^2}\norm{\widetilde{g}_{\sigma}}_{\Lb^2(\Omega)}^2 \, ,
	\end{align*}
	so that utilizing our previous bound for $\Ex \left[\norm{\widetilde{g}_{\sigma}}_{\Lb^2(\Omega)}^2\right]$ one obtains
	\begin{align*}
		\Ex\left[ (\text{IV})\right] &\lesssim   \frac{|\Omega|^2\sigma^2}{L^2M}\left(\norm{f}_2^2+ |\Omega|\sigma^2\right) \, .
	\end{align*}
	Adding up the error terms:
	\begin{align*}
		&\norm{Pf - \widetilde{Pf}}_{\Lb^2(\Omega)}^2 \\
		&\quad\lesssim  \frac{\eta^2}{M} \left(\norm{(Pf)(\omega)}_2^2+\norm{\omega(Pf)'(\omega)}_2^2 +\norm{\omega^2(Pf)''(\omega)}_2^2\right) \\
		&\qquad+ r + L^4 \left(\norm{Pf}_2^2+\norm{(Pf)''}_2^2 + \norm{\omega(Pf)'(\omega)}_2^2+1\right) \\
		&\qquad+ \frac{|\Omega|^2\sigma^2}{L^2M}\left(\norm{f}_2^2+ |\Omega|\sigma^2\right) \\
		&\quad\lesssim C_{f, \Omega}\left(  \frac{\eta^2}{M} +L^4 + \frac{\sigma^2\vee\sigma^4}{L^2M}\right) \, ,
	\end{align*}
	which proves the theorem.
\end{proof}

To minimize the error upper bound in Theorem \ref{thm:main}, we balance the last two terms, i.e. we choose $L$ such that $L^4 \sim \frac{\sigma^2 \vee \sigma^4}{L^2M}$. In the high noise regime where $\sigma \geq 1$, this gives $L \sim \left(\frac{\sigma^4}{M}\right)^{\frac{1}{6}}$, which yields the following important corollary.

\begin{cor}
	\label{cor:error_for_best_L}
	Let the assumptions of Theorem \ref{thm:main} hold and in addition let $\sigma \geq 1$ and $L = \left(\frac{\sigma^4}{M}\right)^{\frac{1}{6}}$. Then:
	\begin{align*}
		\Ex\left[\norm{Pf - \widetilde{Pf}}_{\Lb^2(\Omega)}^2\right] &\lesssim C_{f, \Omega}\left[  \frac{\eta^2}{M} +\left(\frac{\sigma^4}{M}\right)^{\frac{2}{3}} \right] \,.
	\end{align*} 
\end{cor}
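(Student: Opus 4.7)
The plan is to obtain the corollary as a direct substitution into the bound of Theorem \ref{thm:main}, since the hypothesis $\sigma \geq 1$ and the prescribed choice of $L$ are precisely the values that balance the two $L$-dependent terms in that bound.

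First I would invoke Theorem \ref{thm:main} to write
\begin{align*}
	\Ex \left[\norm{Pf - \widetilde{Pf}}_{\Lb^2(\Omega)}^2\right] &\lesssim C_{f, \Omega}\left(  \frac{\eta^2}{M} +L^4 + \frac{\sigma^2\vee\sigma^4}{L^2M}\right) \, .
\end{align*}
Since by assumption $\sigma \geq 1$, the maximum $\sigma^2 \vee \sigma^4$ collapses to $\sigma^4$, so the bound simplifies to $C_{f,\Omega}\left(\eta^2/M + L^4 + \sigma^4/(L^2 M)\right)$. This is the object to minimize in $L$; the $\eta^2/M$ term is independent of $L$ and passes through unchanged.

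Next, I would substitute $L = (\sigma^4 / M)^{1/6}$ into the remaining two terms. A direct calculation gives $L^4 = (\sigma^4/M)^{2/3}$, and
\begin{align*}
	\frac{\sigma^4}{L^2 M} = \frac{\sigma^4}{M} \cdot \left(\frac{M}{\sigma^4}\right)^{1/3} = \left(\frac{\sigma^4}{M}\right)^{2/3} \, ,
\end{align*}
so the two terms are in fact balanced at this choice of $L$, each contributing $(\sigma^4/M)^{2/3}$. Combining yields the stated bound.

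There is no real obstacle here: the corollary is essentially a bookkeeping step confirming that the $L$ prescribed is optimal (up to constants) for the bound of Theorem \ref{thm:main} in the regime $\sigma \geq 1$. The only minor observation is that one must verify $L$ is small enough for Theorem \ref{thm:main} (which implicitly used Lemmas \ref{lem:approx_identity_precise} and \ref{lem:approx_identity_hw}, each requiring $L$ sufficiently small); this holds provided $M$ is large relative to $\sigma^4$, which is the regime of interest.
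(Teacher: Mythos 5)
Your proposal is correct and matches the paper's (implicit) argument exactly: the paper derives the corollary by balancing the last two terms of Theorem \ref{thm:main}, noting that $\sigma \geq 1$ gives $\sigma^2 \vee \sigma^4 = \sigma^4$, and substituting $L = (\sigma^4/M)^{1/6}$ so that $L^4$ and $\sigma^4/(L^2 M)$ both equal $(\sigma^4/M)^{2/3}$. Your added remark about $L$ needing to be small enough (i.e.\ $M$ large relative to $\sigma^4$) is a reasonable observation consistent with the paper's use of Lemmas \ref{lem:approx_identity_precise} and \ref{lem:approx_identity_hw}.
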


\begin{rmk}
	The inversion unbiasing procedure can also be directly applied to the wavelet-based features $(Sy)(\lambda)=\norm{y\ast \psi_{\lambda}}_2^2$, where $\psi_\lambda(x)=\sqrt{\lambda}\psi(\lambda x)$ is a wavelet with frequency $\lambda$, proposed in \cite{hirn2020wavelet} for solving Model \ref{model:genMRA}. Because these features are smooth by design, no additional smoothing is necessary, and when $\sigma\geq 1$ this will yield an estimator $\widetilde{Sf}$ with error
	\begin{align*}
		\Ex\left[\norm{Sf - \widetilde{Sf}}_{\Lb^2(\Omega)}^2\right] &\lesssim C_{f,\Omega}\left[  \frac{\eta^2}{M} +\frac{\sigma^4}{M} \right] \,.
	\end{align*} 
	The additive noise convergence rate for the wavelet-based features is slightly better than the convergence rate for the power spectrum given in Corollary \ref{cor:error_for_best_L}.
	A power spectrum estimator $\widetilde{Pf}$ can then be obtained from $\widetilde{Sf}$, since the wavelet-based features are defined by an invertible operator on the power spectrum. However, this inversion process is highly unstable numerically, as its accuracy is governed by the smallest eigenvalue of a low rank matrix. In practice, applying inversion unbiasing directly to the power spectrum yielded a lower error in our numerical experiments.  
\end{rmk}

\section{Optimization}
\label{sec:opt}

To actually compute the estimator \eqref{equ:FiniteSamplePSest_genMRA}, one must apply the inverse operator $(I-{L_{C_0}})^{-1}$. A simple formula for this inversion is unavailable; however it is straightforward to compute the estimators by solving a convex optimization problem. In the infinite sample limit, one has access to the perfect data term
\begin{align*}
	d(\omega) = 3g_\eta(\omega)+\omega g_\eta'(\omega) \, ,
\end{align*}   
and Proposition \ref{prop:InfiniteSampleRecoveryPS} guarantees that $g=Pf$ can be recovered from $d$ by
\begin{align*}
	g &= \argmin_{\varg}\ \norm[\big]{(I-L_{C_0})\varg -C_1L_{C_2}d }_2^2 \, ,
\end{align*}
where the constants $C_i$ depend on $\eta$. In practice the variation parameter $\eta$ may be unknown, so the relevant loss function is: 
\begin{align*}
	\mathcal{L}(\varg,\vareta) &= \norm[\big]{ \left(I-L_{C_0(\vareta)}\right)\varg - C_1(\vareta)L_{C_2(\vareta)}d}_2^2 \, .
\end{align*}
The following Proposition guarantees that the infinite sample loss function $\mathcal{L}$ has a unique critical point, and thus that $g=Pf$ can be recovered by minimizing $\mathcal{L}$. 

\begin{prop}
	\label{prop:uniqueCP}
	Let $g \in \Cb^2(\mathbb{R}), \eta>0$ be the true power spectrum and dilation standard deviation, and assume $g(0)\ne 0, g''(0) \ne 0$. Then $(g,\eta)$ is the only critical point of $\mathcal{L}(\varg,\vareta)$ in $(\Cb^2(\mathbb{R}), \R^{+})$.
\end{prop}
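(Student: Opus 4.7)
My plan is to set the two partial derivatives of $\mathcal{L}$ to zero in turn, reducing uniqueness to constraints at $\omega = 0$ where the hypotheses $g(0) \neq 0, g''(0) \neq 0$ become decisive. First I set $\partial_\varg \mathcal{L}(\varg,\vareta) = 0$. Writing $\mathcal{L}(\varg, \vareta) = \norm{F(\varg, \vareta)}_2^2$ with $F(\varg,\vareta) = (I - L_{C_0(\vareta)})\varg - C_1(\vareta) L_{C_2(\vareta)} d$, the computation in the proof of Lemma \ref{lem:SimplifyError} gives $\norm{L_{C_0(\vareta)}}_{L^2 \to L^2} = C_0(\vareta)^{5/2} < 1$, so $I - L_{C_0(\vareta)}$ is invertible on $L^2(\R)$ via the Neumann series; its adjoint is therefore also invertible, and the stationarity condition collapses to $F(\varg, \vareta) = 0$, i.e.,
\begin{equation*}
(I - L_{C_0(\vareta)}) \varg = C_1(\vareta) L_{C_2(\vareta)} d \, .
\end{equation*}
Reversing the derivation of Proposition \ref{prop:InfiniteSampleRecoveryPS} (undoing the step $3 h + \omega h'(\omega) = \omega^{-2}(\omega^3 h)'$ and using $\varg \in \Cb^2(\R)$ to kill the $\omega^{-3}$ integration constant) recasts this as the moment identity $\Ex_{\widetilde{\tau}}[(1-\widetilde{\tau})^2 \varg((1-\widetilde{\tau})\omega)] = g_\eta(\omega)$ with $\widetilde{\tau} \sim \mathrm{Unif}(-\sqrt{3}\vareta, \sqrt{3}\vareta)$.

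Next I would Taylor expand this moment identity around $\omega = 0$ and match coefficients. Computing $\Ex[(1-\widetilde{\tau})^{k+2}]$ directly from the uniform density, the $\omega^0$ and $\omega^2$ coefficients give
\begin{align*}
\varg(0)\,(1 + \vareta^2) &= g(0)(1 + \eta^2), \\
\varg''(0)\!\left(\tfrac{1}{2} + 3\vareta^2 + \tfrac{9}{10}\vareta^4\right) &= g''(0)\!\left(\tfrac{1}{2} + 3\eta^2 + \tfrac{9}{10}\eta^4\right),
\end{align*}
together with an analogous relation for $\varg'(0)$ from the $\omega^1$ coefficient. I would then impose $\partial_\vareta \mathcal{L}(\varg,\vareta) = 0$, which upon differentiating the operator constants $C_i(\vareta)$ and contracting against the first stationarity condition reduces, after localising at $\omega = 0$ via the relations above, to an algebraic equation in $\vareta$ whose coefficients are proportional to $g(0)$ and $g''(0)$. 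The non-degeneracy hypotheses prevent vacuity, and strict monotonicity in $\vareta > 0$ of each of the factors $(1+\vareta^2)$ and $\tfrac{1}{2}+3\vareta^2+\tfrac{9}{10}\vareta^4$ then forces $\vareta = \eta$; Proposition \ref{prop:InfiniteSampleRecoveryPS} finally yields $\varg = g$.

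The hard part will be the final step: the first stationarity equation alone has, for each $\vareta > 0$, a unique $\Cb^2$ solution $\varg^*(\vareta)$, and the entire curve $\{(\varg^*(\vareta), \vareta)\}_{\vareta > 0}$ is a zero-set of $\mathcal{L}$, so $\partial_\vareta \mathcal{L}$ vanishes along it by the chain rule. The identification $\vareta = \eta$ must therefore draw on the specific origin behaviour encoded in the Taylor constraints, combined with some additional structural input (for instance, consistency of the $\omega^0$, $\omega^1$, and $\omega^2$ relations as an over-determined system in a single candidate $\varg$). This is precisely where $g(0) \neq 0$ and $g''(0) \neq 0$ become indispensable, ensuring that these origin-level relations carry nontrivial information about $\vareta$ and cannot be satisfied by a spurious pair $(\varg, \vareta) \neq (g, \eta)$.
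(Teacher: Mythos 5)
Your reduction of $\nabla_\varg\mathcal{L}=0$ to the exact equation $(I-L_{C_0(\vareta)})\varg = C_1(\vareta)L_{C_2(\vareta)}d$, via invertibility of $I-L_{C_0}$ and of its adjoint, is exactly the paper's first step, and your two origin constraints are (up to harmless constant factors) the same ones the paper extracts: evaluating $B_\vareta\varg=B_\eta g$ and its second derivative at $\omega=0$ gives $(3+3\vareta^2)\varg(0)=(3+3\eta^2)g(0)$ and $(5+30\vareta^2+9\vareta^4)\varg''(0)=(5+30\eta^2+9\eta^4)g''(0)$, which match your $(1+\vareta^2)$ and $\tfrac12+3\vareta^2+\tfrac{9}{10}\vareta^4$ relations after multiplying by $3$ and $10$. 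The gap is the final step, and you have essentially diagnosed it yourself: these are two equations in the three unknowns $\varg(0),\varg''(0),\vareta$ and are solvable for every $\vareta>0$; moreover, as you observe, $\partial_\vareta\mathcal{L}$ vanishes automatically wherever the residual vanishes, so imposing $\partial_\vareta\mathcal{L}=0$ and localising at $\omega=0$ produces no new equation, and strict monotonicity in $\vareta$ of each factor would force $\vareta=\eta$ only if you already knew $\varg(0)=g(0)$ and $\varg''(0)=g''(0)$, which you do not. (The $\omega^1$ coefficient is likewise useless, since nothing is assumed about $g'(0)$.) So the proposal stops short of the statement's actual content.

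What closes the argument in the paper is one additional claim: writing $\varg(0)=K_1g(0)$ and $\varg''(0)=K_2g''(0)$, the paper asserts $K_1=K_2$, after which the single polynomial identity $(3+3\eta^2)(5+30\vareta^2+9\vareta^4)=(3+3\vareta^2)(5+30\eta^2+9\eta^4)$ factors as $(\vareta^2-\eta^2)\cdot(\text{positive quantity})=0$, forcing $\vareta=\eta$, and then $\varg=g$ follows from invertibility of $B_\eta$ --- the same endgame you envisaged. Your proposal never produces a relation coupling $K_1$ to $K_2$ (i.e., tying the $\omega^0$ and $\omega^2$ constraints to each other), and your closing paragraph explicitly defers this to unspecified ``additional structural input''; that is precisely the missing idea, so as written the proof is incomplete at the one step that carries the proposition. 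It is fair to add that the paper's own justification of $K_1=K_2$ is terse --- it is stated as a conclusion without an intervening argument --- so your difficulty sits exactly at the paper's decisive step; but a complete proof must supply that link, e.g., by exhibiting the extra constraint on $\varg$ that rules out the one-parameter family $\varg^*(\vareta)=(I-L_{C_0(\vareta)})^{-1}C_1(\vareta)L_{C_2(\vareta)}d$ of zero-loss points that you correctly identified.
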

\begin{proof}
	
	We first compute $\nabla_{\varg} \mathcal{L}(\varg,\vareta)$ and $\nabla_{\vareta} \mathcal{L}(\varg,\vareta)$. To compute $\nabla_{\varg} \mathcal{L}(\varg,\vareta)$, we first view $\vareta$ as fixed, and compute the Frechet derivative of $\mathcal{L}(\varg)$. Let $A = I-L_{C_0}$; throughout the proof, $A$ and the constants $C_i$ depend on $\vareta$ but for brevity we do not explicitly denote this dependence. Note
	\begin{align*}
		\mathcal{L}(\varg) &= \norm{A \varg - C_1L_{C_2}d}_2^2 = N(A \varg) \, ,
	\end{align*} 
	where $Nf = \norm{f-C_1L_{C_2}d}_2^2$.
	Thus by the chain rule, the functional derivative at $\varg$ applied to a test function $h$ is
	\begin{align*}
		(D\mathcal{L})(\varg)h &= (DN)(A \varg) \circ D(A \varg)h=  (DN)(A \varg) \circ A h
	\end{align*}
	since $A$ is a linear operator. To compute $DN$, note that
	\begin{align*}
		\frac{|N(f+h)-Nf-2\langle f-C_1L_{C_2}d,h\rangle|}{\norm{h}_2} &= \frac{\norm{h}_2^2}{\norm{h}_2} \rightarrow 0
	\end{align*}
	as $\norm{h}_2 \rightarrow 0 $,
	so $(DN)(f)h = 2\langle f-C_1L_{C_2}d, h\rangle$. Thus 
	\begin{align*}
		(D\mathcal{L})(\varg)h &= 2\langle A \varg-C_1L_{C_2}d, A h\rangle \\
		&= \langle 2A^*(A \varg-C_1L_{C_2}d),  h\rangle \\
		\implies \nabla \mathcal{L}(\varg) &= 2A^*(A \varg-C_1L_{C_2}d) \, .
	\end{align*}
	We thus have 
	\begin{align*}
		&\nabla_{\varg} \mathcal{L}(\varg,\vareta)  = 2A^*(A \varg-C_1L_{C_2}d) \\
		&\nabla_{\vareta} \mathcal{L}(\varg,\vareta) = \\
		&\ \int 2(A \varg(\omega) - C_1L_{C_2}d(\omega))\frac{\partial}{\partial \vareta}(A \varg(\omega) - C_1L_{C_2}d(\omega)) \ d\omega \, .
	\end{align*}
	
	Since as demonstrated in the previous section $A g =C_1L_{C_2}d$ when $\vareta=\eta$, $\nabla_{\varg} \mathcal{L}(g,\eta) =\nabla_{\vareta} \mathcal{L}(g,\eta)=0$, and $(g,\eta)$ is a critical point of $\mathcal{L}$. We now show $(g,\eta)$ is the only critical point. 
	
	Assume $(\varg,\vareta)$ is a critical point. Then $2A^*(A \varg-C_1L_{C_2}d)=0$.  Since $C_0<1$, $A=I-L_{C_0}$ is invertible as was previously argued; thus its adjoint $A^*$ is also invertible, and $A \varg = C_1L_{C_2}d$ in $\Lb^2$. Since $L_{C_2}$ is a dilation operator and thus invertible, $B_{\vareta} \varg = d$ in $\Lb^2$, where $B_{\vareta} = C_1^{-1}L_{C_2}^{-1}A = C_1^{-1}L_{C_2}^{-1}(I-L_{C_0})$.
	Next we show that if $B_{\vareta} \varg = B_{\eta}g$ in $\Lb^2$, we must have $(\varg,{\vareta})=(g,\eta)$.
	It is easy to check from our definition of $C_0,C_1,C_2$ that
	\begin{align*}
		(B_{\vareta}\varg)(\omega) &= \frac{(1+\sqrt{3}\vareta)^3}{2\sqrt{3}\vareta} \varg\left((1+\sqrt{3}\vareta)\omega\right) \\
		&\quad - \frac{(1-\sqrt{3}\vareta)^3}{2\sqrt{3}\vareta} \varg\left((1-\sqrt{3}\vareta)\omega\right) \, .
	\end{align*}
	Note that
	\begin{align*}
		(B_{\vareta}\varg)(0) &=\frac{1}{2\sqrt{3}\vareta}\left((1+\sqrt{3}\vareta)^3-(1-\sqrt{3}\vareta)^3\right)\varg(0) \\
		&= \left(3+3\vareta^2\right)\varg(0)
	\end{align*}
	and similarly for $(B_{\eta}g)(0)$. Since the functions are equal in $\Lb^2$ and continuous, we must have
	\begin{align*}
		\left(3+3\vareta^2\right)\varg(0) &= \left[3+3\eta^2\right]g(0) \, .
	\end{align*}
	In addition $(B_{\vareta}\varg)''(0)$ satisfies
	\begin{align*}
		(B_{\vareta}\varg)''(0) &= \frac{1}{2\sqrt{3}\vareta}\left((1+\sqrt{3}\vareta)^5-(1-\sqrt{3}\vareta)^5\right)\varg''(0) \\
		&=(5+30\vareta^2+9\vareta^4)\varg''(0)
	\end{align*}
	and similarly for $(B_{\eta}g)''(0)$. Again since the functions are equal in $\Lb^2$  and continuously differentiable, we must have
	\begin{align*}
		(5+30\vareta^2+9\vareta^4)\varg''(0) &= [5+30(\eta)^2+9(\eta)^4]g''(0) \, .
	\end{align*}
	So
	\begin{align*}
		\varg(0) &= K_1 g(0) \\
		\varg''(0) &= K_2 g''(0) 
	\end{align*}
	for constants $K_1,K_2>0$ depending on $\vareta,\eta$. We conclude we must have $K_1=K_2$. So
	\begin{align*}
		K_1 = K_2
		\iff& \frac{3+3(\eta)^2}{3+3\vareta^2} = \frac{5+30(\eta)^2+9(\eta)^4}{5+30\vareta^2+9\vareta^4} \\
		\iff& \left[3+3(\eta)^2\right](5+30\vareta^2+9\vareta^4) \\
		&\quad = (3+3\vareta^2)[5+30(\eta)^2+9(\eta)^4] \, .
	\end{align*}
	Since $\vareta=\eta$ is the only real, positive solution of the above, we conclude $\vareta=\eta$. Thus $B_{\eta}g = B_{\eta}\varg $. Since $B_{\eta}$ is invertible, we conclude that $\varg=g$, and the Proposition is proved.
\end{proof}
In practice one only has access to the finite sample data term and loss function:
\begin{align*}
	\widetilde{d}(\omega) &:= 3 (\widetilde{g}_\eta+ \widetilde{g}_\sigma)\ast\phi_L(\omega)+\omega\left[(\widetilde{g}_\eta+ \widetilde{g}_\sigma)\ast\phi_L'\right](\omega) \\ 
	\widetilde{\mathcal{L}}(\varg,\vareta) &:= \norm[\big]{ \left(I-L_{C_0(\vareta)}\right)\varg - C_1(\vareta)L_{C_2(\vareta)}\widetilde{d}}_2^2\, ,
\end{align*}
and the estimator \eqref{equ:FiniteSamplePSest_genMRA} is computed by minimizing $\widetilde{\mathcal{L}}$. However, as $M \rightarrow \infty$, Proposition \ref{prop:uniqueCP} guarantees the optimization procedure has a unique critical point and is thus well behaved. However for finite $M$, the optimization can be delicate: since  $\widetilde{\mathcal{L}}(\varg, 0)=0$ for any $\varg$, there is a large plateau defined by $\eta=0$ were loss values are small even for $\varg$ very far from $Pf$. It thus becomes necessary to constrain $\eta$ to be bounded away from 0; Section \ref{sec:simu_results} describes specific implementation details.

\begin{rmk}
	If $\eta$ is known so the optimization is just over $g$, the optimization is convex.
\end{rmk}	

\begin{rmk}
	 In practice we define $\varrootg=\sqrt{\varg}$, optimize over $\varrootg$ to obtain the optimal $p$, and then define $g = p^2$; such a procedure ensures $g$ is nonnegative without constraining $\varg$ in the optimization. Note to implement the minimization of $\widetilde{\mathcal{L}}(\varg,\vareta)$, one needs to compute $A^*$ for the operator $A=I-L_{C_0}$. A straightforward calculation shows $A^*h(\omega) = h(\omega) - C_0^2 h\left(\frac{\omega}{C_0}\right)$.
\end{rmk}

\section{Simulation Results}
\label{sec:simu_results}

\begin{figure}
	\centering
	\begin{subfigure}[b]{0.24\textwidth}
		\centering
		\includegraphics[width=.85\textwidth]{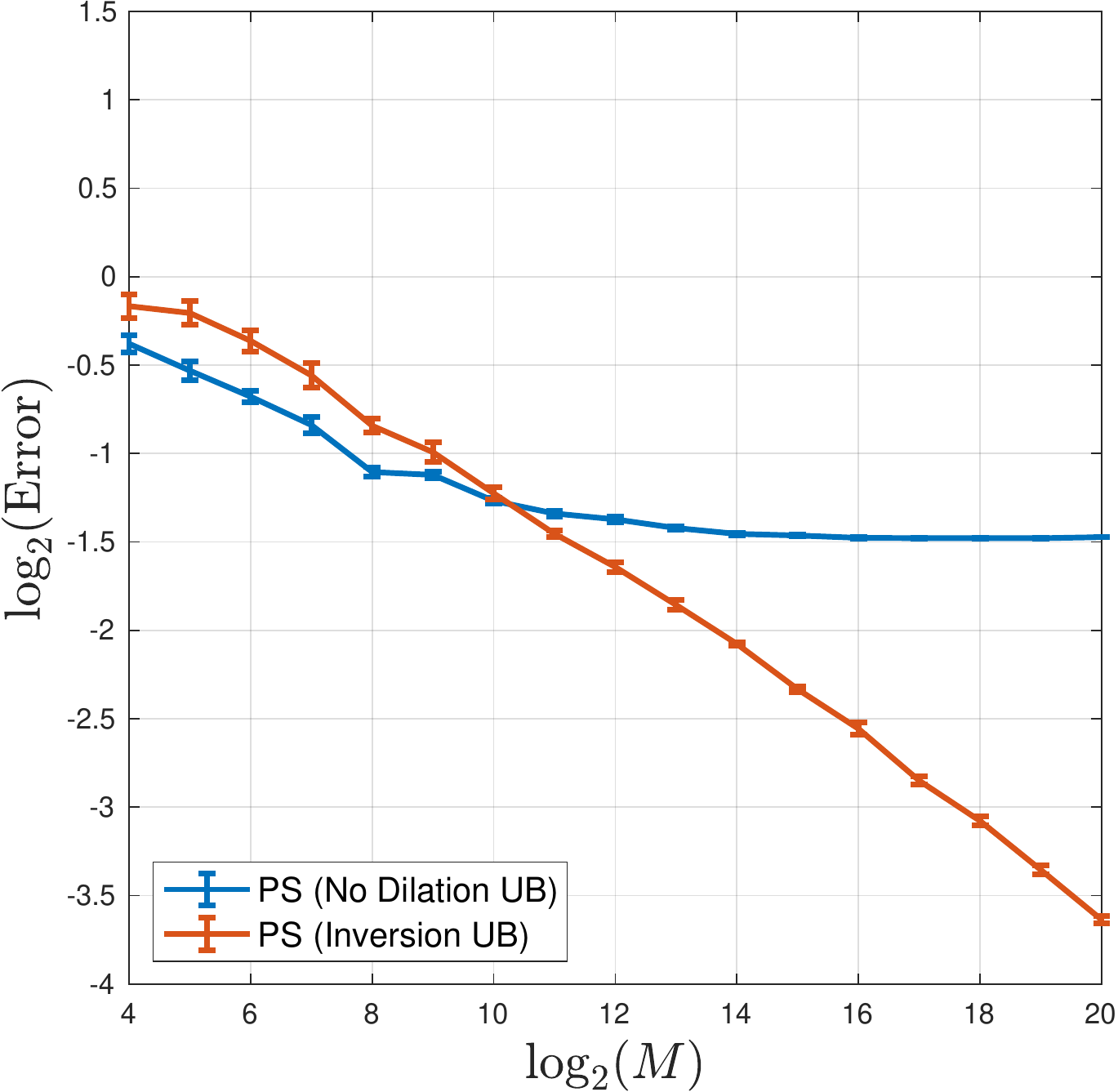}
		\caption{$f_1$ (slope $= -0.2498$)}
		\vspace*{.1cm}
		\label{fig:oracle_f1}
	\end{subfigure}
	\hfill
	\begin{subfigure}[b]{0.24\textwidth}
		\centering
		\includegraphics[width=.85\textwidth]{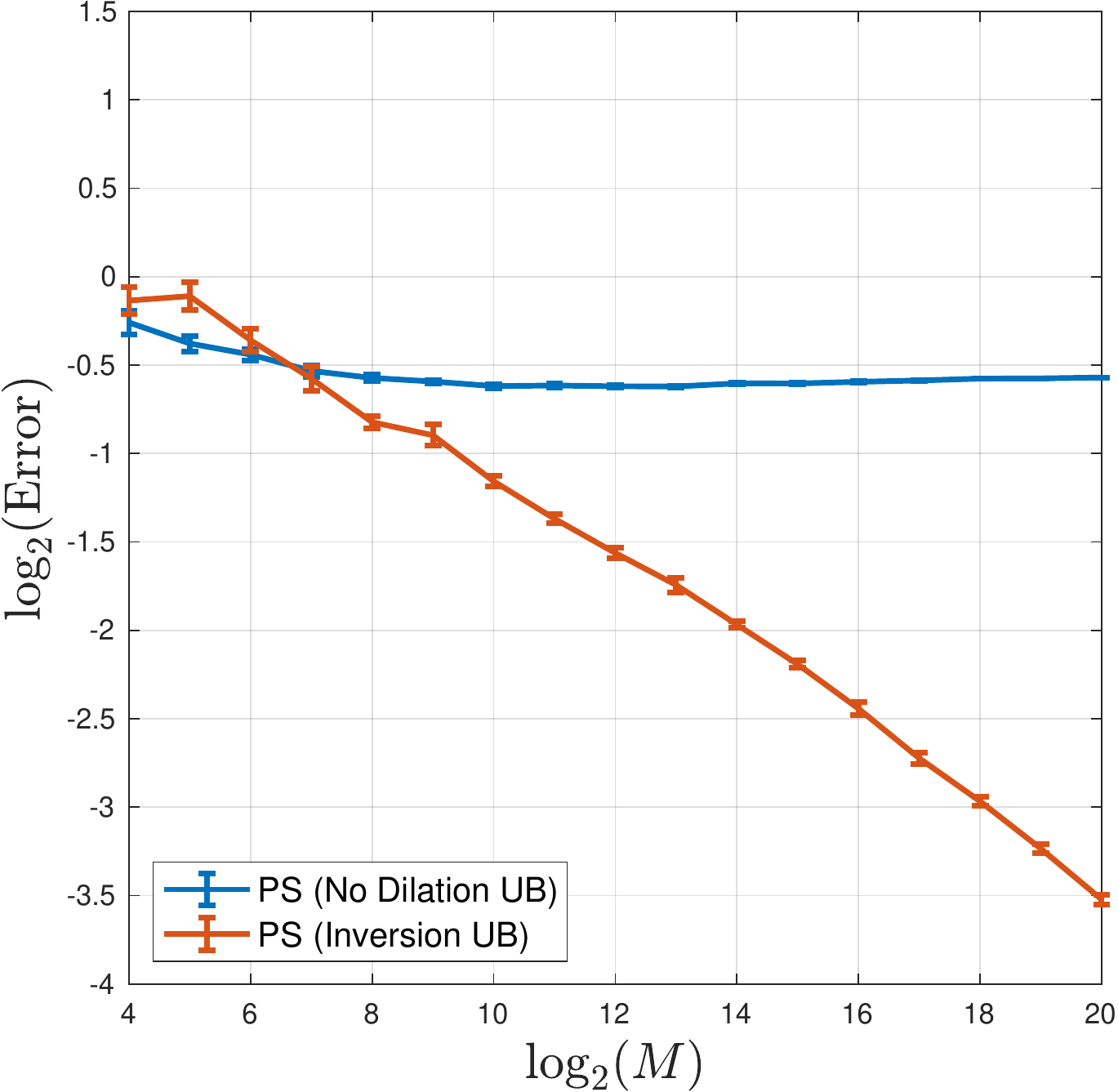}
		\caption{$f_2$ (slope $=-0.2473$)}
		\vspace*{.1cm}
		\label{fig:oracle_f2}
	\end{subfigure}
	\hfill
	\begin{subfigure}[b]{0.24\textwidth}
		\centering
		\includegraphics[width=.85\textwidth]{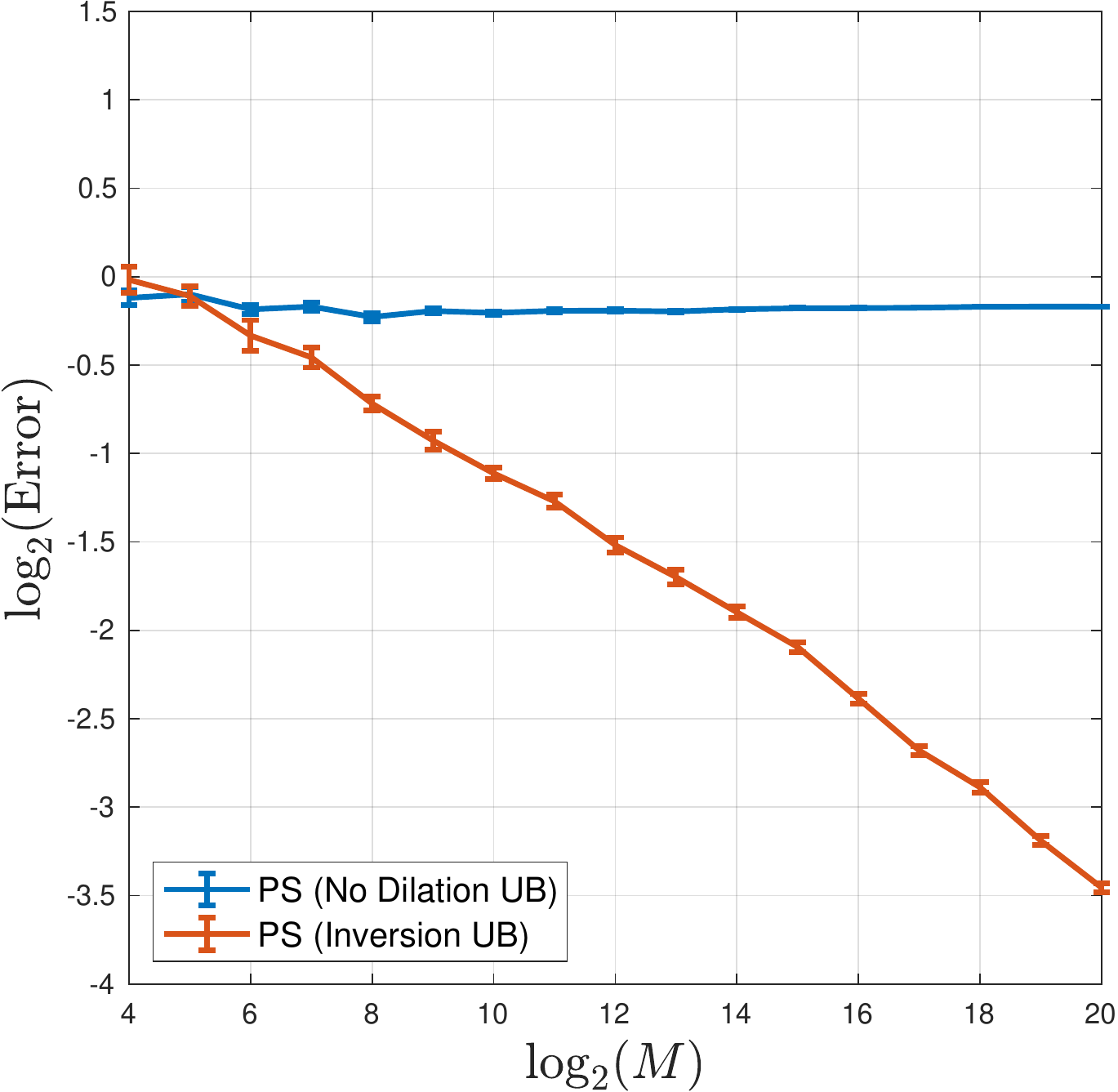}
		\caption{$f_3$ (slope $=-0.2464$)}
		\vspace*{.1cm}
		\label{fig:oracle_f3}
	\end{subfigure}
	\begin{subfigure}[b]{0.24\textwidth}
		\centering
		\includegraphics[width=.85\textwidth]{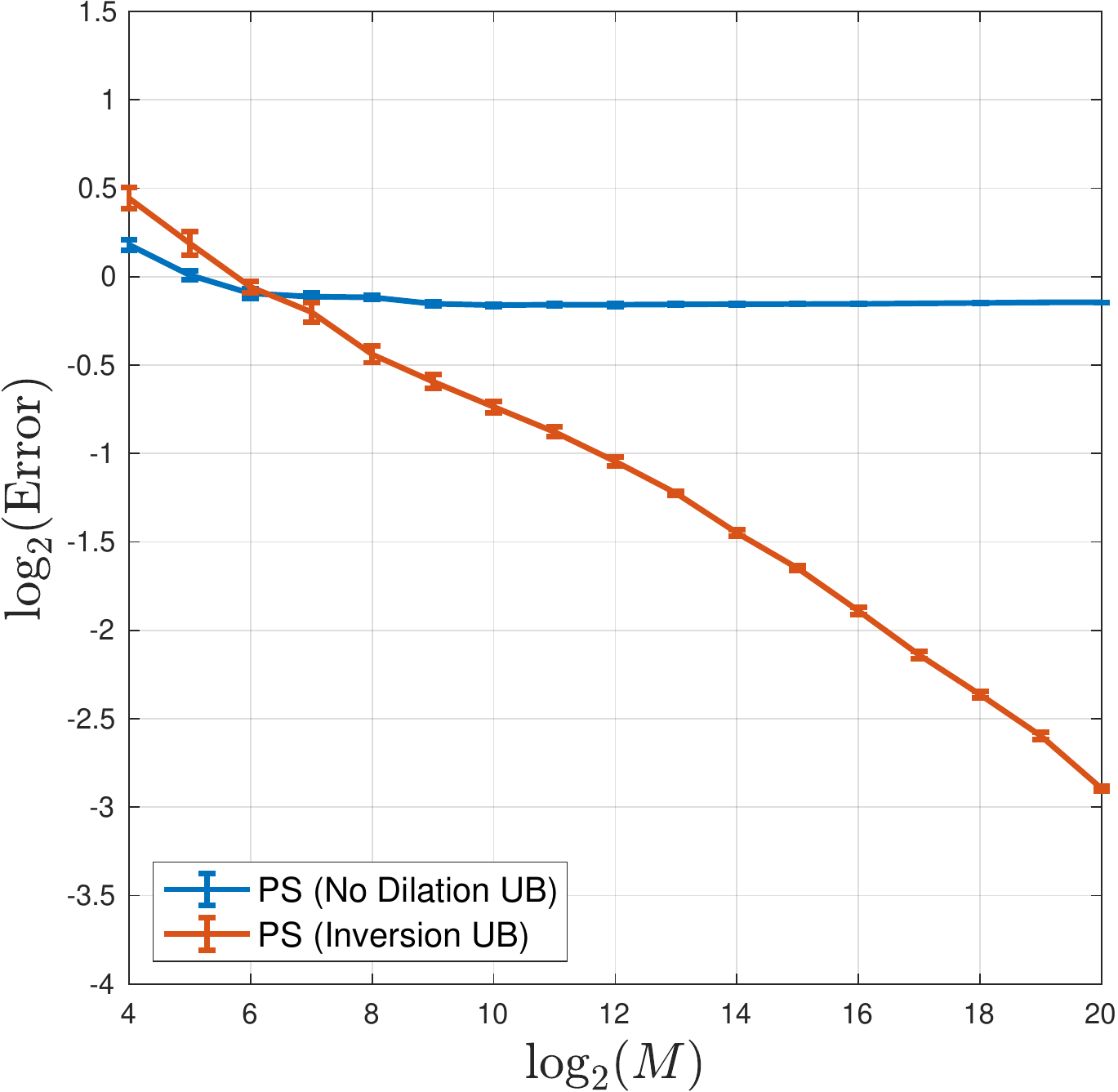}
		\caption{$f_4$ (slope $ = -0.2306$)}
		\vspace*{.1cm}
		\label{fig:oracle_f4}
	\end{subfigure}
	\hfill
	\begin{subfigure}[b]{0.24\textwidth}
		\centering
		\includegraphics[width=.85\textwidth]{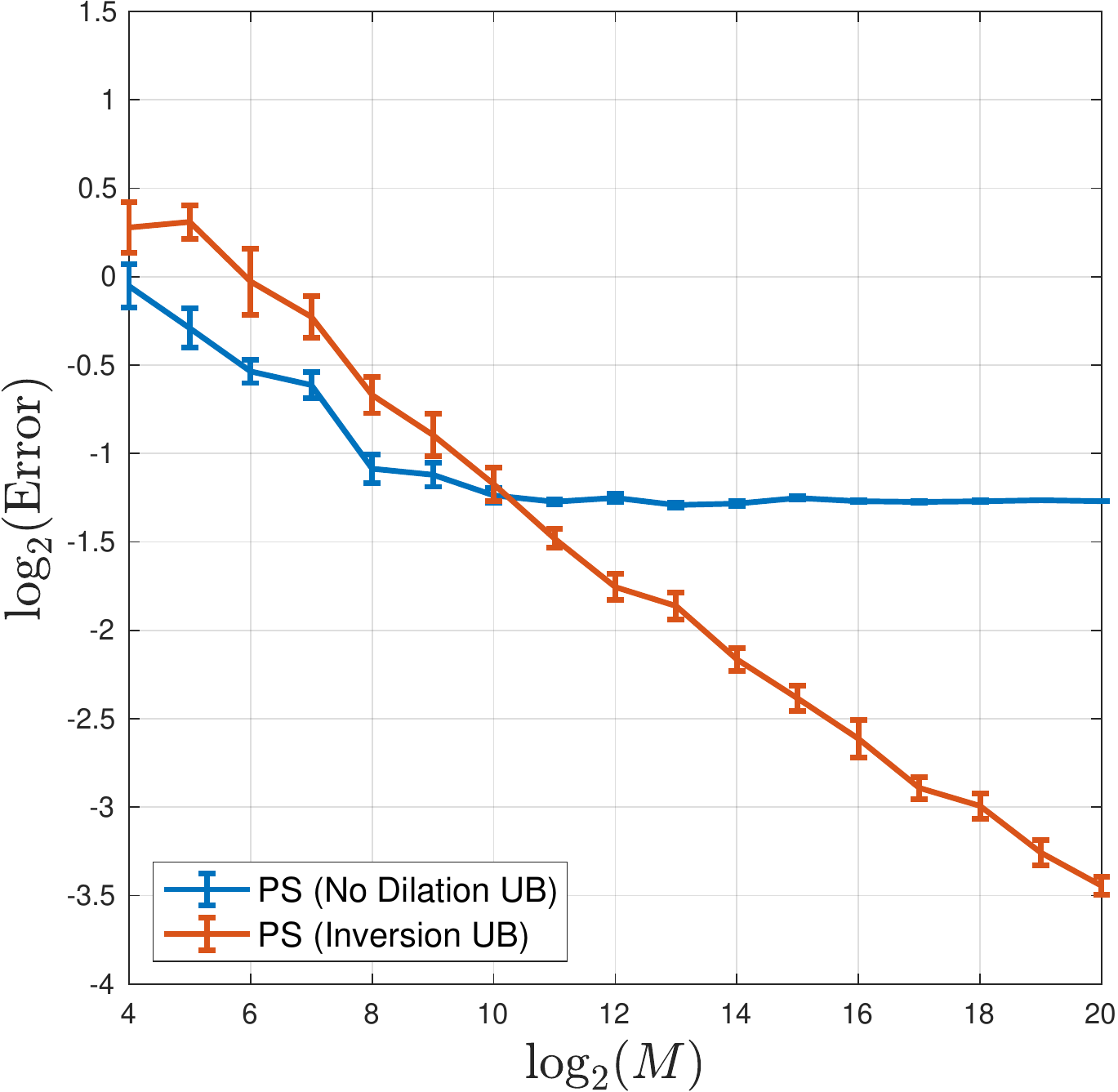}
		\caption{$f_5$ (slope $= -0.2184$)}
		\vspace*{.1cm}
		\label{fig:oracle_f5}
	\end{subfigure}
	\hfill
	\begin{subfigure}[b]{0.24\textwidth}
		\centering
		\includegraphics[width=.85\textwidth]{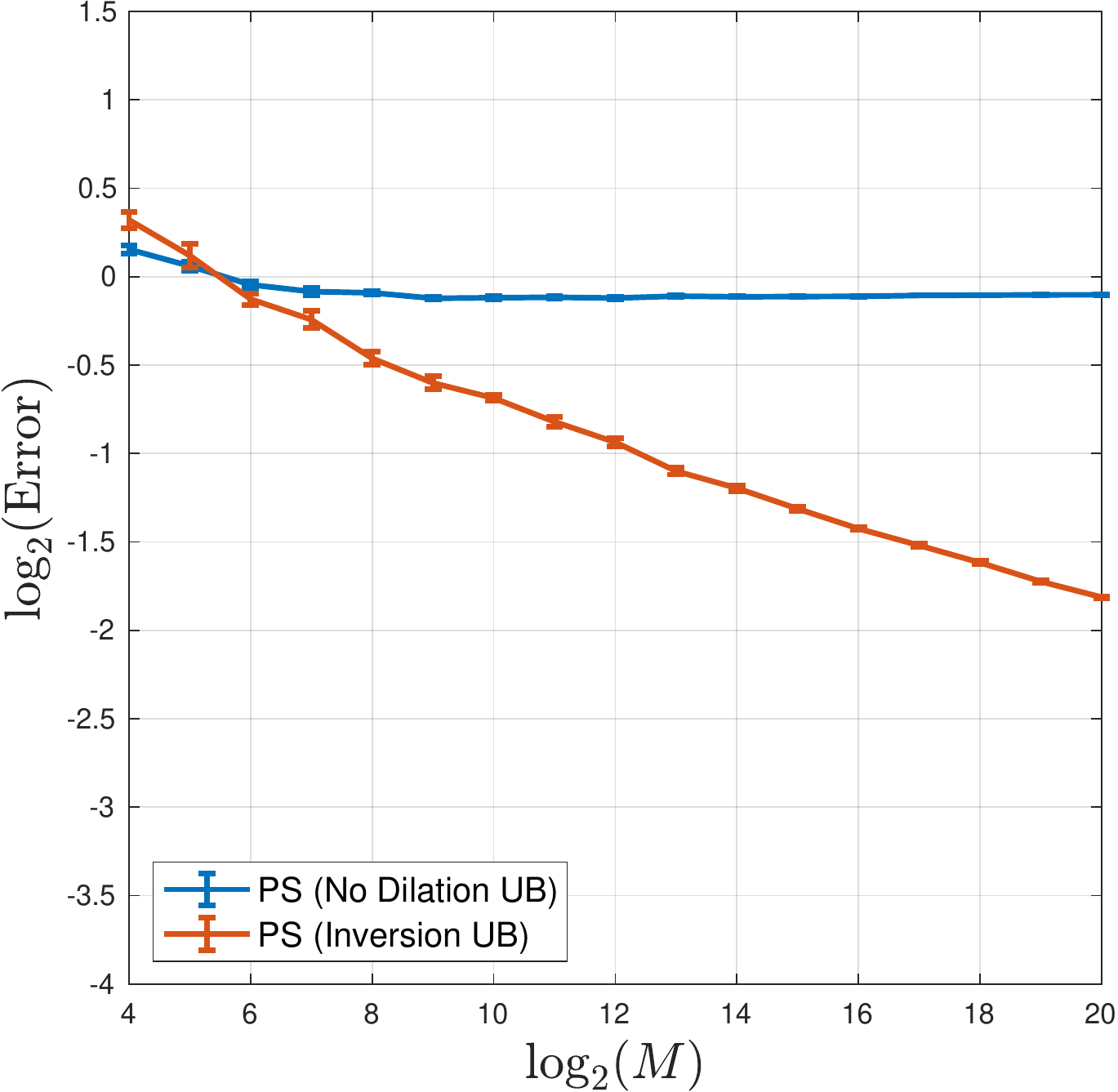}
		\caption{$f_6$ (slope $= -0.1071$)}
		\vspace*{.1cm}
		\label{fig:oracle_f6}
	\end{subfigure}
	\begin{subfigure}[b]{0.24\textwidth}
		\centering
		\includegraphics[width=.85\textwidth]{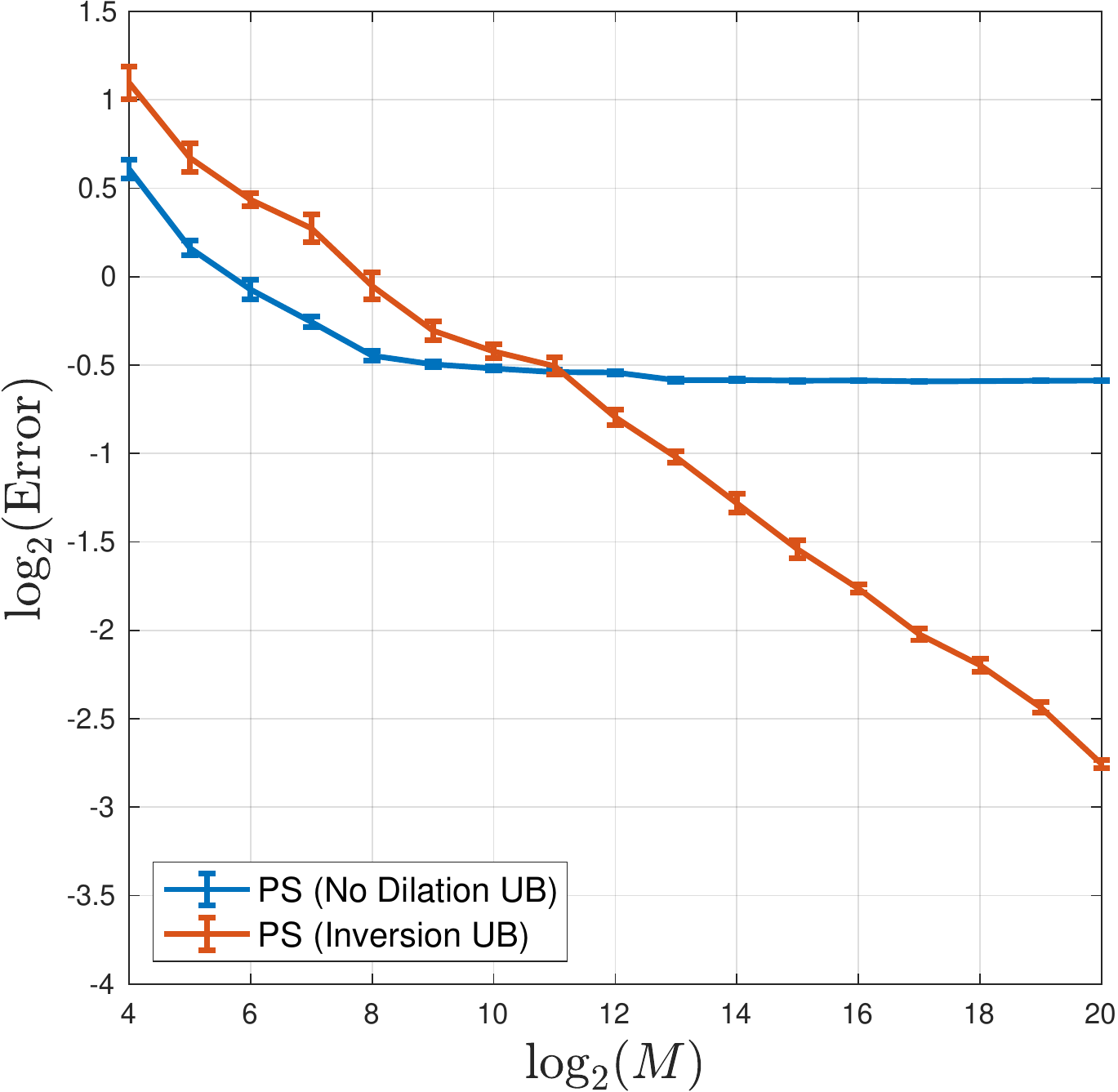}
		\caption{$f_7$ (slope $= -0.2400$)}
		\vspace*{.1cm}
		\label{fig:oracle_f7}
	\end{subfigure}
	\hfill
	\begin{subfigure}[b]{0.24\textwidth}
		\centering
		\includegraphics[width=.85\textwidth]{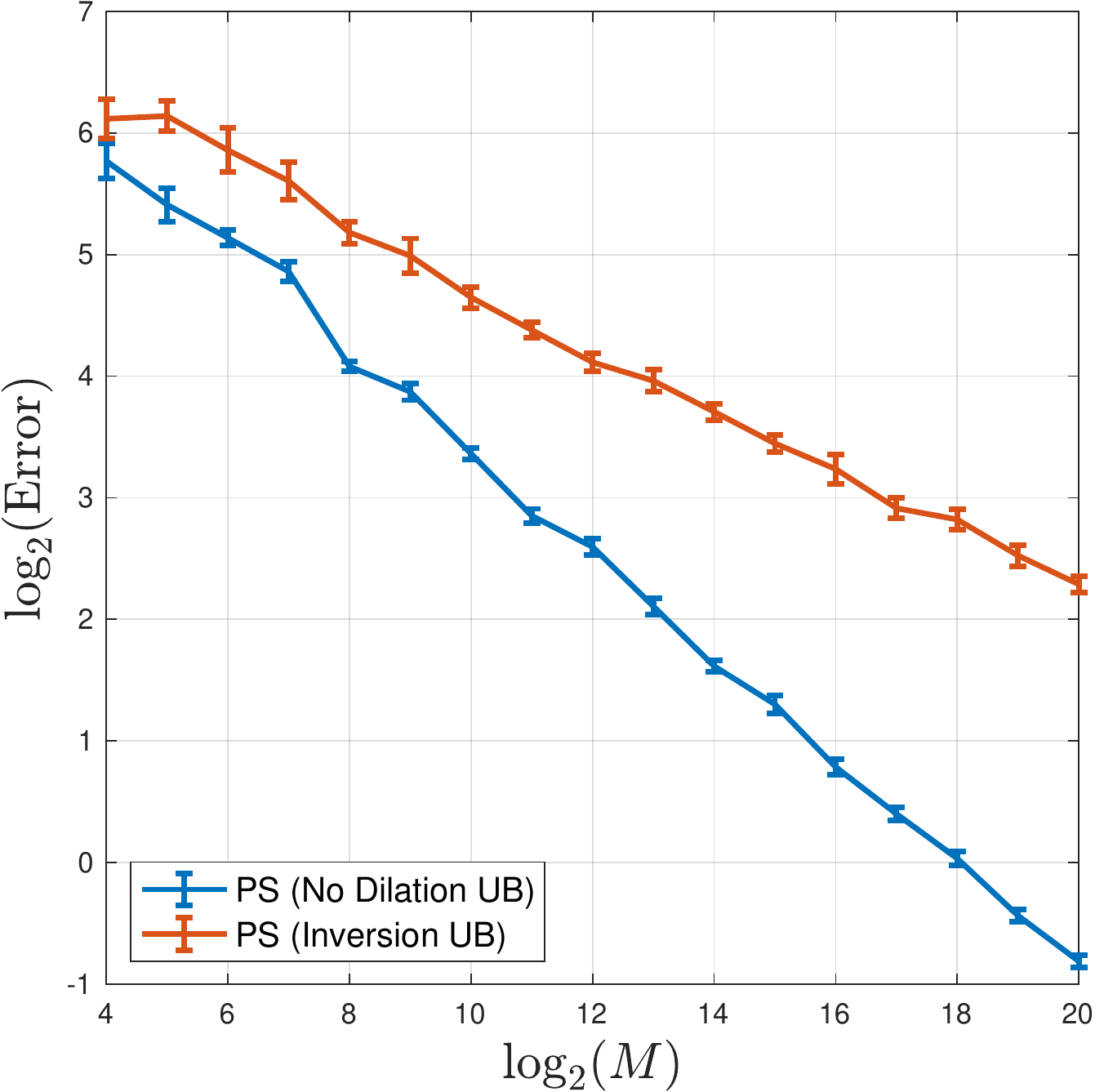}
		\caption{$f_8$ (slope $= -0.2320$)}
		\vspace*{.1cm}
		\label{fig:oracle_f8}
	\end{subfigure}
	\caption{Error decay with standard error bars for Model \ref{model:genMRA} (oracle moment estimation).  All plots show relative $\Lb^2$ error and have the same axis limits, except Figure \ref{fig:oracle_f8}, which shows absolute error. Reported slopes were computed by linear regression on the right half of the plot, i.e. for $12\leq \log_2(M)\leq 20$.}
	\label{fig:GenMRAModelOracle}
\end{figure}

In this section we investigate the proposed inversion unbiasing procedure on the following collection of synthetic signals which capture a variety of features:
\begin{align*}
	f_1(x) &= C_1 \exp^{-5x^2}\cos(8x) \\
	f_2(x) &= C_2 \exp^{-5x^2}\cos(16x)\\
	f_3(x) &= C_3 \exp^{-5x^2}\cos(32x)\\
	\widehat{f}_4(\omega) &= C_4 \left[\text{sinc}(0.2(\omega-32))+\text{sinc}(0.2(-\omega-32)) \right] \\
	f_5(x) &= C_5 \exp^{-0.04x^2}\cos(30x+1.5x^2) \\
	\widehat{f}_6(\omega) &= C_6 \left[\mathbf{1}(\omega\in[-38,-32])+\mathbf{1}(\omega\in[32,38]) \right] \\
	\widehat{f}_7(\omega) &= C_7 \left[\text{zigzag}\left(0.2(\omega+40)\right)+\text{zigzag}\left(0.2(\omega+40)\right)\right]^{1/2}\\
	f_8(x) &= 0  
\end{align*}
The hidden signals were defined on $[-\frac{N}{4}, \frac{N}{4}]$ and the corresponding noisy signals on $[-\frac{N}{2}, \frac{N}{2}]$. The signals were sampled at rate $1/2^{\ell}$, resolving frequencies in the interval $[-2^{\ell}\pi, 2^{\ell}\pi]$; $N=2^5$ and $\ell =5$ were used for all simulations. 
As indicated above, $f_4, f_6, f_7$ were sampled directly in the frequency domain, while the rest were sampled in the spatial domain. The normalization constants $C_i$ were chosen so that all signals would have the same $\snr$ for a fixed additive noise level, specifically $(\snr)^{-1} = \sigma^2$, where $\snr = \left(\frac{1}{N} \int_{-N/2}^{N/2} f(x)^2\ dx\right) / \sigma^2$. 

\begin{figure}
	\centering
	\begin{subfigure}[b]{0.24\textwidth}
		\centering
		\includegraphics[width=.85\textwidth]{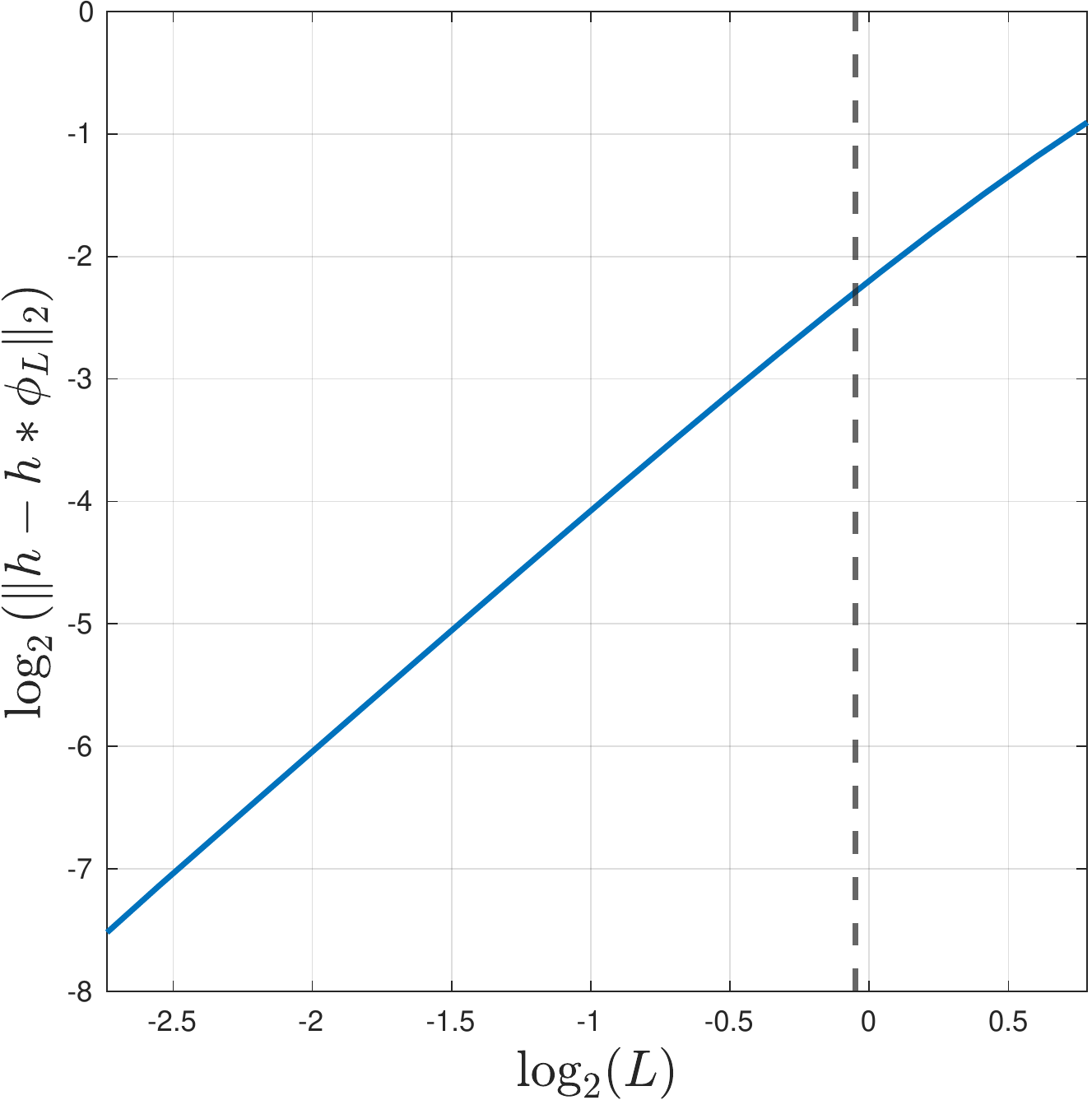}
		\caption{Smoothing decay rate}
		\vspace*{.1cm}
		\label{fig:Smoothing}
	\end{subfigure}
	\hfill
	\begin{subfigure}[b]{0.24\textwidth}
		\centering
		\includegraphics[width=.8\textwidth]{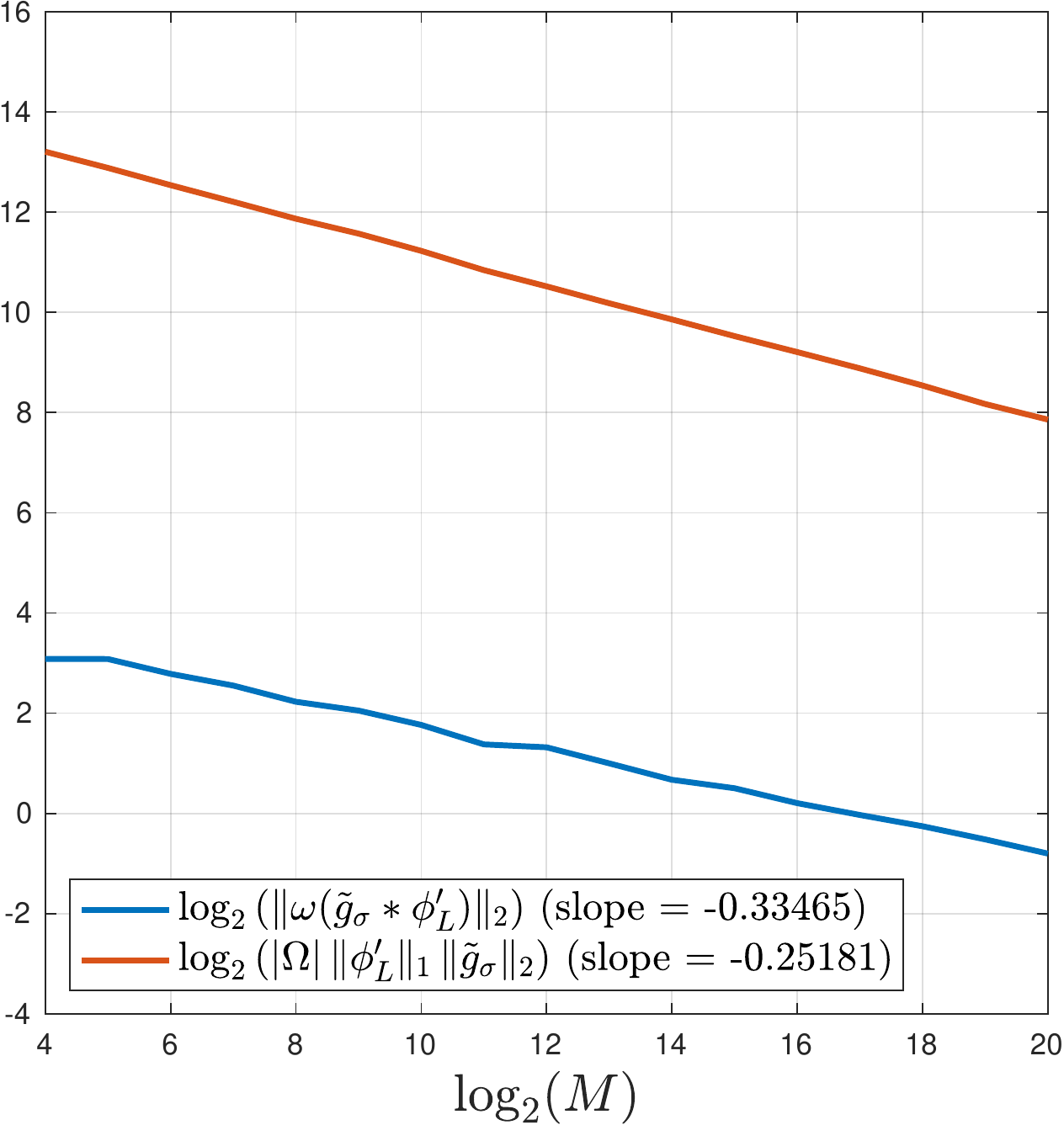}
		\caption{Young's Inequality}
		\vspace*{.1cm}
		\label{fig:Young}
	\end{subfigure}
	\caption{Plots explaining the small discrepancy between theoretical and empirical convergence rates. Figure \ref{fig:Smoothing}: The right side of the dashed line shows the $L$ values corresponding to $12 \leq \log_2(M) \leq 20$, i.e. the upper range of values used in our simulations. In the simulation regime, the slope in the log-log plot is 1.65; however for small $L$ (left side of dashed line), the slope is 1.96, which closely matches the $L^2$ rate given in Lemma \ref{lem:approx_identity_hw}. Figure \ref{fig:Young}:  the additive noise term exhibits a decay rate of -0.25 in the range of $M$ values used for our simulations, while the upper bound due to Young's Inequality decays at the faster rate of -0.33.}
	\label{fig:Discrepancy}
\end{figure}

The Gabors $f_1-f_3$ are smooth with a fast decay in both space and frequency; $f_4$ is discontinuous in space, with a smooth but slowly decaying FT; $f_5$ is a linear chirp with a non-constant instantaneous frequency; $f_6$ is discontinuous in frequency; $f_7$ is continuous but not smooth in frequency. The zero signal was included to investigate the effect of the inversion unbiasing procedure when applied directly to additive noise, i.e. in the absence of any signal. We investigate the ability of inversion unbiasing to solve Model \ref{model:genMRA} in the challenging regime of both low $\snr$ and large dilations. Specifically we choose $\snr = \frac{1}{2}$ and $\tau$ uniform on $[-\frac{1}{2}, \frac{1}{2}]$ (thus $\sigma = \sqrt{2}$ and $\eta= 12^{-1/2} \approx 0.2887$). For comparison, the simulations in \cite{hirn2020wavelet} were restricted to $\eta \leq 0.12$.

We first assume oracle knowledge of the additive noise and dilation variances $\sigma^2, \eta^2$. We let $M$ increase exponentially from 16 to $1,048,576$, and for each value of $M$ we run 10 simulations of Model \ref{model:genMRA} and compute $\widetilde{Pf}$ as given in \eqref{equ:FiniteSamplePSest_genMRA}. The width of the Gaussian filter $L$ is chosen as in Corollary \ref{cor:error_for_best_L}, and the inversion operator is applied by solving a convex optimization problem as described in Section \ref{sec:opt}.  For each simulation, the relative error of the resulting power spectrum estimator is computed as 
\begin{align*}
	\text{Error} := \frac{\norm{Pf -\widetilde{Pf}}_2}{\norm{Pf}_2} \, ,
\end{align*}
and the mean error is then computed across simulations. Figure \ref{fig:GenMRAModelOracle} shows the decay of the mean error as the sample size $M$ increases. All signals exhibit a linear error decay in the log-log plots; as the error decay does not plateau, the simulations confirm that $\widetilde{Pf}$ is an unbiased estimator of $Pf$ as shown in Theorem \ref{thm:main} and Corollary\ref{cor:error_for_best_L}. 

\begin{figure}
	\centering
	\begin{subfigure}[b]{0.24\textwidth}
		\centering
		\includegraphics[width=.85\textwidth]{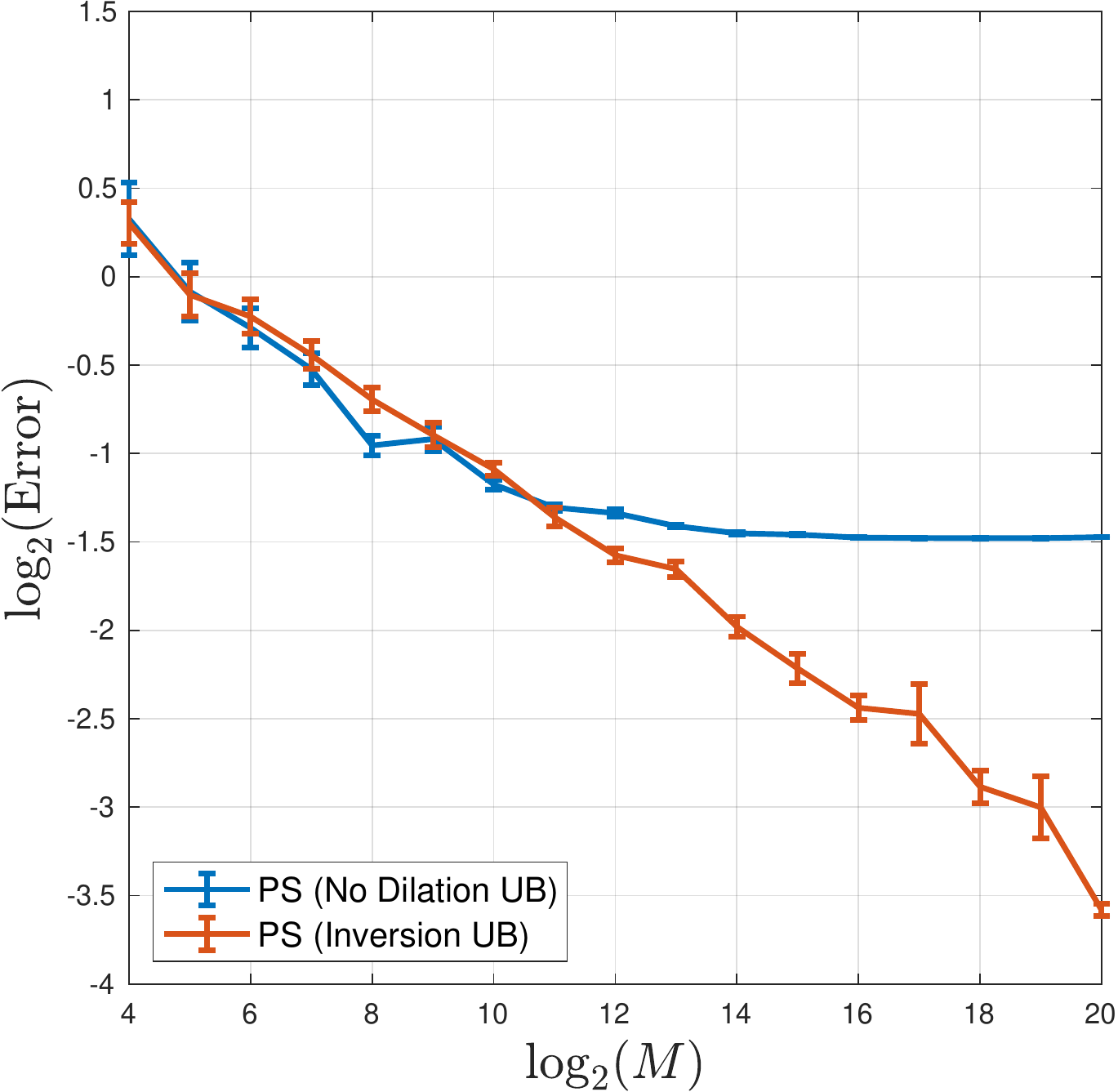}
		\caption{$f_1$ (slope $= -0.2355$)}
		\vspace*{.1cm}
		\label{fig:empirical_f1}
	\end{subfigure}
	\hfill
	\begin{subfigure}[b]{0.24\textwidth}
		\centering
		\includegraphics[width=.85\textwidth]{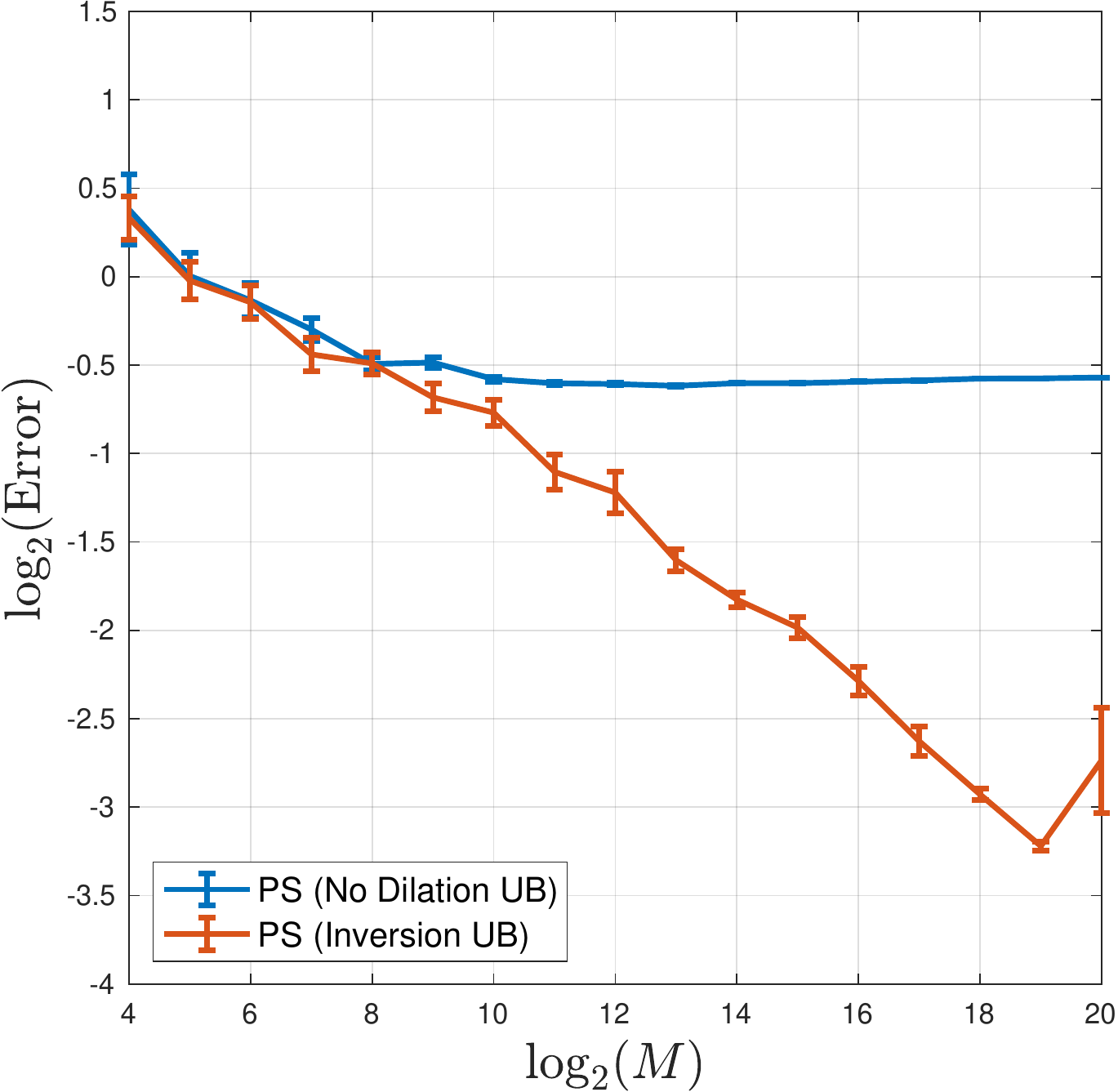}
		\caption{$f_2$ (slope $=-0.2292$)}
		\vspace*{.1cm}
		\label{fig:empirical_f2}
	\end{subfigure}
	\hfill
	\begin{subfigure}[b]{0.24\textwidth}
		\centering
		\includegraphics[width=.85\textwidth]{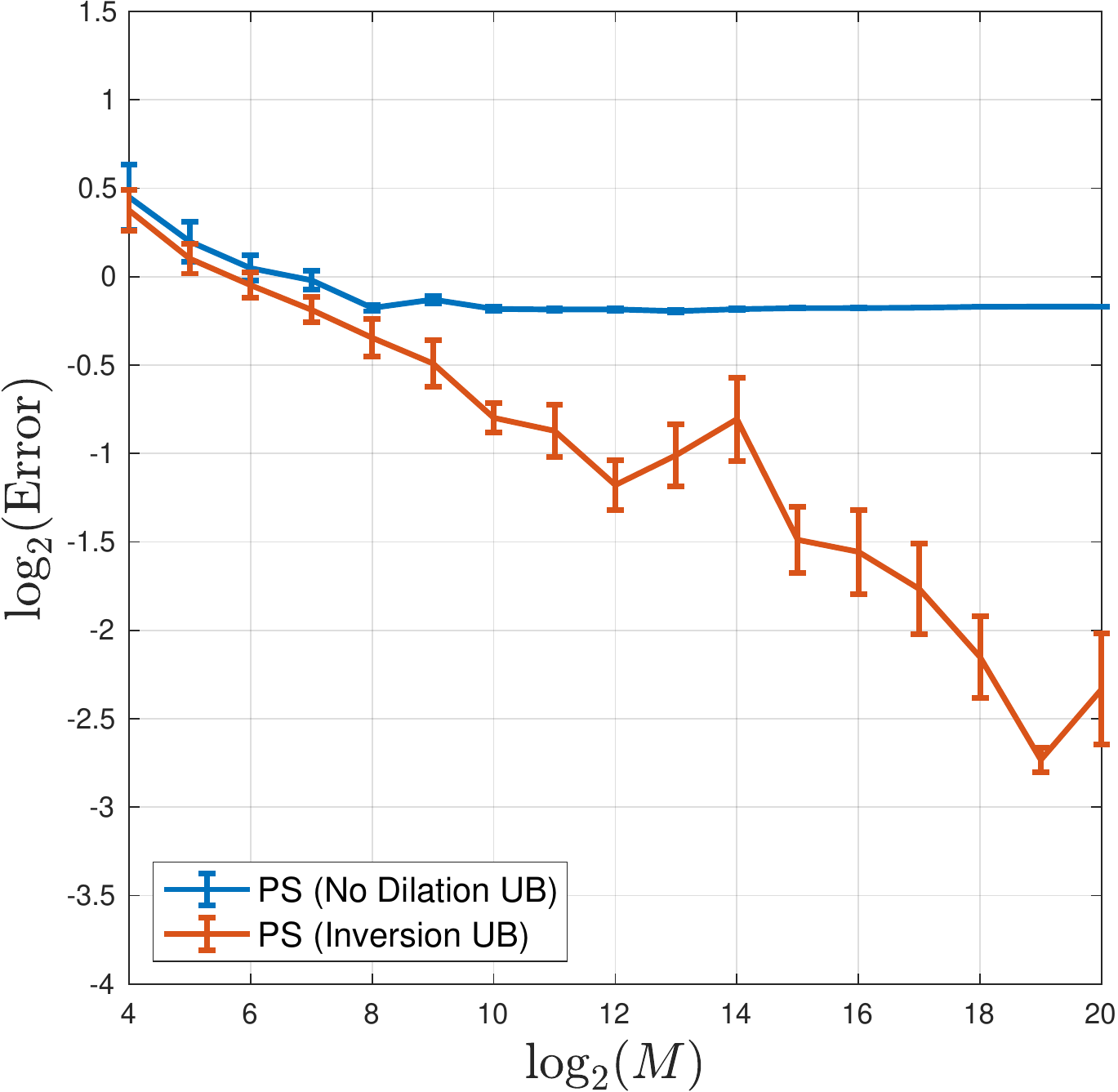}
		\caption{$f_3$ (slope $=-0.2126$)}
		\vspace*{.1cm}
		\label{fig:empirical_f3}
	\end{subfigure}
	\begin{subfigure}[b]{0.24\textwidth}
		\centering
		\includegraphics[width=.85\textwidth]{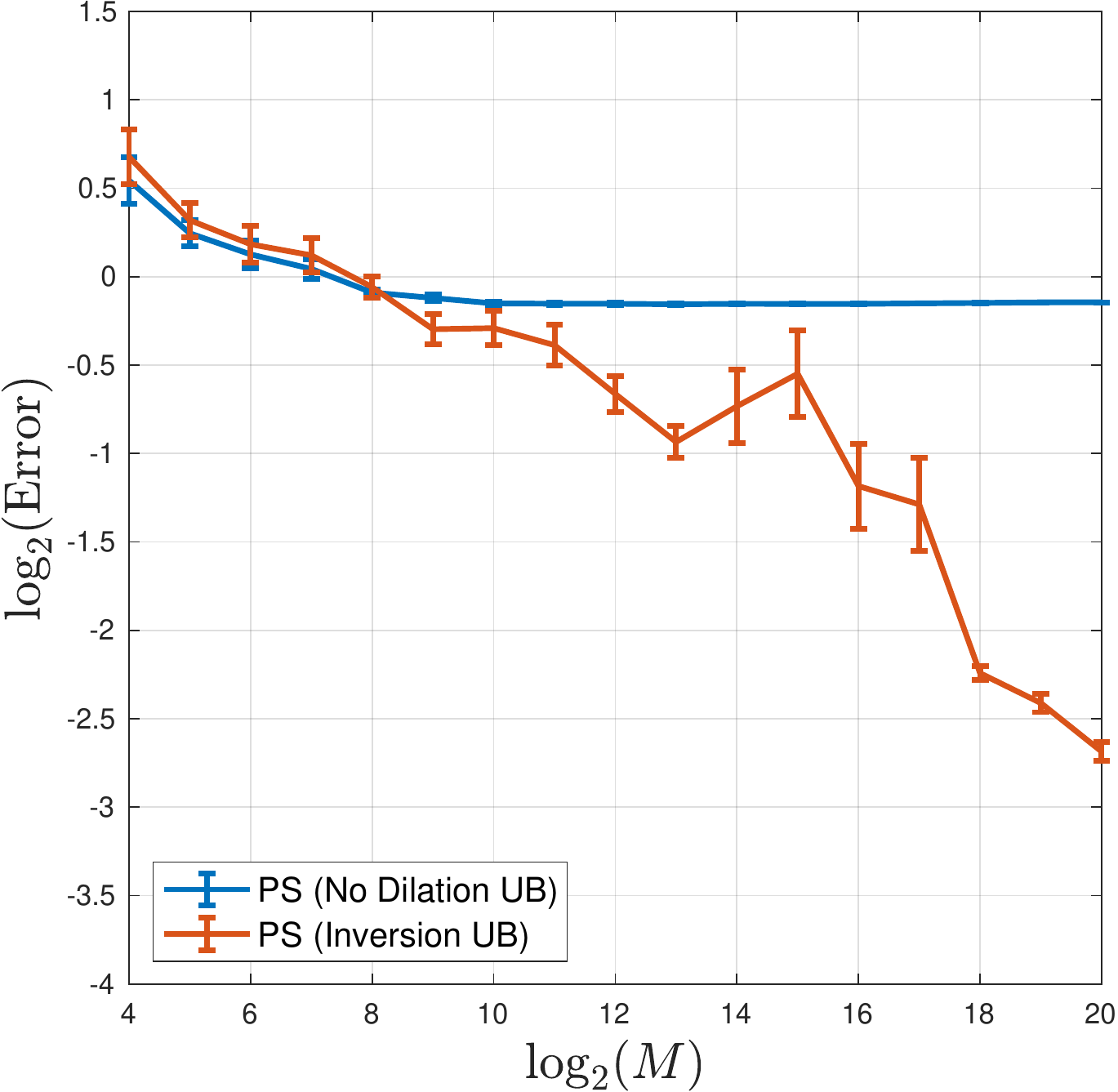}
		\caption{$f_4$ (slope $ = -0.2712$)}
		\vspace*{.1cm}
		\label{fig:empirical_f4}
	\end{subfigure}
	\hfill
	\begin{subfigure}[b]{0.24\textwidth}
		\centering
		\includegraphics[width=.85\textwidth]{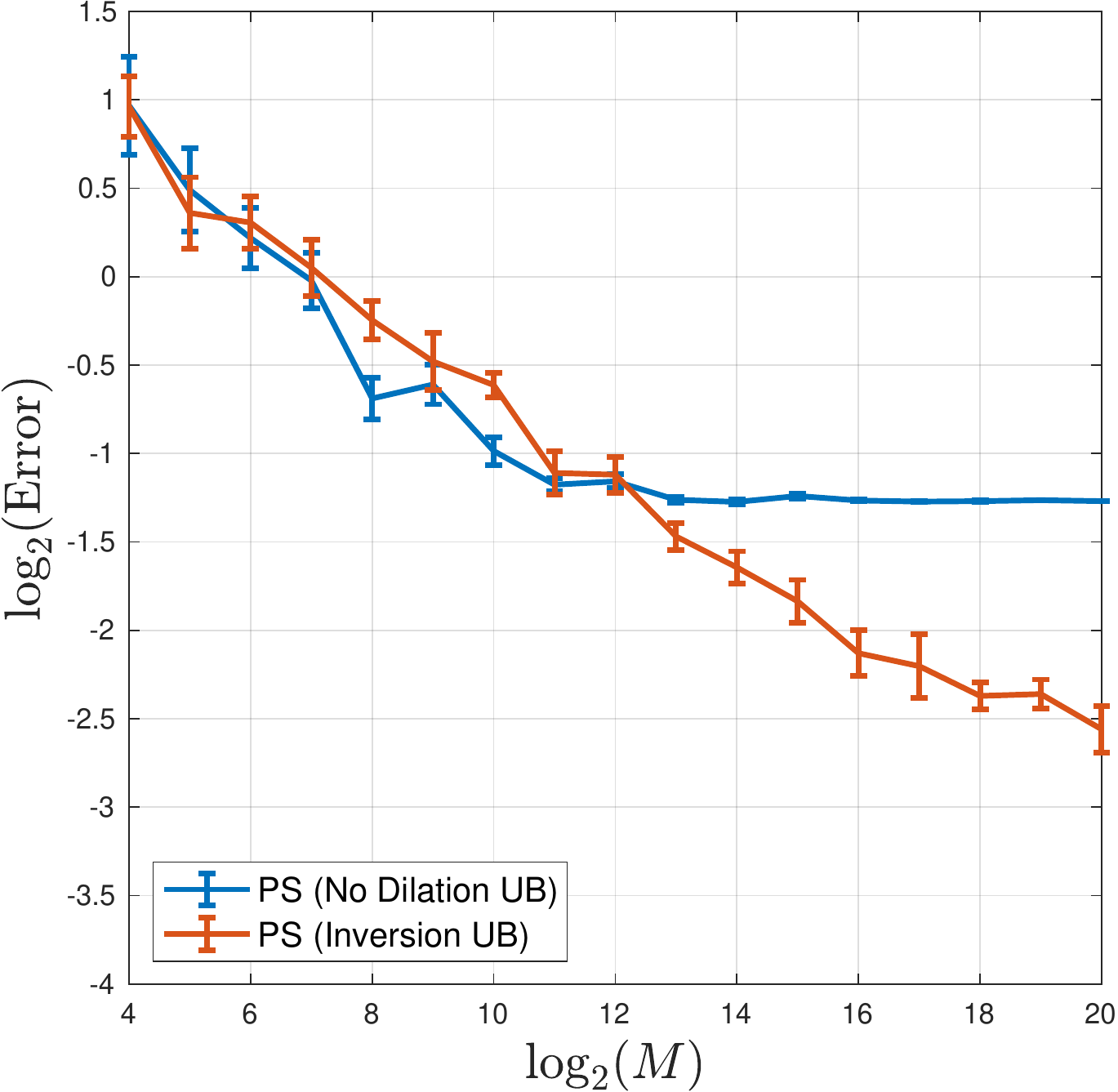}
		\caption{$f_5$ (slope $= -0.1709$)}
		\vspace*{.1cm}
		\label{fig:empirical_f5}
	\end{subfigure}
	\hfill
	\begin{subfigure}[b]{0.24\textwidth}
		\centering
		\includegraphics[width=.85\textwidth]{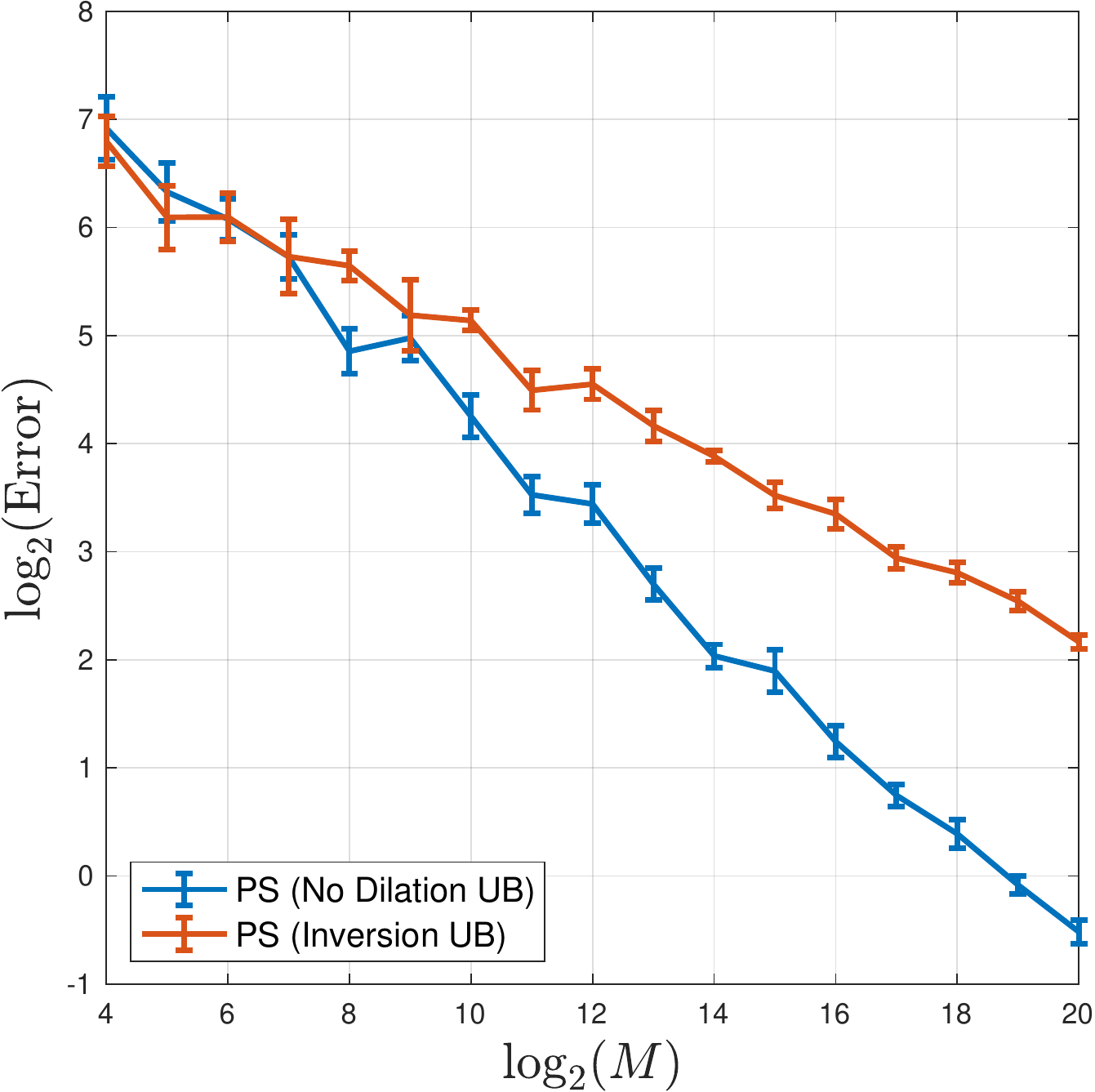}
		\caption{$f_8$ (slope $= -0.2856$)}
		\vspace*{.1cm}
		\label{fig:empirical_f8}
	\end{subfigure}
	\caption{Error decay with standard error bars for Model \ref{model:genMRA} (empirical moment estimation).  All plots show relative $\Lb^2$ error and have the same axis limits, except Figure \ref{fig:empirical_f8}, which shows absolute error. Reported slopes were computed by linear regression on the right half of the plot, i.e. for $12\leq \log_2(M)\leq 20$.}
	\label{fig:GenMRAModelEmpirical}
\end{figure}

More specifically, for signals with a smooth power spectrum ($f_1, \ldots, f_5, f_8$), Corollary \ref{cor:error_for_best_L} predicts that the error should decay like $M^{-1/3}$, i.e. we would expect to observe a slope of $-1/3$ in the log-log plots. In practice the error decay is slightly slower, with a slope of about $-1/4$ for the smooth signals. There are two reasons for the small mismatch between the theory and simulations. First of all, Lemmas \ref{lem:approx_identity_precise} and \ref{lem:approx_identity_hw} are based on Taylor expansions about $L=0$, and so the decay rates in terms of $L$ are only sharp for $L$ small enough; the decay rate is slightly worse in the range of $L$ values used in our simulations; see Figure \ref{fig:Smoothing}. 
In practice when the continuous theory is implemented on a computer, one can never take $L$ smaller than the discrete frequency resolution. 
Secondly, the proof of Theorem \ref{thm:main} applies Young's Convolution Inequality to control the additive noise terms, but simulations indicate that the actual decay rate of the additive noise terms is smaller than this upper bound for the simulation range of $M$ values. See Figure \ref{fig:Young}; as $M \rightarrow \infty$, the decay rates do converge. 

For the non-smooth signals, recall that $f_7$ has a power spectrum which is continuous but not differentiable while $f_6$ has a discontinuous power spectrum. The decay rate of $f_7$ matches that of the smooth signals, but $Pf_7 \notin \mathbf{C}^1 (\R)$, indicating that perhaps $Pf \in \mathbf{C}^3 (\R)$ is not required to achieve the rate in Theorem \ref{thm:main} but an artifact of the proof technique. We note the infinite sample result (Proposition \ref{prop:InfiniteSampleRecoveryPS}) holds under the much milder assumption $Pf \in \mathbf{C}^0 (\R)$. In practice, the decay rate seems to be driven by the $\alpha$ appearing in Lemma \ref{lem:approx_identity_precise}; for $f_6$, Lemma \ref{lem:approx_identity_precise} would apply with $\alpha=1$ to give an error decay like $\sqrt{L}$ and a predicted slope of $-1/12 = -0.083$; we observe $-0.1071$ in Figure \ref{fig:oracle_f6}.

We next investigate the ability of inversion unbiasing to solve Model \ref{model:genMRA} without oracle knowledge of the variances $\sigma^2, \eta^2$. The additive noise level can be reliably estimated from the signal tails. Estimating $\eta$ is more complex and we implement a joint optimization procedure to simultaneously learn $\eta$ and $Pf$. The optimization to learn $\eta$ must be constrained since $\eta=0$ minimizes the loss function (recall Proposition \ref{prop:uniqueCP} only applies to $\eta>0$); $\eta$ is thus constrained to lie in the interval $[0.05, 0.40]$ and we initialize $\eta$ on a course grid ranging from 0.10 to 0.35. For each initialization, the learned $\eta$ value is recorded; a set of candidate $\eta$ values is obtained by discarding learned $\eta$ values which are close to the boundary, and $\eta$ is then selected as the candidate value with the smallest loss. Results are shown in Figure \ref{fig:GenMRAModelEmpirical} for the signals with smooth power spectra; error decay is similar to the oracle case but more variable. Note $\eta$ cannot be reliably learned with this gradient descent procedure when the power spectrum is not smooth.

\section{Conclusion}
\label{sec:conc}

This article considers a generalization of MRA which includes random dilations in addition to random translations and additive noise. The proposed method has several desirable properties compared with previous work. The bias due to dilations is eliminated (not just reduced as in \cite{hirn2020wavelet}) as the sample size increases. In addition, the method is numerically stable, as the unbiasing procedure operates directly on the power spectrum, rather than features derived from the power spectrum. 

There are many compelling directions for future research. By extending the inversion unbiasing procedure to operate on the bispectrum instead of the power spectrum, full signal recovery should be possible with an additional computational cost. In addition, preliminary work suggests that inversion unbiasing can be extended to a broad class of dilation distributions as long as their underlying density functions are known. Thus innovative methods for robustly learning the dilation distribution are critical for these methods to become competitive for real world applications.  Another promising direction is to design a representation which is both translation and dilation invariant, and where the effect of the additive noise can be removed by an averaging procedure. However it remains to be seen whether such a representation exists which is also invertible, i.e. is the hidden signal uniquely defined up to the desired invariants? Once these foundational questions are answered, extensions to 2-dimensional signals are also of interest.


%

\appendices


\section{Proof of Lemma \ref{lem:approx_identity_precise}}
\label{app: approx_identity_precise proof}

\begin{proof}
	Note by assumption there exist constants $C>0$, $\omega_0\geq 1$ such that $|\widehat{h} (\omega)| \leq C|\omega|^{-\alpha}$ for $|\omega| \geq \omega_0$. Also note that $\widehat{\phi}_L(\omega)=e^{-L^2\omega^2/2}$, so that $1-\widehat{\phi}_L(\omega) = \frac{L^2\omega^2}{2}+O(L^3)$ for small $L$. We have:
	\begin{align*}
		\| h - h \ast \phi_L \|_2^2 &= (2\pi)^{-1} \| \widehat{h}(1 - \widehat{\phi}_L )\|_2^2 \\
		&= \frac{1}{2\pi} \int_{|\omega|<\omega_0} |\widehat{h}(\omega)|^2|1-\widehat{\phi}_L(\omega)|^2 \ d\omega \\
		&\quad+ \frac{1}{2\pi} \int_{|\omega|\geq \omega_0} C^2|\omega|^{-2\alpha}|1-\widehat{\phi}_L(\omega)|^2 \ d\omega \\
		&:= (I) + (II) \, .
	\end{align*}
	Note:
	\begin{align*}
		(I) &\leq \int_{|\omega|<\omega_0} |\widehat{h}(\omega)|^2|1-\widehat{\phi}_L(\omega)|^2 \ d\omega \\
		&\leq 2\int_0^{\omega_0} |\widehat{h}(\omega)|^2\left( \frac{L^2\omega^2}{2}+O(L^3)\right)^2\ d\omega \\
		&\leq 2\left(\frac{L^4\omega_0^4}{4}+O(L^5)\right) \int_0^{\omega_0}|\widehat{h}(\omega)|^2 \ d\omega \\
		&\leq \frac{\omega_0^4}{2}\|h\|_2^2L^4+O(L^5) \, .
	\end{align*}	
	To control the second term, note
	\begin{align*}
		(II) &\leq 2C^2 \int_{1}^{\infty} \omega^{-2\alpha}\left(1-e^{-\frac{L^2\omega^2}{2}}\right)^2\ d\omega \\
		&=2C^2  \int_{L}^{\infty} \left(\frac{L}{\tilde{\omega}}\right)^{2\alpha}\left(1-e^{-\frac{\tilde{\omega}^2}{2}}\right)^2\ \frac{d\tilde{\omega}}{L} \\
		&=2C^2L^{2\alpha-1} \int_L^{\infty} \omega^{-2\alpha}\left(1-e^{-\frac{\omega^2}{2}}\right)^2\ d\omega \, .
	\end{align*}
	Explicit evaluation of the upper bound with a computer algebra system gives:
	\begin{align*}
		\alpha=1: &\qquad C_1L + O(L^4) \\
		\alpha=2: &\qquad C_2L^3 + O(L^4) \\
		\alpha=3: &\qquad C_3L^4 + O(L^5)
	\end{align*}
	Also since 
	\begin{align*}
		&\frac{d}{d\alpha}  \int_{1}^{\infty} \omega^{-2\alpha}\left(1-e^{-\frac{L^2\omega^2}{2}}\right)^2\ d\omega \\
		&\quad =  \int_{1}^{\infty} -2\ln(\omega)\omega^{-2\alpha}\left(1-e^{-\frac{L^2\omega^2}{2}}\right)^2\ d\omega <0\,  ,
	\end{align*}
	the upper bound is decreasing in $\alpha$, and we can conclude $(II) \lesssim L^{4\wedge(2\alpha-1)}$ and the lemma is proved. 
\end{proof}

\section{Proof of Lemma \ref{lem:approx_identity_hw}}
\label{app: approx_identity_hw proof}

\begin{proof}
    First observe:
	\begin{align*}
		\| x(h - h \ast \phi_L) \|^2_2 &= (2\pi)^{-1} \norm[\Big]{\frac{d}{d\omega}\left(\widehat{h} - \widehat{h} \widehat{\phi}_L\right)}^2_2 \\
		&= (2\pi)^{-1} \| \widehat{h}' - \widehat{h}' \widehat{\phi}_L-\widehat{h} \widehat{\phi}_L' \|^2_2 \\ &\lesssim \| \widehat{h}' - \widehat{h}' \widehat{\phi}_L\|_2^2 +\|\widehat{h} \widehat{\phi}_L' \|^2_2 \, .
	\end{align*}
	To bound the first term, we apply Lemma \ref{lem:approx_identity_precise} to the function $xh$ to obtain,
	\begin{align*}
		\| \widehat{h}' - \widehat{h}' \widehat{\phi}_L\|_2^2 &= 2\pi \| xh - (xh)\ast \phi_L\|_2^2 \\
		&\lesssim \|xh\|_2^2L^4 + L^{4\wedge(2\alpha-1)} \, .
	\end{align*}
	To bound the second term, note $\widehat{\phi}_L'(\omega) = -L^2\omega e^{-L^2\omega^2/2}$, and that $\|\omega^2 e^{-L^2\omega^2}\|_{\infty} = (eL)^{-1}$. Thus
	\begin{align*}
		\|\widehat{h} \widehat{\phi}_L' \|^2_2 &= L^4 \int |\widehat{h}(\omega)|^2 \omega^2e^{-L^2\omega^2}\ d\omega \leq L^3 \norm{h}_2^2 \, .
	\end{align*}
	Note we could get a higher power for $L$ by
	\begin{align*}
		\|\widehat{h} \widehat{\phi}_L' \|^2_2 &\leq L^4 \int \omega^2|\widehat{h}(\omega)|^2 \ d\omega = L^4 \norm{\omega\widehat{h}}_2^2 \lesssim L^4 \norm{h'}_2^2\, ,
	\end{align*}
	which proves the lemma.
\end{proof}


\section*{Acknowledgment}

This work was supported by the National Science Foundation [grant DMS-1912906 to A.L. and M.H.; grant DMS-1845856 to M.H.], the National Institutes of Health [grant NIGMS-R01GM135929 to M.H.], and the Department of Energy [grant DE-SC0021152 to M.H.].

\ifCLASSOPTIONcaptionsoff
  \newpage
\fi



\bibliographystyle{IEEEtran}
\bibliography{IEEE_MRA}

\begin{thebibliography}{10}
\providecommand{\url}[1]{#1}
\csname url@samestyle\endcsname
\providecommand{\newblock}{\relax}
\providecommand{\bibinfo}[2]{#2}
\providecommand{\BIBentrySTDinterwordspacing}{\spaceskip=0pt\relax}
\providecommand{\BIBentryALTinterwordstretchfactor}{4}
\providecommand{\BIBentryALTinterwordspacing}{\spaceskip=\fontdimen2\font plus
\BIBentryALTinterwordstretchfactor\fontdimen3\font minus
  \fontdimen4\font\relax}
\providecommand{\BIBforeignlanguage}[2]{{%
\expandafter\ifx\csname l@#1\endcsname\relax
\typeout{** WARNING: IEEEtran.bst: No hyphenation pattern has been}%
\typeout{** loaded for the language `#1'. Using the pattern for}%
\typeout{** the default language instead.}%
\else
\language=\csname l@#1\endcsname
\fi
#2}}
\providecommand{\BIBdecl}{\relax}
\BIBdecl

\bibitem{theobald2012optimal}
D.~L. Theobald and P.~A. Steindel, ``Optimal simultaneous superpositioning of
  multiple structures with missing data,'' \emph{Bioinformatics}, vol.~28,
  no.~15, pp. 1972--1979, 2012.

\bibitem{diamond1992multiple}
R.~Diamond, ``On the multiple simultaneous superposition of molecular
  structures by rigid body transformations,'' \emph{Protein Science}, vol.~1,
  no.~10, pp. 1279--1287, 1992.

\bibitem{scheres2005maximum}
S.~H. Scheres, M.~Valle, R.~Nu{\~n}ez, C.~O. Sorzano, R.~Marabini, G.~T.
  Herman, and J.-M. Carazo, ``Maximum-likelihood multi-reference refinement for
  electron microscopy images,'' \emph{Journal of molecular biology}, vol. 348,
  no.~1, pp. 139--149, 2005.

\bibitem{sadler1992shift}
B.~M. Sadler and G.~B. Giannakis, ``Shift-and rotation-invariant object
  reconstruction using the bispectrum,'' \emph{JOSA A}, vol.~9, no.~1, pp.
  57--69, 1992.

\bibitem{park2011stochastic}
W.~Park, C.~R. Midgett, D.~R. Madden, and G.~S. Chirikjian, ``A stochastic
  kinematic model of class averaging in single-particle electron microscopy,''
  \emph{The International journal of robotics research}, vol.~30, no.~6, pp.
  730--754, 2011.

\bibitem{park2014assembly}
W.~Park and G.~S. Chirikjian, ``An assembly automation approach to alignment of
  noncircular projections in electron microscopy,'' \emph{IEEE Transactions on
  Automation Science and Engineering}, vol.~11, no.~3, pp. 668--679, 2014.

\bibitem{zwart2003fast}
J.~P. Zwart, R.~van~der Heiden, S.~Gelsema, and F.~Groen, ``Fast translation
  invariant classification of hrr range profiles in a zero phase
  representation,'' \emph{IEE Proceedings-Radar, Sonar and Navigation}, vol.
  150, no.~6, pp. 411--418, 2003.

\bibitem{gil2005using}
R.~Gil-Pita, M.~Rosa-Zurera, P.~Jarabo-Amores, and F.~L{\'o}pez-Ferreras,
  ``Using multilayer perceptrons to align high range resolution radar
  signals,'' in \emph{International Conference on Artificial Neural
  Networks}.\hskip 1em plus 0.5em minus 0.4em\relax Springer, 2005, pp.
  911--916.

\bibitem{leggett2015nanook}
R.~M. Leggett, D.~Heavens, M.~Caccamo, M.~D. Clark, and R.~P. Davey, ``Nanook:
  multi-reference alignment analysis of nanopore sequencing data, quality and
  error profiles,'' \emph{Bioinformatics}, vol.~32, no.~1, pp. 142--144, 2015.

\bibitem{foroosh2002extension}
H.~Foroosh, J.~B. Zerubia, and M.~Berthod, ``Extension of phase correlation to
  subpixel registration,'' \emph{IEEE transactions on image processing},
  vol.~11, no.~3, pp. 188--200, 2002.

\bibitem{brown1992survey}
L.~G. Brown, ``A survey of image registration techniques,'' \emph{ACM computing
  surveys (CSUR)}, vol.~24, no.~4, pp. 325--376, 1992.

\bibitem{robinson2007optimal}
D.~Robinson, S.~Farsiu, and P.~Milanfar, ``Optimal registration of aliased
  images using variable projection with applications to super-resolution,''
  \emph{The Computer Journal}, vol.~52, no.~1, pp. 31--42, 2007.

\bibitem{singer2011angular}
A.~Singer, ``Angular synchronization by eigenvectors and semidefinite
  programming,'' \emph{Applied and computational harmonic analysis}, vol.~30,
  no.~1, pp. 20--36, 2011.

\bibitem{boumal2016nonconvex}
N.~Boumal, ``Nonconvex phase synchronization,'' \emph{SIAM Journal on
  Optimization}, vol.~26, no.~4, pp. 2355--2377, 2016.

\bibitem{perry2018message}
A.~Perry, A.~S. Wein, A.~S. Bandeira, and A.~Moitra, ``Message-passing
  algorithms for synchronization problems over compact groups,''
  \emph{Communications on Pure and Applied Mathematics}, vol.~71, no.~11, pp.
  2275--2322, 2018.

\bibitem{chen2018projected}
Y.~Chen and E.~J. Cand{\`e}s, ``The projected power method: An efficient
  algorithm for joint alignment from pairwise differences,''
  \emph{Communications on Pure and Applied Mathematics}, vol.~71, no.~8, pp.
  1648--1714, 2018.

\bibitem{bandeira2017tightness}
A.~S. Bandeira, N.~Boumal, and A.~Singer, ``Tightness of the maximum likelihood
  semidefinite relaxation for angular synchronization,'' \emph{Mathematical
  Programming}, vol. 163, no. 1-2, pp. 145--167, 2017.

\bibitem{zhong2018near}
Y.~Zhong and N.~Boumal, ``Near-optimal bounds for phase synchronization,''
  \emph{SIAM Journal on Optimization}, vol.~28, no.~2, pp. 989--1016, 2018.

\bibitem{bandeira2020non}
A.~Bandeira, Y.~Chen, R.~R. Lederman, and A.~Singer, ``Non-unique games over
  compact groups and orientation estimation in cryo-em,'' \emph{Inverse
  Problems}, 2020.

\bibitem{bandeira2014multireference}
A.~S. Bandeira, M.~Charikar, A.~Singer, and A.~Zhu, ``Multireference alignment
  using semidefinite programming,'' in \emph{Proceedings of the 5th conference
  on Innovations in theoretical computer science}.\hskip 1em plus 0.5em minus
  0.4em\relax ACM, 2014, pp. 459--470.

\bibitem{chen2014near}
Y.~Chen, L.~J. Guibas, and Q.-X. Huang, ``Near-optimal joint object matching
  via convex relaxation,'' in \emph{Proceedings of the 31st International
  Conference on Machine Learning}, ser. Proceedings of Machine Learning
  Research, vol.~32, no.~2, 2014, pp. 100--108.

\bibitem{bandeira2016low}
A.~S. Bandeira, N.~Boumal, and V.~Voroninski, ``On the low-rank approach for
  semidefinite programs arising in synchronization and community detection,''
  in \emph{Conference on learning theory}, 2016, pp. 361--382.

\bibitem{hansen1982large}
L.~P. Hansen, ``Large sample properties of generalized method of moments
  estimators,'' \emph{Econometrica: Journal of the Econometric Society}, pp.
  1029--1054, 1982.

\bibitem{kam1980reconstruction}
Z.~Kam, ``The reconstruction of structure from electron micrographs of randomly
  oriented particles,'' in \emph{Electron Microscopy at Molecular
  Dimensions}.\hskip 1em plus 0.5em minus 0.4em\relax Springer, 1980, pp.
  270--277.

\bibitem{sharon2019method}
N.~Sharon, J.~Kileel, Y.~Khoo, B.~Landa, and A.~Singer, ``Method of moments for
  {3-D} single particle ab initio modeling with non-uniform distribution of
  viewing angles,'' \emph{Inverse Problems}, vol.~36, no.~4, p. 044003, 2020.

\bibitem{bendory2017bispectrum}
T.~Bendory, N.~Boumal, C.~Ma, Z.~Zhao, and A.~Singer, ``Bispectrum inversion
  with application to multireference alignment,'' \emph{IEEE Transactions on
  Signal Processing}, vol.~66, no.~4, pp. 1037--1050, 2017.

\bibitem{OptConvRates_MRA}
A.~Bandeira, P.~Rigollet, and J.~Weed, ``Optimal rates of estimation for
  multi-reference alignment,'' \emph{arXiv preprint at arXiv:1702.08546}, 2017.

\bibitem{collis1998higher}
W.~Collis, P.~White, and J.~Hammond, ``Higher-order spectra: the bispectrum and
  trispectrum,'' \emph{Mechanical systems and signal processing}, vol.~12,
  no.~3, pp. 375--394, 1998.

\bibitem{dempster1977maximum}
A.~P. Dempster, N.~M. Laird, and D.~B. Rubin, ``Maximum likelihood from
  incomplete data via the em algorithm,'' \emph{Journal of the Royal
  Statistical Society: Series B (Methodological)}, vol.~39, no.~1, pp. 1--22,
  1977.

\bibitem{abbe2018multireference}
E.~Abbe, T.~Bendory, W.~Leeb, J.~M. Pereira, N.~Sharon, and A.~Singer,
  ``Multireference alignment is easier with an aperiodic translation
  distribution,'' \emph{IEEE Transactions on Information Theory}, vol.~65,
  no.~6, pp. 3565--3584, 2018.

\bibitem{palamini2016identifying}
M.~Palamini, A.~Canciani, and F.~Forneris, ``Identifying and visualizing
  macromolecular flexibility in structural biology,'' \emph{Frontiers in
  molecular biosciences}, vol.~3, p.~47, 2016.

\bibitem{chandran1992position}
V.~Chandran and S.~L. Elgar, ``Position, rotation, and scale invariant
  recognition of images using higher-order spectra,'' in \emph{ICASSP'92: IEEE
  International Conference on Acoustics, Speech, and Signal Processing},
  vol.~5.\hskip 1em plus 0.5em minus 0.4em\relax IEEE, 1992, pp. 213--216.

\bibitem{capodiferro1987correlation}
L.~Capodiferro, R.~Cusani, G.~Jacovitti, and M.~Vascotto, ``A correlation based
  technique for shift, scale, and rotation independent object identification,''
  in \emph{ICASSP'87: IEEE International Conference on Acoustics, Speech, and
  Signal Processing}, vol.~12.\hskip 1em plus 0.5em minus 0.4em\relax IEEE,
  1987, pp. 221--224.

\bibitem{tsatsanis1990translation}
M.~K. Tsatsanis and G.~B. Giannakis, ``Translation, rotation, and scaling
  invariant object and texture classification using polyspectra,'' in
  \emph{Advanced Signal Processing Algorithms, Architectures, and
  Implementations}, vol. 1348.\hskip 1em plus 0.5em minus 0.4em\relax
  International Society for Optics and Photonics, 1990, pp. 103--115.

\bibitem{hotta2001scale}
K.~Hotta, T.~Mishima, and T.~Kurita, ``Scale invariant face detection and
  classification method using shift invariant features extracted from log-polar
  image,'' \emph{IEICE Transactions on Information and Systems}, vol.~84,
  no.~7, pp. 867--878, 2001.

\bibitem{martinec2007robust}
D.~Martinec and T.~Pajdla, ``Robust rotation and translation estimation in
  multiview reconstruction,'' in \emph{2007 IEEE Conference on Computer Vision
  and Pattern Recognition}.\hskip 1em plus 0.5em minus 0.4em\relax IEEE, 2007,
  pp. 1--8.

\bibitem{hirn2020wavelet}
\BIBentryALTinterwordspacing
M.~Hirn and A.~Little, ``{Wavelet invariants for statistically robust
  multi-reference alignment},'' \emph{Information and Inference: A Journal of
  the IMA}, 08 2020, iaaa016. [Online]. Available:
  \url{https://doi.org/10.1093/imaiai/iaaa016}
\BIBentrySTDinterwordspacing

\end{thebibliography}
%

%

\begin{IEEEbiography}[{\includegraphics[width=1in,height=1.25in,clip,keepaspectratio]{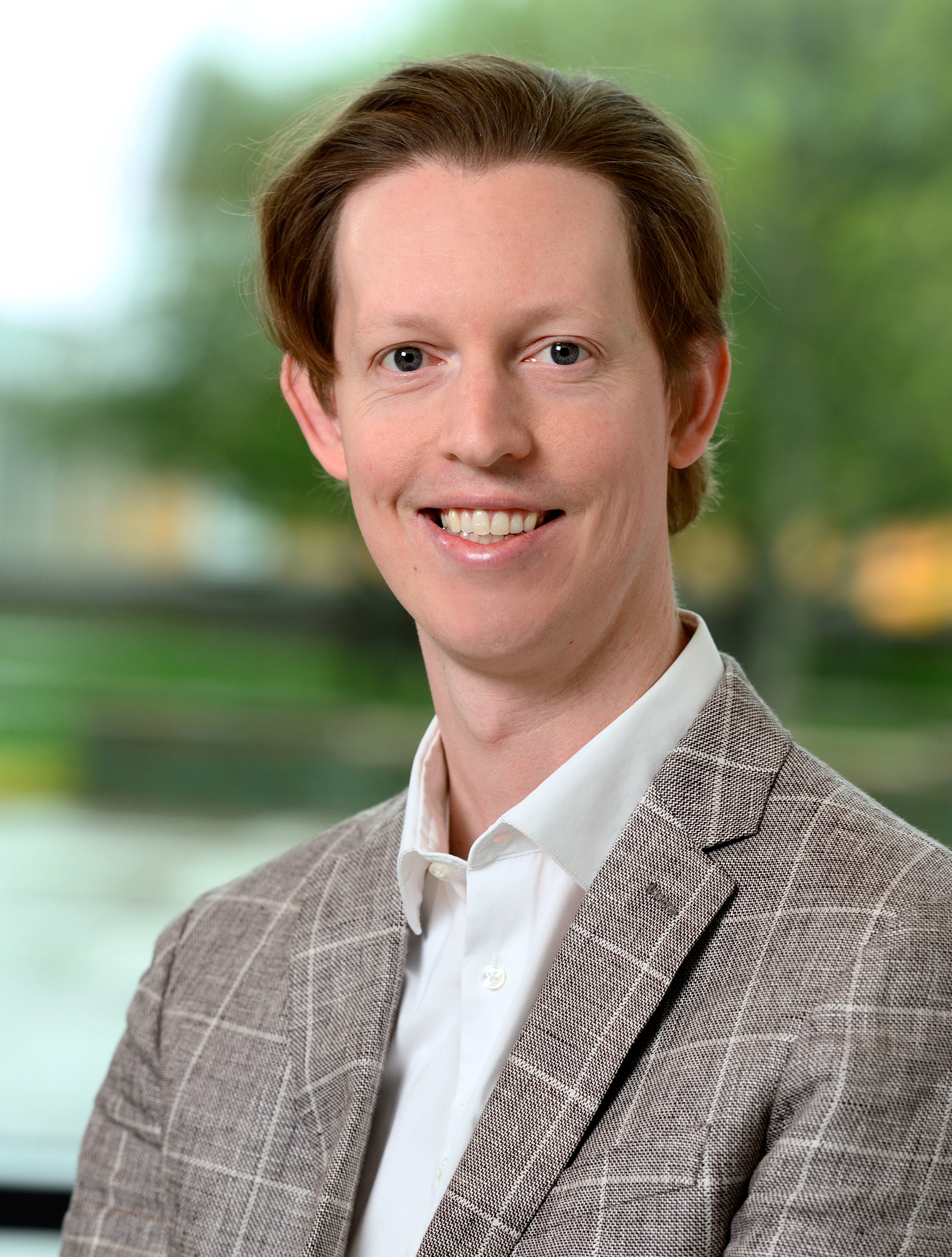}}]{Matthew Hirn}
Matthew Hirn is an Associate Professor in the Department of Computational Mathematics, Science \& Engineering and the Department of Mathematics at Michigan State University. At Michigan State he is the scientific leader of the ComplEx Data Analysis Research (CEDAR) team, which develops new tools in computational harmonic analysis, machine learning, and data science for the analysis of complex, high dimensional data. Hirn received his B.A. in Mathematics from Cornell University and his Ph.D. in Mathematics from the University of Maryland, College Park. Before arriving at MSU, he held postdoctoral appointments in the Applied Math Program at Yale University and in the Department of Computer Science at Ecole Normale Superieure, Paris. He is the recipient of the Alfred P. Sloan Fellowship (2016), the DARPA Young Faculty Award (2016), the DARPA Director’s Fellowship (2018), and the NSF CAREER award (2019), and was designated a Kavli Fellow by the National Academy of Sciences (2017).
\end{IEEEbiography}

\begin{IEEEbiography}[{\includegraphics[width=1in,height=1.25in,clip,keepaspectratio]{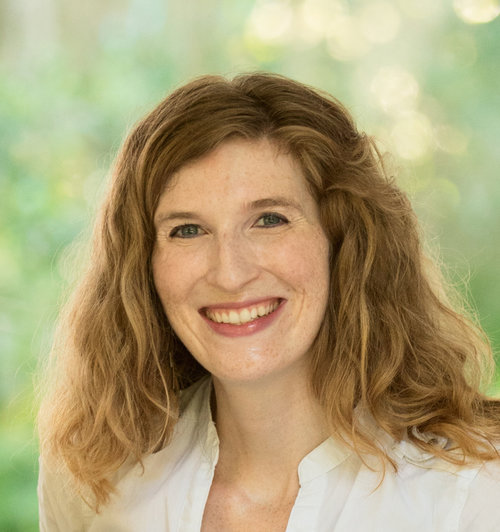}}]{Anna Little}
	Anna Little received her PhD from Duke University in 2011, where she worked under Mauro Maggioni to develop a new multiscale method for estimating the intrinsic dimension of a data set. From 2012-2017 she was an Assistant Professor of Mathematics at Jacksonville University, a primarily undergraduate liberal arts institution where in addition to teaching and research she served as a statistical consultant. From 2018-2020 she was a research postdoc in the Department of Computational Mathematics, Science, and Engineering at Michigan State University, where she worked with Yuying Xie and Matthew Hirn on statistical and geometric analysis of high-dimensional data. She is currently an Assistant Professor in the Department of Mathematics at the University of Utah, as well as a member of the Utah Center for Data Science.
\end{IEEEbiography}

\vfill






\end{document}